\documentclass[a4paper, 11pt]{article}

\usepackage[utf8]{inputenc}
\usepackage[T1]{fontenc}
\usepackage[margin=1in]{geometry}

\usepackage[numbers, compress, sort]{natbib}
\usepackage{hyperref}
\usepackage{url}
\usepackage{booktabs}
\usepackage{siunitx}
\usepackage{amsfonts}
\usepackage{amsmath, amssymb, amsthm}
\usepackage{caption}
\usepackage{subcaption}
\usepackage{bbm}
\usepackage[ruled, linesnumbered]{algorithm2e}
\usepackage{thm-restate}
\usepackage{enumitem}
\usepackage{microtype}
\usepackage[most]{tcolorbox}
\usepackage{xcolor}
\usepackage{tikz}
\usetikzlibrary{positioning, arrows.meta}
\usepackage{graphicx}
\usepackage{rotating}
\usepackage{colortbl}

\newcommand{\meansd}[2]{#1\,{\scriptstyle \pm\,#2}}
\definecolor{gold}{RGB}{255, 185, 0}
\definecolor{silver}{RGB}{192, 192, 192}
\definecolor{bronze}{RGB}{205, 127, 50}

\newcommand{\nonl}{\renewcommand{\nl}{\let\nl\oldnl}}
\newcommand{\bluecode}[1]{\small\texttt{\color{blue}#1}}
\SetCommentSty{bluecode}

\SetKwInput{KwParam}{Parameters}
\let\oldnl\nl

\definecolor{datasetbody}{gray}{0.97}
\definecolor{datasettitle}{gray}{0.90}
\newtcolorbox{datasetbox}[1][]{
	colback=datasetbody,
	colframe=white,
	colbacktitle=datasettitle,
	coltitle=black,
	boxrule=0pt,
	arc=3mm,
	left=6pt,
	right=6pt,
	top=6pt,
	bottom=6pt,
	enhanced,
	breakable,
	fonttitle=\bfseries,
	title=#1,
	before skip=10pt,
	after skip=10pt
}
\newlist{datasetitems}{itemize}{1}
\setlist[datasetitems]{
	label={},
	leftmargin=0pt,
	itemsep=0.6em,
	topsep=0.6em,
	parsep=0pt
}

\newtheorem{definition}{Definition}
\newtheorem{remark}{Remark}

\newtheorem{proposition}{Proposition}
\newtheorem{lemma}{Lemma}

\newcommand{\eqn}{Eq.}

\newcommand{\rank}{r}
\newcommand{\order}{p}
\newcommand{\dimension}{d}

\newcommand{\dmdop}{\Theta}
\newcommand{\neff}{n_{\mathrm{eff}}}

\newcommand{\N}{\mathbb{N}}
\newcommand{\Z}{\mathbb{Z}}
\newcommand{\R}{\mathbb{R}}
\newcommand{\C}{\mathbb{C}}
\newcommand{\varprocess}[2]{\mathrm{VAR}_{#1}(#2)}
\renewcommand{\P}[1]{\mathbb{P}\left( #1 \right)}
\newcommand{\E}[1]{\mathbb{E}\left[ #1 \right]}
\newcommand{\Var}[1]{\mathrm{Var}\left( #1 \right)}
\newcommand{\Cov}[2]{\mathrm{Cov}\!\left( #1, #2 \right)}
\newcommand{\transp}[1]{#1^{\mathsf{T}}}
\newcommand{\herm}[1]{#1^{\mathsf{H}}}
\newcommand{\conj}[1]{#1^{\ast}}
\newcommand{\norm}[1]{\left \lVert #1 \right \rVert}
\newcommand{\abs}[1]{\left \lvert #1 \right \rvert}
\DeclareMathOperator*{\argmax}{arg\,max}
\DeclareMathOperator*{\argmin}{arg\,min}
\DeclareMathOperator{\Tr}{tr}
\DeclareMathOperator{\Rpart}{\operatorname{Re}}

\DeclareMathOperator{\diag}{diag}

\newcommand{\arlzero}{\mathtt{ARL}_0}
\newcommand{\arlone}{\mathtt{ARL}_1}
\newcommand{\precision}{\mathtt{P}}
\newcommand{\recall}{\mathtt{R}}
\newcommand{\fone}{\mathtt{F}_1}

\title{CHASM: Online Changepoint Detection in \\ Temporal and Cross-Variable Dependence}

\author{Victor K.~Khamesi  \and  Edward A. K.~Cohen \and Niall M.~Adams \and Dean A.~Bodenham}

\date{{\normalsize Department of Mathematics, Imperial College London,} \\
{\small \texttt{victor.khamesi21@imperial.ac.uk},
\quad \texttt{e.cohen@imperial.ac.uk},
\quad \texttt{n.adams@imperial.ac.uk},
\quad \texttt{dean.bodenham@imperial.ac.uk}}
}

\begin{document}

\maketitle

\begin{abstract}
	Changepoint detection identifies times when the generative process of a time series changes, with applications in healthcare, cybersecurity, and finance. In multivariate settings, changes in cross-variable and temporal dependence are particularly challenging to detect, as they are often less pronounced than shifts in marginal statistics such as the mean or variance. Existing methods detect changes using reconstruction error, which provides only an indirect measure of dynamical change, or rely on scalar functionals that may be too coarse to capture global structure. We introduce CHASM, an online nonparametric method that monitors the truncated eigenvalue sequence of the recursively estimated dynamic mode decomposition operator. Designing such an approach raises two challenges: the permutation invariance of eigendecompositions, resolved via optimal linear assignment, and the lack of online changepoint methods for multivariate complex-valued time series, addressed through a novel augmented monitoring scheme. We study the theoretical properties of the dynamics estimator under the canonical vector autoregressive model, which directly motivates our algorithmic design. The proposed method achieves competitive or superior performance to modern competitors across synthetic and real-world data sets, including challenging settings in video and text data. It is unsupervised, depends on a small number of interpretable parameters, and requires no distributional assumptions beyond finite moments, making it readily deployable across scientific domains.
\end{abstract}

\section{Introduction}
\label{sec:intro}

Modern data streams arise across domains as diverse as finance \citep{berens2015testing}, climate monitoring \citep{shi2022changepoint}, neuroscience \citep{stoehr2021detecting}, natural language processing \citep{wang2018real}, and video understanding \citep{zheng2025blast}. In each of these settings, the data-generating process is rarely stationary: market regimes shift, environmental conditions evolve, neural activity varies with cognitive state, text reflects changing topics, and video transitions between scenes. Detecting such changes promptly is critical for timely decision-making, such as rebalancing portfolios \citep{aue2009break} or segmenting recordings \citep{michel2012towards}. Changepoint detection seeks to identify times at which the underlying generative model of a time series changes.

Two settings are commonly considered: offline approaches retrospectively locate changepoints given the full sequence, while online methods process observations sequentially and detect changes as quickly as possible after they occur. This paper focuses on the online regime, where algorithms must process each observation in constant time with memory independent of the sequence length. These computational constraints make online changepoint detection fundamentally more challenging.

In many applications, observations are multivariate and the structure of interest lies in the joint evolution of variables over time. The vector autoregressive (VAR) process is a standard model for such data, used across econometrics \citep{stock2001vector}, neuroscience \citep{gorrostieta2012investigating}, biology \citep{michailidis2013autoregressive}, and climate science \citep{jiang2021ultra}. The parameter of interest in a VAR process is the transition matrix, which encodes both temporal and cross-variable dependence. Changes in this matrix can be diverse or subtle, and need not necessarily produce clear shifts in the marginal distribution of individual variables.

In practice, the noise distribution is typically unknown and varies across applications, from heavy-tailed in finance \citep{pickands1975statistical} to Poisson in imaging \citep{foi2008practical} and correlated Gaussian in physical systems \citep{langel2024frequency}. This motivates nonparametric, unsupervised methods that make minimal distributional assumptions and generalise across domains without relying on labelled data, which are rarely available.

\subsection{Related work}
\label{sec:intro-related-work}

Online changepoint detection has received considerable attention since the cumulative sum (CUSUM) method \citep{page1954continuous}. Extensions cover multivariate and high-dimensional settings via nonparametric methods \citep{mei2010efficient}, Gaussian models \citep{chan2017optimal, chen2022high}, and probabilistic approaches \citep{adams2007bayesian, knoblauch2018spatio}. While these provide strong baselines, they target distributional rather than dynamical changes. More recently, a related method \citep{alanqary2021change} constructs a subspace-based CUSUM statistic to detect changes in dynamics.

Changepoint detection in VAR processes focuses on the transition matrix. Offline methods locate changepoints via low-rank and sparse decompositions \citep{bai2020multiple, bai2023multiple} or likelihood-ratio statistics over candidate segments \citep{enikeeva2025change}. In the online setting, Tian \& Safikhani \citep{tian2024sequential} estimate the transition matrix via $\ell_1$-penalisation \citep{basu2015regularized} and construct a detection statistic from one-step-ahead reconstruction errors. While the method comes with theoretical guarantees and shows strong empirical performance, it assumes sub-Gaussian noise with diagonal covariance and detects changes indirectly, potentially making it difficult to distinguish dynamical from noise changes.

Dynamic mode decomposition (DMD) has also been applied to changepoint detection. In \citep{kawahara2016dynamic}, an online kernel DMD method is proposed with a CUSUM statistic based on prediction error, while in \citep{takeishi2017learning} an offline autoencoder-based variant is developed; in both works, changepoint detection is a secondary application rather than the primary contribution. Recently, in \citep{khamesi2024online}, DMD is applied on time-delay embedded data and detects changes via reconstruction error in seasonal time series. Our work addresses the more general setting of arbitrary dynamical structure, and monitors the operator spectrum directly rather than through reconstruction error.

\subsection{Contributions}

We introduce the changepoint detection method CHASM. Our main contributions are as follows:

\begin{enumerate}
	\item We propose monitoring the dominant eigenvalues of a recursively updated dynamics operator, applying rank truncation as a post-estimation step. To the best of our knowledge, this is the first online changepoint method to directly track the operator spectrum. This naturally raises two challenges of independent interest which we address below.	
	\item We introduce a method to first align consecutive vectors of complex eigenvalues via a transport argument and second to detect changes in this multivariate complex-valued sequence.
	\item We establish the consistency of the DMD estimator and characterise its bias under the canonical VAR model with no parametric noise assumptions. Consequently, we reveal a trade-off between estimation accuracy and detection sensitivity.
	\item We evaluate CHASM on synthetic time series covering diverse noise distributions, change magnitudes, and dimensions, and on real-world data sets spanning motion, image, and text modalities, showing competitive to superior performance against modern competitors. While our theoretical analysis is focused on VAR processes, it is shown that in practice CHASM has good detection performance beyond this framework.
\end{enumerate}

\subsection{Organisation}

The rest of this paper is organised as follows. Section~\ref{sec:background} defines the changepoint detection problem under the VAR framework and reviews key elements of DMD. Section~\ref{sec:methodology} presents CHASM in three steps: online dynamics estimation, optimal eigenvalue alignment, and complex-valued changepoint detection. Section~\ref{sec:theory} establishes consistency and bias results for both the unweighted and weighted estimators. Section~\ref{sec:experiments} evaluates CHASM on synthetic and real-world data sets beyond the VAR model and compares it to recent competitors.

\paragraph{Notation.} 
Throughout this paper, for a matrix $M$, we denote by $\transp{M}$ its transpose, $\herm{M}$ its conjugate transpose, $\Tr(M)$ its trace, and $\zeta(M)$ its spectral radius; $I_\dimension$ is the identity matrix. We write $M \succ 0$ to denote that $M$ is positive definite. The notation $X \perp Y$ indicates independence between random variables $X$ and  $Y$. Unless stated, $\norm{\cdot}$ is the $\ell_2$ or spectral norm, $\norm{\cdot}_F$ the Frobenius norm. 

\section{Background}
\label{sec:background}

We study changepoint detection in temporal and cross-variable dependence, and present a general method for detecting such changes in Section~\ref{sec:methodology}. For analytical tractability, we formulate the detection problem using VAR processes. The VAR framework is highly flexible; Wold’s decomposition theorem (see, e.g. Section~4.8 of \citep{hamilton1994time}) shows that any stationary process can be approximated arbitrarily well by a VAR model. This framework enables our theoretical analysis in Section~\ref{sec:theory}, while our experiments demonstrate strong performance beyond settings typically modelled by VAR in Section~\ref{sec:experiments}.

\subsection{Vector autoregressive processes}
\label{sec:background-var}

A $\dimension$-dimensional first-order vector autoregressive process, $\varprocess{\dimension}{1}$, follows the recursion
\begin{equation}
	\label{eq:var1}
	x_t = \Theta \, x_{t-1} + \varepsilon_t, \qquad t \in \N,
\end{equation}
where $\Theta \in \R^{\dimension \times \dimension}$ is the unknown transition matrix and $\{\varepsilon_t\}$ is a white noise process with $\E{\varepsilon_t} = 0$ and $\Cov{\varepsilon_t}{ \varepsilon_s} = \mathbbm{1}_{t=s} \, \Sigma$, for some positive semi-definite matrix $\Sigma \in \R^{\dimension \times \dimension}$. The process $\{x_t\}$ is weakly stationary if and only if $\zeta(\Theta)<1$. Moreover, any $\varprocess{\dimension}{\order}$ process of order $\order > 1$ admits a standard companion-form representation as in \eqn~\eqref{eq:var1}. Therefore, we restrict attention to $\varprocess{\dimension}{1}$ throughout, without loss of generality, following \cite{basu2019low}; see Appendix~\ref{app:background-var} for details.

\paragraph{Problem definition.}
We study online changepoint detection in the transition dynamics of \eqn~\eqref{eq:var1}. Given a stream $x_1, x_2, \dots$, where $x_t \in \R^\dimension$, we test $\mathcal{H}_0$ against the single change alternative $\mathcal{H}_1$
\begin{equation}
	\label{eq:cpd-problem}
	\begin{aligned}
		\mathcal{H}_0 &: x_t = \Theta_0 \, x_{t-1} + \varepsilon_t \text{ for all } t \in \N, \\
		\mathcal{H}_1 &: \exists \, \tau \in \N \text{ such that } x_t = \begin{cases}
			\Theta_0 \, x_{t-1} + \varepsilon_t, & t \leq \tau \\
			\Theta_1 \, x_{t-1} + \varepsilon_t, & t > \tau
		\end{cases},
	\end{aligned}
\end{equation}
with $\Theta_0 \neq \Theta_1$ unknown. Changepoint detection is concerned with estimating $\tau$ as closely as possible. Evaluation metrics are defined in Section~\ref{sec:experiments-setup}. Although \eqn~\eqref{eq:cpd-problem} formalises a single changepoint, our method extends to multiple changepoints by restarting after each detection; see Section~\ref{sec:experiments-real}.

Crucially, a change in $\Theta$ need not manifest in the marginal distribution of $x_t$. Two $\mathrm{VAR}_d(1)$ processes with diagonal transition $\Theta_0 = \bigl(\begin{smallmatrix} \theta & 0 \\ 0 & \theta \end{smallmatrix}\bigr)$ and rotation transition $\Theta_1 = \bigl(\begin{smallmatrix} 0 & -\theta \\ \theta & 0 \end{smallmatrix}\bigr)$ share identical marginal covariances $(1-\theta^2)^{-1} I_2$ for any $\abs{\theta}<1$, yet differ fundamentally in their dynamics. This motivates the development of statistics that directly track the operator rather than marginal properties.

\paragraph{Spectrum as a surrogate statistic.}
Testing matrix equality is fundamentally different from vector comparisons since the rows and columns are structured via properties such as rank and eigenstructure, so treating a matrix as an unordered vector ignores those dependencies. This motivates the use of surrogate statistics. Reconstruction error \citep{khamesi2024online,tian2024sequential} introduces an additional layer of indirection between the parameter of interest and the detection statistic. Scalar functionals \citep{cai2013two,killick2012optimal,li2012two} collapse the operator to a single number, potentially discarding directional and oscillatory structure. We propose to monitor the dominant eigenvalues of $\Theta$ as an interpretable and direct operator-level summary that captures stability and oscillatory behaviour while reducing monitoring dimension; this trade-off constitutes a central methodological idea of this work.

\subsection{Dynamic mode decomposition}
\label{sec:background-dmd}

Dynamic mode decomposition (DMD) is a data-driven method for modelling dynamical systems \citep{schmid2010dynamic} that extracts a collection of modes each associated with a growth rate and an oscillation frequency.

\paragraph{Modes and dynamics.}
Given observations $x_1, \dots, x_{T}$, DMD seeks an operator $A$ such that $Y = AX$, where $X = \left( x_1, \dots, x_{T-1} \right)$ and $Y=\left( x_2, \dots, x_T \right)$. In the original formulation of \citep{schmid2010dynamic}, developed for high-dimensional fluid systems, the matrix $A$ is assumed to be low-rank. This motivates an approximation via the rank-truncated singular value decomposition (SVD) of $X = U_\rank S_\rank \herm{V_\rank}$, yielding the operator $A_\rank = \herm{U_\rank} Y V_\rank S_\rank^{-1} \in \C^{\rank \times \rank}$, whose eigendecomposition defines the modes and dynamics; a full derivation of the modal decomposition is provided in Appendix~\ref{app:background-dmd}.

\paragraph{Online DMD.}
The operator of \citep{schmid2010dynamic} cannot be updated exactly without storing all past observations and recomputing the SVD at each time step on the full sequence. To address this, \citep{zhang2019online} drop the low-rank assumption and maintain sufficient statistics to update the least-squares estimate recursively. The resulting online operator is mathematically equivalent to the batch solution of \citep{schmid2010dynamic} in the absence of rank truncation. In this work, we propose to combine the recursive estimation of \citep{zhang2019online} with the truncation principle of \citep{schmid2010dynamic}, applying regularisation as a post-estimation step to the online operator rather than during batch decomposition. This decouples regularisation from estimation, yielding a compact spectral representation updated in a single sequential pass.

\section{CHASM: Complex Hungarian-Aligned Spectrum Monitoring}
\label{sec:methodology}

\subsection{Online dynamics learning}
\label{sec:methodology-learning}

Consider a stream $x_1, x_2, \dots$, where $x_n \in \R^\dimension$ for some dimension $\dimension \in \N$. We estimate a sequence of linear operators $\dmdop_n \in \R^{\dimension \times \dimension}$ using recursive least squares, corresponding to online DMD; see Section~\ref{sec:background-dmd} and \citep{zhang2019online}. For a fixed forgetting factor $\rho \in (0, 1]$, the update equations for $\Theta_n$ are
\begin{equation}
	\label{eq:online-dmd-updates}
	\begin{aligned}
		\gamma_n &= \left(1 + \frac{1}{n-1} \transp{x_{n-1}} \Gamma_{n-1}^{-1} x_{n-1}\right)^{-1} \in \R, \\
		\Gamma_n^{-1} &= \frac{n}{\rho (n-1)} \left( \Gamma_{n-1}^{-1} - \frac{\gamma_n}{n-1} \Gamma_{n-1}^{-1} x_{n-1} \transp{x_{n-1}} \Gamma_{n-1}^{-1} \right) \in \R^{\dimension \times \dimension}, \\
		\Theta_n &= \Theta_{n-1} + \frac{\gamma_n}{n-1} \left( x_n - \Theta_{n-1} x_{n-1} \right) \transp{x_{n-1}} \Gamma_{n-1}^{-1} \in \R^{\dimension \times \dimension}.
	\end{aligned}
\end{equation}
Smaller values of $\rho$ emphasise recent observations for faster adaptation, while values closer to one yield more stable estimates and retain longer-term information. The recursion of \eqn~\eqref{eq:online-dmd-updates} is initialised by setting $\dmdop_0$ arbitrarily and $\Gamma_0^{-1} = \epsilon I_\dimension$ for small $\epsilon > 0$, or warm-started from a batch of at least $\dimension$ observations. Higher-order dynamics are handled by concatenating the last $\order \in \N$ samples into an augmented state vector, following Lemma~\ref{thm:varp-var1}. We restrict to order $\order = 1$ throughout, which provides sufficiently strong empirical performance across all settings considered.

\paragraph{Post-hoc regularisation.}
At each step, we compute a truncated eigendecomposition of $\dmdop_n$, retaining the $\rank \leq \dimension$ dominant eigenvalues ordered by decreasing magnitude, $\lambda_n = \transp{(\lambda_n^{(1)}, \dots, \lambda_n^{(\rank)})} \in \C^\rank$. The truncation level $r$ acts as a regularisation parameter controlling the complexity of the spectral representation; unlike \citep{tian2024sequential}, which regularise during estimation via $\ell_1$-penalisation without DMD, we leave estimation unconstrained and regularise post-hoc via spectrum truncation.

The stream $\{x_t\} \subset \R^\dimension$ is thus transformed into spectral summaries $\{\lambda_t\} \subset \C^\rank$, on which changepoint detection is performed. Two challenges arise: the eigendecomposition returns eigenvalues in no canonical order, so in practice numerical estimates may be inconsistently ordered across time steps producing artificially discontinuous trajectories; and changepoint detection for multivariate complex-valued sequences remains largely underexplored. Sections~\ref{sec:methodology-alignment} and~\ref{sec:methodology-detection} address these challenges.

\subsection{Optimal eigenvalue alignment}
\label{sec:methodology-alignment}

Since numerical routines do not guarantee consistent ordering of eigenvalues across time, the resulting trajectories may be artificially discontinuous. This is particularly pronounced for complex conjugate pairs, where estimation noise, especially early in the recursion, can induce arbitrary switching. We formalise the alignment problem as optimal transport between consecutive spectra.

\begin{restatable}{lemma}{hungarianotpermutation}
	\label{thm:methodology-hungarian-ot-permutation}
	Let $\nu_n$ be the uniform empirical measure over $\lambda_n \in \C^\rank$, and define $\nu_{n-1}$ analogously. The squared $2$-Wasserstein distance between $\nu_{n-1}$ and $\nu_n$ satisfies
	\begin{equation*}
		W_2^2(\nu_{n-1}, \nu_n) = \min_{\pi \in S_\rank} \frac{1}{\rank} \sum_{i=1}^\rank \abs{\lambda_{n-1}^{(i)} - \lambda_n^{\pi(i)}}^2, \qquad \nu_n = \frac{1}{\rank}\sum_{i=1}^\rank \delta_{\lambda_n^{(i)}},
	\end{equation*}
	where $\delta$ denotes the Dirac measure and $S_\rank$ the set of permutations of $\{1,\dots,\rank\}$.
\end{restatable}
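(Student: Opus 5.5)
Identify $\C$ with $\R^2$, so that $\abs{\cdot}$ on $\C$ is the Euclidean norm and $\nu_{n-1},\nu_n$ are probability measures on $\R^2$ with finite second moments. By definition,
\begin{equation*}
W_2^2(\nu_{n-1},\nu_n)=\inf_{\gamma\in\Pi(\nu_{n-1},\nu_n)}\int\abs{z-w}^2\,d\gamma(z,w),
\end{equation*}
where $\Pi(\nu_{n-1},\nu_n)$ denotes the set of couplings. The plan is to reduce this infimum to a finite-dimensional linear program over couplings of two uniform discrete measures with $\rank$ atoms each. We then apply the Birkhoff--von Neumann theorem to show that the optimum is attained at a permutation matrix.

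\textbf{Step 1: reduce to a linear program.} Any coupling $\gamma$ is supported on $\{\lambda_{n-1}^{(i)}\}\times\{\lambda_n^{(j)}\}$. It can therefore be written as $\gamma=\sum_{i,j}P_{ij}\,\delta_{(\lambda_{n-1}^{(i)},\lambda_n^{(j)})}$ with $P_{ij}\ge 0$, $\sum_j P_{ij}=1/\rank$ and $\sum_i P_{ij}=1/\rank$. Repeated eigenvalues need care: if atoms coincide, the measure $\nu_n$ puts mass $k/\rank$ on a repeated value. We handle this by working on the index set $\{1,\dots,\rank\}$. Every coupling of the indexed uniform measures pushes forward to a coupling of $\nu_{n-1},\nu_n$, and conversely any coupling of $\nu_{n-1},\nu_n$ can be lifted by splitting mass equally among coinciding indices. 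Both maps preserve the cost. Setting $B=\rank P$, the problem becomes
\begin{equation*}
W_2^2=\min_{B\in\mathcal{B}_\rank}\frac{1}{\rank}\sum_{i,j}B_{ij}\,\abs{\lambda_{n-1}^{(i)}-\lambda_n^{(j)}}^2,
\end{equation*}
where $\mathcal{B}_\rank$ is the set of doubly stochastic matrices. This is a linear objective over a compact polytope, so the minimum is attained.

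\textbf{Step 2: Birkhoff--von Neumann.} $\mathcal{B}_\rank$ is the convex hull of the permutation matrices. A linear function on a compact convex polytope attains its minimum at an extreme point, so there is an optimal $B$ equal to the permutation matrix of some $\pi\in S_\rank$. This gives the cost $\frac1\rank\sum_i\abs{\lambda_{n-1}^{(i)}-\lambda_n^{(\pi(i))}}^2$. Conversely, every permutation defines an admissible coupling, which gives the reverse inequality. Combining the two yields the stated identity.

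\textbf{Main obstacle.} The only genuinely nontrivial ingredient is Birkhoff--von Neumann, which I would cite rather than reprove; alternatively, it can be derived from Hall's marriage theorem. The remaining subtlety is bookkeeping for repeated eigenvalues, such as a real eigenvalue of multiplicity two. In that case $\nu_n$ is not literally a sum of $\rank$ distinct atoms, and the lifting argument in Step 1 is needed to justify the equality rather than only an inequality.
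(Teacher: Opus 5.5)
Your proposal is correct and follows the same route as the paper. Both write couplings of the two uniform $\rank$-atom measures as $\frac{1}{\rank}$ times a doubly stochastic matrix, note that the cost is linear, and invoke Birkhoff--von Neumann to place the optimum at a permutation matrix. Your lifting argument for coinciding eigenvalues is a genuine refinement: the paper's identification $\Pi_{ij} = \gamma(\{\lambda_{n-1}^{(i)}\}, \{\lambda_n^{(j)}\})$ silently assumes distinct atoms, since otherwise the row and column sums of $\Pi$ would not all equal $1/\rank$.
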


Lemma~\ref{thm:methodology-hungarian-ot-permutation} shows that the optimal coupling is a permutation map. Therefore, alignment reduces to a search over $S_\rank$. However, a brute force approach is expensive at $\rank!$ evaluations, even for moderate $\rank$.

\begin{restatable}{proposition}{hungarianassignment}
	\label{thm:methodology-hungarian-assignment}
	Let $\Lambda_n = \diag(\lambda_n) \in \C^{\rank\times\rank}$, $\mathcal{S}_\rank \subset \R^{\rank \times \rank}$ the set of permutation matrices in \eqn~\eqref{eq:permutation-matrices}, and $\mathcal{B}_\rank \subset \R^{\rank \times \rank}$ the set of doubly stochastic matrices in \eqn~\eqref{eq:doubly-stochastic-matrices}. Then
	\begin{equation}
		\label{eq:w2-permutation}
		\Pi_n = \argmin_{\Pi \in \mathcal{S}_\rank} \norm{\transp{\Pi} \Lambda_n \Pi - \Lambda_{n-1}}_F^2 = \argmax_{\Pi \in \mathcal{B}_\rank} \Tr{\!\left(\Pi \Rpart{\!\left(\conj{\lambda_n} \transp{\lambda_{n-1}}\right)}\right)}.
	\end{equation}
\end{restatable}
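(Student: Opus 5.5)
Three steps: expand the Frobenius objective so only a linear term in $\Pi$ remains, rewrite that term as a trace, and then pass from permutation matrices to doubly stochastic matrices with Birkhoff--von Neumann.

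\textbf{Step 1: reduce to a linear objective.} For a permutation matrix $\Pi$ with $\Pi e_i = e_{\pi(i)}$, the conjugation $\transp{\Pi}\Lambda_n\Pi$ is diagonal. Its $(i,i)$ entry is $\lambda_n^{(\pi(i))}$. Hence
\[
\norm{\transp{\Pi}\Lambda_n\Pi-\Lambda_{n-1}}_F^2=\sum_{i=1}^\rank \abs{\lambda_n^{(\pi(i))}-\lambda_{n-1}^{(i)}}^2 ,
\]
which connects the left-hand side to Lemma~\ref{thm:methodology-hungarian-ot-permutation}. Expanding each modulus gives
\[
\sum_i \abs{\lambda_n^{(\pi(i))}}^2+\sum_i\abs{\lambda_{n-1}^{(i)}}^2-2\sum_i \Rpart\!\left(\conj{\lambda_n^{(\pi(i))}}\lambda_{n-1}^{(i)}\right).
\]
The first two sums do not depend on $\pi$. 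Minimising the objective is therefore the same as maximising $\sum_i \Rpart(\conj{\lambda_n^{(\pi(i))}}\lambda_{n-1}^{(i)})$.

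\textbf{Step 2: write the linear term as a trace.} Let $C=\Rpart(\conj{\lambda_n}\transp{\lambda_{n-1}})\in\R^{\rank\times\rank}$, so that $C_{jk}=\Rpart(\conj{\lambda_n^{(j)}}\lambda_{n-1}^{(k)})$. For a permutation matrix, $\Pi_{ij}=\mathbbm{1}_{j=\pi(i)}$ or its transpose, depending on convention. I will fix the convention so that
\[
\Tr(\Pi C)=\sum_{i,j}\Pi_{ij}C_{ji}=\sum_i C_{\pi(i),i},
\]
which is exactly the Step 1 sum. Tracking this orientation of $\Pi$ against the pairing, so that the same $\Pi$ appears on both sides of \eqn~\eqref{eq:w2-permutation}, is fiddly but routine. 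The expected obstacle is making the index convention match precisely, since a transpose mismatch would give the inverse permutation.

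\textbf{Step 3: relax to doubly stochastic matrices.} The map $\Pi\mapsto\Tr(\Pi C)$ is linear in $\Pi$, and $\mathcal{B}_\rank$ is a compact convex polytope. A linear functional on such a polytope attains its maximum at an extreme point. By the Birkhoff--von Neumann theorem, the extreme points of $\mathcal{B}_\rank$ are exactly the elements of $\mathcal{S}_\rank$.

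It follows that $\max_{\mathcal{B}_\rank}\Tr(\Pi C)=\max_{\mathcal{S}_\rank}\Tr(\Pi C)$, and that some maximiser over $\mathcal{B}_\rank$ is a permutation matrix. Combining this with Steps 1 and 2 identifies the argmin over $\mathcal{S}_\rank$ with the argmax over $\mathcal{B}_\rank$.

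The equality of argmin and argmax sets needs one caveat. If ties occur, the maximiser over $\mathcal{B}_\rank$ can be a whole face, namely the convex hull of the optimal permutations. The statement should then be read as asserting that a permutation maximiser exists and coincides with $\Pi_n$, with uniqueness in the generic case. This is the standard linear-programming justification that lets the Hungarian algorithm solve the alignment problem.
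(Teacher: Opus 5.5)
Your route (expand the Frobenius norm, isolate the term linear in $\Pi$, relax to $\mathcal{B}_\rank$) is the same as the paper's. However, Step~2 has a genuine gap, exactly where you wrote ``fiddly but routine''. The orientation is not a free convention. Both objectives are functions of the matrix $\Pi$, not of the label $\pi$, so the only question is which matrix identity holds.

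In Step~1 you fixed $\Pi e_i=e_{\pi(i)}$, i.e.\ $\Pi_{\pi(i),i}=1$. With that same $\Pi$, $\Tr(\Pi C)=\sum_{i,j}\Pi_{ij}C_{ji}=\sum_i C_{i,\pi(i)}$, which pairs $\lambda_{n-1}$ with $\lambda_n$ through $\pi^{-1}$. The convention you adopt in Step~2, $\Pi_{ij}=\mathbbm{1}_{j=\pi(i)}$, is the transpose of the one in Step~1. So you are silently comparing $\Pi$ on the left with $\transp{\Pi}$ on the right.

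Computing without labels, for $\Pi\in\mathcal{S}_\rank$ we have $(\transp{\Pi}\conj{\Lambda_n}\Pi)_{ii}=\sum_k \Pi_{ki}^2\conj{\lambda_n^{(k)}}=\sum_k\Pi_{ki}\conj{\lambda_n^{(k)}}$. Hence
\[
\norm{\transp{\Pi}\Lambda_n\Pi-\Lambda_{n-1}}_F^2=\norm{\Lambda_n}_F^2+\norm{\Lambda_{n-1}}_F^2-2\Tr\!\left(\transp{\Pi}C\right),
\]
with $\transp{\Pi}$, not $\Pi$. The two traces agree only for symmetric $\Pi$, which covers all of $\mathcal{S}_\rank$ when $\rank\le 2$. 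For $\rank\ge 3$ the displayed equality in the statement is false.

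Here is a counterexample. Take $\lambda_{n-1}=\transp{(1,2,3)}$ and $\lambda_n=\transp{(2,3,1)}$, so $C_{jk}=\lambda_n^{(j)}\lambda_{n-1}^{(k)}$. Let $P$ have $P_{12}=P_{23}=P_{31}=1$.
\begin{itemize}
\item Then $\transp{P}\Lambda_nP=\Lambda_{n-1}$, so $P$ is the unique minimiser on the left.
\item But $\Tr(PC)=C_{21}+C_{32}+C_{13}=11$.
\item Meanwhile $\Tr(\transp{P}C)=C_{12}+C_{23}+C_{31}=14$, which by the rearrangement inequality is the unique maximum of $\Tr(\Pi C)$ over $\mathcal{B}_3$.
\end{itemize}
So the argmax is $\{\transp{P}\}$, not $\{P\}$.

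The paper's proof makes the same slip at the same line. It records the diagonal of $\transp{\Pi}\conj{\Lambda_n}\Pi$ as $\conj{\lambda_n^{(\pi(i))}}$ and then equates the resulting sum with $\Tr(\Pi C)$. So no choice of bookkeeping could have rescued Step~2 as stated.

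What your argument does prove is the corrected statement: the minimisers over $\mathcal{S}_\rank$ are exactly the permutation maximisers of $\Tr(\transp{\Pi}C)=\Tr\bigl(\Pi\,\Rpart(\lambda_{n-1}\herm{\lambda_n})\bigr)$ over $\mathcal{B}_\rank$. Equivalently, the permutation maximisers of $\Tr(\Pi C)$ are the minimisers of $\norm{\Pi\Lambda_n\transp{\Pi}-\Lambda_{n-1}}_F^2=\norm{\Pi\lambda_n-\lambda_{n-1}}^2$. This is the form consistent with the update $\lambda_n\gets\Pi_n\lambda_n$ in Algorithm~\ref{alg:chasm}.

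Your Step~3 is correct and more precise than the paper's remark that the relaxed optimum ``lies on the boundary''; one needs an extreme point, hence Birkhoff--von Neumann. Your caveat about ties is also right. With several optimal permutations the argmax over $\mathcal{B}_\rank$ is a whole face, so literal set equality requires a unique optimum.
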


Proposition~\ref{thm:methodology-hungarian-assignment} reformulates alignment as a linear assignment task, solvable in polynomial time by relaxing $\mathcal{S}_\rank$ to its convex hull $\mathcal{B}_\rank$. The earliest method to solve the assignment problem in polynomial time is the Kuhn-Munkres or Hungarian algorithm \citep{munkres1957algorithms}, which inspires the name of our approach and is $\mathcal{O}(\rank^4)$. In our implementation, we use \citep{jonker1988shortest} which is $\mathcal{O}(\rank^3)$. This result can independently be viewed as an extension of the orthogonal Procrustes problem to permutations over complex-valued points. Proofs of Lemma~\ref{thm:methodology-hungarian-ot-permutation} and Proposition~\ref{thm:methodology-hungarian-assignment} are provided in Appendix~\ref{app:proofs-alignment}.

\subsection{Complex multivariate changepoint detection}
\label{sec:methodology-detection}

Following Sections~\ref{sec:methodology-learning} and~\ref{sec:methodology-alignment}, the sequence of eigenvalue vectors $\{\lambda_t\}$ is aligned and suitable for changepoint detection. We propose to extend the multivariate exponentially weighted moving average (MEWMA) scheme of \citep{lowry1992multivariate} to the complex domain.

Rather than monitoring $\{\lambda_t\}$ directly, we track the first-order differences $v_n = \lambda_n - \lambda_{n-1} \in \C^\rank$, which can be interpreted as the instantaneous spectral velocity. Under stationary dynamics, $v_n$ fluctuates around zero with small magnitude and no preferred direction; when a change occurs, it induces a coherent directional shift detectable as a sustained deviation from zero. This behaviour is illustrated in Figure~\ref{fig:spectral-velocity}.

The second-order statistics of the eigenvalue velocity $v_n$ require both its Hermitian covariance $\Sigma_{v_n} = \E{(v_n - \mu_{v_n}) \herm{(v_n - \mu_{v_n})}}$ and pseudo-covariance $\tilde\Sigma_{v_n} = \E{(v_n - \mu_{v_n})\transp{(v_n - \mu_{v_n})}}$, where $\mu_{v_n} = \E{v_n} \in \C^\rank$ and $\Sigma_{v_n}, \tilde \Sigma_{v_n} \in \C^{\rank \times \rank}$ are all estimated sequentially from the stream. We form the augmented vector $\underline v_n = (v_n, \conj{v_n}) \in \C^{2\rank}$ and apply the recursion
\begin{equation}
	\label{eq:ewma-recursion}
	\underline z_n = (1-\alpha)\,\underline z_{n-1} + \alpha\,\underline v_n, \qquad \alpha \in (0, 1),
\end{equation}
with augmented mean $\mu_{\underline z_n} = (\mu_{v_n}, \conj{\mu_{v_n}}) \in \C^{2\rank}$ and covariance 
\begin{equation*}
	\Sigma_{\underline z_n}  = \beta_n \begin{pmatrix}
		\Sigma_{v_n} & \tilde \Sigma_{v_n} \\
		\conj{\tilde \Sigma_{v_n}} & \conj{\Sigma_{v_n}}
	\end{pmatrix} \in \C^{2\rank \times 2\rank}, \qquad \beta_n = \frac{\alpha(1 - (1-\alpha)^{2n})}{2-\alpha} \to \frac{\alpha}{2-\alpha} \text{ as } n \to \infty.
\end{equation*}
The changepoint statistic is defined as the squared Mahalanobis distance
\begin{equation}
	\label{eq:change-statistic}
	\mathcal{D}_n^2 = \herm{(\underline z_n - \mu_{\underline z_n})} \Sigma_{\underline z_n}^{-1} (\underline z_n - \mu_{\underline z_n}),
\end{equation}
with stopping rule $\hat\tau = \inf_n\{\mathcal{D}_n^2 > h\}$ for some threshold $h > 0$. Since $\Sigma_{\underline z_n}$ is positive semi-definite, $\mathcal{D}_n^2$ is guaranteed to be non-negative real despite the complex-valued inputs.

\begin{restatable}{lemma}{ewmadecomposition}
	\label{thm:methodology-ewma-decomposition}
	Let $z_n \in \C^\rank$ such that $z_n = (1 - \alpha) \, z_{n-1} + \alpha \, v_n$, similarly to \eqn~\eqref{eq:ewma-recursion}. Then,
	\begin{equation*}
		\mathcal D_n^2 = \frac{2}{\beta_n} \Rpart{\left( \herm{(z_n - \mu_{v_n})} \mathcal{C}_n^{-1} (z_n - \mu_{v_n}) + \herm{(z_n - \mu_{v_n})} \mathcal{Q}_n \conj{(z_n - \mu_{v_n})} \right)},
	\end{equation*}
	where $\mathcal{C}_n = \Sigma_{v_n} - \tilde \Sigma_{v_n} (\conj{\Sigma_{v_n}})^{-1} \conj{\tilde \Sigma_{v_n}} \in \R^{\rank \times \rank}$ and $\mathcal{Q}_n = - \mathcal{C}_n^{-1} \tilde \Sigma_{v_n} (\conj{\Sigma_{v_n}})^{-1} \in \R^{\rank \times \rank}$.
\end{restatable}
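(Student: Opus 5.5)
The plan is to invert the augmented covariance blockwise via the Schur complement, and then exploit the conjugate structure of the augmented vector to merge the four resulting quadratic terms into two.

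\textbf{Step 1: reduce to the augmented vector.} Write $w_n = z_n - \mu_{v_n} \in \C^\rank$. Because the recursion \eqref{eq:ewma-recursion} acts componentwise and is conjugation-equivariant, $\underline z_n = (z_n, \conj{z_n})$. Hence $\underline z_n - \mu_{\underline z_n} = (w_n, \conj{w_n})$.

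\textbf{Step 2: block inversion.} Set $\Sigma = \Sigma_{v_n}$ and $\tilde\Sigma = \tilde\Sigma_{v_n}$. Write
\begin{equation*}
M = \begin{pmatrix} \Sigma & \tilde\Sigma \\ \conj{\tilde\Sigma} & \conj{\Sigma}\end{pmatrix},
\end{equation*}
so that $\Sigma_{\underline z_n} = \beta_n M$. Assume $M$ and $\conj{\Sigma}$ are invertible, which is implicit in the definition of $\mathcal{D}_n^2$. Eliminating the lower-right block gives the Schur complement $\mathcal{C}_n = \Sigma - \tilde\Sigma\,(\conj{\Sigma})^{-1}\conj{\tilde\Sigma}$. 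The standard formula then yields
\begin{equation*}
M^{-1} = \begin{pmatrix} \mathcal{C}_n^{-1} & \mathcal{Q}_n \\ \conj{\mathcal{Q}_n} & \conj{\mathcal{C}_n^{-1}} \end{pmatrix}, \qquad \mathcal{Q}_n = -\mathcal{C}_n^{-1}\tilde\Sigma\,(\conj{\Sigma})^{-1}.
\end{equation*}
The lower blocks are the entrywise conjugates of the upper blocks, by the symmetry $JMJ = \conj{M}$ with $J$ the block swap. Equivalently, the Schur complement of the other block equals $\conj{\mathcal{C}_n}$. Verifying this symmetry cleanly is the main bookkeeping step. The quickest route is to note that $JM^{-1}J = (JMJ)^{-1} = \conj{M}^{-1} = \conj{M^{-1}}$, which forces exactly this block pattern, with no explicit computation of the lower blocks needed.

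\textbf{Step 3: expand the quadratic form.} Substituting into \eqref{eq:change-statistic} gives
\begin{equation*}
\beta_n \mathcal{D}_n^2 = \herm{w_n}\mathcal{C}_n^{-1}w_n + \herm{w_n}\mathcal{Q}_n\conj{w_n} + \transp{w_n}\conj{\mathcal{Q}_n}w_n + \transp{w_n}\conj{\mathcal{C}_n^{-1}}\conj{w_n}.
\end{equation*}
The fourth term is the complex conjugate of the first, and the third is the complex conjugate of the second. Each conjugate pair sums to twice the real part, so
\begin{equation*}
\mathcal{D}_n^2 = \frac{2}{\beta_n}\Rpart\left(\herm{w_n}\mathcal{C}_n^{-1}w_n + \herm{w_n}\mathcal{Q}_n\conj{w_n}\right),
\end{equation*}
which is the claimed identity.

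\textbf{Remaining points.} The statement's assertion that $\mathcal{C}_n$ and $\mathcal{Q}_n$ lie in $\R^{\rank\times\rank}$ is not needed for the identity and I would not rely on it. In general these matrices are complex, with $\mathcal{C}_n$ Hermitian and $\mathcal{Q}_n$ complex symmetric. The identity itself holds regardless.
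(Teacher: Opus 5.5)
Your proposal is correct and follows the same route as the paper's proof: Schur-complement block inversion of the augmented covariance, then pairing the four terms of the quadratic form into conjugate pairs to obtain $\frac{2}{\beta_n}$ times a real part. Your $J$-symmetry argument ($JMJ = \conj{M}$) justifies the conjugate lower blocks of $M^{-1}$, a step the paper only asserts. You are also right that $\mathcal{C}_n$ and $\mathcal{Q}_n$ are in general complex (Hermitian and complex symmetric, respectively) rather than real, and that the identity does not depend on this.
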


Lemma~\ref{thm:methodology-ewma-decomposition}, proved in Appendix~\ref{app:proofs-detection}, reduces the $2\rank$-dimensional inversion to two $\rank$-dimensional operations, slightly reducing the computational cost. When $v_n$ is proper complex, i.e. $\tilde\Sigma_{v_n} = 0$, $\mathcal{Q}_n = 0$ and $\mathcal{D}_n^2$ reduces to the standard Mahalanobis distance on $z_n$, recovering MEWMA \citep{lowry1992multivariate} as a special case. The eigenvalue velocities are generically non-circular \citep{schreier2010statistical} since complex conjugate pairs impose linear constraints between $v_n$ and $\conj{v_n}$, making augmentation necessary in general.

\paragraph{Algorithm summary.}
Algorithm~\ref{alg:chasm} summarises CHASM with minimal initialisation; see Section~\ref{sec:methodology-learning} for further discussion. The role of the forgetting factor $\rho$ is discussed in Section~\ref{sec:theory}. The rank $\rank$ is a regularisation parameter; we demonstrate robustness to its choice in Section~\ref{sec:experiments} and refer to the literature for principled rank selection \citep{brunton2022data,gavish2014optimal}. Parameters $(\alpha, h)$ are set jointly to control overall sensitivity, based on values used for real-valued MEWMA \citep{lowry1992multivariate,rigdon1995integral,runger1996markov}.

\begin{algorithm}[t]
	\caption{$\hat{\tau} \gets \texttt{CHASM}(\{x_t\} \mid \rho, \rank, \alpha, h)$}
	\label{alg:chasm}
	\nonl \bluecode{/* Complex Hungarian-Aligned Spectrum Monitoring */} \\
	\KwIn{Stream of observations $\{x_t\} \subset \R^{\dimension}$}
	\KwOut{Detected changepoint $\hat{\tau}$}
	\KwParam{Forgetting factor $\rho \in (0,1]$, rank $\rank \leq \dimension$, learning rate $\alpha \in (0,1)$, threshold $h > 0$}
	$\Theta_0 \gets I_\dimension$, $\Gamma_0^{-1} \gets \epsilon I_\dimension$ for small $\epsilon > 0$ \\
	\For{$n = 1, 2, \dots$}{
		$\gamma_n \gets \left(1 + \frac{1}{n-1} \transp{x_{n-1}} \Gamma_{n-1}^{-1} x_{n-1}\right)^{-1}$ \tcp*[f]{Dynamics estimation, Section~\ref{sec:methodology-learning}} \\
		$\Gamma_n^{-1} \gets \frac{n}{\rho (n-1)} \left( \Gamma_{n-1}^{-1} - \frac{\gamma_n}{n-1} \Gamma_{n-1}^{-1} x_{n-1} \transp{x_{n-1}} \Gamma_{n-1}^{-1} \right)$ \\
		$\Theta_n \gets \Theta_{n-1} + \frac{\gamma_n}{n-1} \left( x_n - \Theta_{n-1} x_{n-1} \right) \transp{x_{n-1}} \Gamma_{n-1}^{-1} $ \\
		$\lambda_n \gets \sigma(\dmdop_n \mid \rank)$ \\
		$\Pi_n \gets \argmax_{\Pi \in \mathcal{B}_\rank} \Tr{\!\left(\Pi \Rpart{\!\left(\conj{\lambda_n} \transp{\lambda_{n-1}}\right)}\right)}$ \tcp*[f]{Optimal alignment, Section~\ref{sec:methodology-alignment}} \\
		$\lambda_n \gets \Pi_n \lambda_n$ \\
		$v_n \gets \lambda_n - \lambda_{n-1}$ \tcp*[f]{Changepoint detection, Section~\ref{sec:methodology-detection}} \\
		$z_n \gets (1 - \alpha) z_{n-1} + \alpha v_n$ \\
		$\mathcal{D}_n^2 \gets \frac{2(2-\alpha)}{\alpha} \Rpart{\left( \herm{(z_n - \mu_{v_n})} \mathcal{C}_n^{-1} (z_n - \mu_{v_n}) + \herm{(z_n - \mu_{v_n})} \mathcal{Q}_n \conj{(z_n - \mu_{v_n})} \right)}$ \\
		\If{$\mathcal{D}_n^2 > h$}{
			\Return{$\hat{\tau} = n$}
		}
	}
\end{algorithm}

\begin{figure}[t]
	\centering
	\begin{minipage}[t]{0.48\linewidth}
		\centering
		\resizebox{\linewidth}{!}{%
			\begin{tikzpicture}[
				>=Stealth,
				axis/.style={->, line width=0.1pt, black!55},
				unitcirc/.style={black!58, line width=0.9pt, densely dashed},
				velstat/.style={-{Stealth[length=3pt, width=2pt]},
					line width=0.9pt, gray!72, line cap=round},
				velcp/.style={-{Stealth[length=3pt, width=1.8pt]},
					line width=0.9pt, red!65!gray, line cap=round},
				eigA/.style={circle, inner sep=0pt, minimum size=7pt,
					fill=blue!60!gray, draw=none},
				eigB/.style={circle, inner sep=0pt, minimum size=7pt,
					fill=teal!55!gray, draw=none},
				eigAghost/.style={circle, inner sep=0pt, minimum size=7pt,
					fill=blue!60!gray, draw=none, opacity=0.38},
				eigBghost/.style={circle, inner sep=0pt, minimum size=7pt,
					fill=teal!55!gray, draw=none, opacity=0.38},
				axlbl/.style={font=\footnotesize, gray!65},
				]
				
				\def\R{2.0}    
				\def\sep{5.2}  
				
				\begin{scope}[xshift=0cm]
					\draw[unitcirc] (0,0) circle (\R);
					\draw[axis] (-\R-0.32, 0) -- (\R+0.42, 0)
					node[axlbl, above=1pt, color=black!70] {$\mathrm{Re}$};
					\draw[axis] (0, -\R-0.32) -- (0, \R+0.52)
					node[axlbl, above=1pt, color=black!70] {$\mathrm{Im}$};
					
					\coordinate (A1u) at ( 1.31,  1.03);
					\coordinate (A1l) at ( 1.31, -1.03);
					\coordinate (A2u) at (-0.83,  1.17);
					\coordinate (A2l) at (-0.83, -1.17);
					
					\draw[velstat] (A1u) -- +( 0.31, -0.25); 
					\draw[velstat] (A1u) -- +(-0.31, +0.25);
					\draw[velstat] (A1u) -- +(-0.06, -0.39);
					
					\draw[velstat] (A1l) -- +( 0.29,  0.06);
					\draw[velstat] (A1l) -- +(-0.29, -0.26);
					\draw[velstat] (A1l) -- +(-0.39,  0.04);
					
					\draw[velstat] (A2u) -- +( 0.31,  0.25);
					\draw[velstat] (A2u) -- +(-0.38,  0.13);
					\draw[velstat] (A2u) -- +( 0.10, -0.39);
					
					\draw[velstat] (A2l) -- +( 0.38, -0.13);
					\draw[velstat] (A2l) -- +(-0.28,  0.28);
					\draw[velstat] (A2l) -- +(-0.05, -0.40); 
					
					\node[eigA] at (A1u) {};  \node[eigA] at (A1l) {};
					\node[eigB] at (A2u) {};  \node[eigB] at (A2l) {};
					
				\end{scope}
				
				\node[font=\small, anchor=north west, color=black] at (-\R-0.32, \R+0.52) {(a)};
				
				\begin{scope}[xshift=\sep cm]
					\node[font=\small, anchor=north west, color=black] at (-\R-0.32, \R+0.52) {(b)};
					\draw[unitcirc] (0,0) circle (\R);
					\draw[axis] (-\R-0.32, 0) -- (\R+0.42, 0)
					node[axlbl, above=1pt, color=black!70] {$\mathrm{Re}$};
					\draw[axis] (0, -\R-0.32) -- (0, \R+0.52)
					node[axlbl, above=1pt, color=black!70] {$\mathrm{Im}$};
					
					\coordinate (G1u) at ( 1.31,  1.03);
					\coordinate (G1l) at ( 1.31, -1.03);
					\coordinate (G2u) at (-0.83,  1.17);
					\coordinate (G2l) at (-0.83, -1.17);
					
					\coordinate (N1u) at ( 0.76,  1.49);
					\coordinate (N1l) at ( 0.76, -1.49);
					\coordinate (N2u) at (-1.24,  0.72);
					\coordinate (N2l) at (-1.24, -0.72);
					
					\draw[velcp] (1.310,  1.030) to[bend left=12]  (1.127,  1.183);
					\draw[velcp] (1.127,  1.183) to[bend right=9]  (0.944,  1.337);
					\draw[velcp] (0.944,  1.337) to[bend left=11]  (N1u);
					
					\draw[velcp] (1.310, -1.030) to[bend right=12] (1.127, -1.183);
					\draw[velcp] (1.127, -1.183) to[bend left=9]   (0.944, -1.337);
					\draw[velcp] (0.944, -1.337) to[bend right=11] (N1l);
					
					\draw[velcp] (-0.830,  1.170) to[bend right=13] (-0.967,  1.020);
					\draw[velcp] (-0.967,  1.020) to[bend left=10]  (-1.104,  0.870);
					\draw[velcp] (-1.104,  0.870) to[bend right=12] (N2u);
					
					\draw[velcp] (-0.830, -1.170) to[bend left=13]  (-0.967, -1.020);
					\draw[velcp] (-0.967, -1.020) to[bend right=10] (-1.104, -0.870);
					\draw[velcp] (-1.104, -0.870) to[bend left=12]  (N2l);
					
					\node[eigAghost] at (G1u) {};  \node[eigAghost] at (G1l) {};
					\node[eigBghost] at (G2u) {};  \node[eigBghost] at (G2l) {};
					\node[eigA] at (N1u) {};  \node[eigA] at (N1l) {};
					\node[eigB] at (N2u) {};  \node[eigB] at (N2l) {};
					
				\end{scope}
			\end{tikzpicture}
		}
		\caption{\small Spectral velocity under stationary dynamics (a) and at a changepoint (b). Stationarity yields directionally incoherent velocities; a changepoint induces a coherent shift across consecutive eigenvalues.}
		\label{fig:spectral-velocity}
	\end{minipage}
	\hfill
	\begin{minipage}[t]{0.48\linewidth}
		\centering
		\resizebox{\linewidth}{!}{%
			\includegraphics{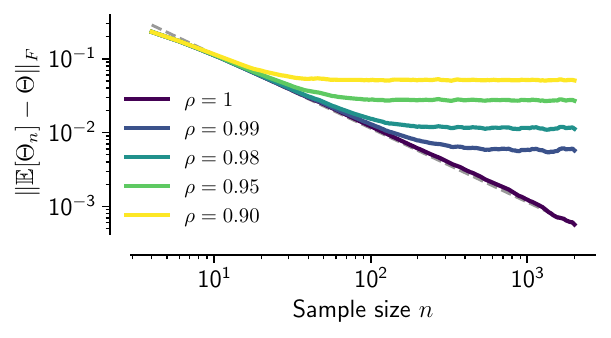}
		}
		\caption{\small Asymptotic bias of the online DMD operator under a $\varprocess{\dimension}{1}$ model ($d=2$, $10^5$ Monte Carlo samples). For $\rho=1$, bias decays as $n^{-1}$ (Theorem~\ref{thm:unweighted-bias}); for $\rho < 1$, it saturates at order $(1-\rho)$ (Theorem~\ref{thm:weighted-bias}).}
		\label{fig:bias}
	\end{minipage}
\end{figure}

\paragraph{Computational complexity.}
The per-step time complexity of CHASM is $\mathcal{O}(\dimension^2 \rank)$ and its space complexity is $\mathcal{O}(\dimension^2)$. Rank-one operator updates cost $\mathcal{O}(\dimension^2)$; extracting $\rank$ dominant eigenvalues costs $\mathcal{O}(\dimension^2 \rank)$ \citep{golub2013matrix,lehoucq1998arpack}, avoiding the cubic cost of a full eigendecomposition; optimal alignment costs $\mathcal{O}(\rank^3)$ \citep{jonker1988shortest}; MEWMA updates and statistic evaluation are likewise $\mathcal{O}(\rank^3)$.

\section{Theoretical analysis}
\label{sec:theory}

In this section, we consider a stationary $\varprocess{\dimension}{1}$ process $x_t = \Theta\,x_{t-1} + \varepsilon_t$ where $\zeta(\Theta) < 1$, $\{\varepsilon_t\}$ is a martingale difference sequence with respect to the natural filtration $\mathcal{F}_t = \sigma(x_0,\dots,x_t)$, satisfying $\E{\varepsilon_t \mid \mathcal{F}_{t-1}} = 0$ almost surely and $\E{\varepsilon_t\transp{\varepsilon_t}} = \Sigma \succ 0$. The stationary covariance $\Gamma = \E{x_t\transp{x_t}}$ satisfies $\Gamma \succ 0$ by Lemma~\ref{thm:gamma-pd} and solves the Lyapunov equation $\Gamma = \Theta\Gamma\transp{\Theta} + \Sigma$; see Appendix~\ref{app:proofs-bias}.

\subsection{Unweighted estimator}
\label{sec:theory-unweighted}

We study the online unweighted DMD operator $\dmdop_n$, obtained from \eqn~\eqref{eq:online-dmd-updates} with $\rho = 1$.

\begin{restatable}{theorem}{unweightedbias}
	\label{thm:unweighted-bias}
	Suppose $\varepsilon_t \perp \varepsilon_s$ for $t \neq s$ and $\E{\norm{\varepsilon_t}^8} < \infty$. Then $\dmdop_n$ is consistent, i.e. $\dmdop_n \xrightarrow{\mathbb{P}} \Theta$, and
	\begin{equation*}
		\E{\dmdop_n} = \Theta + \mathcal{O}(n^{-1}) \qquad \text{as } n \to \infty.
	\end{equation*}
\end{restatable}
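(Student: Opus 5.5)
The plan is to first identify the recursion with $\rho=1$ as exact least squares, and then analyse the closed form. By induction on \eqn~\eqref{eq:online-dmd-updates}, using the Sherman--Morrison formula, $\Gamma_n = \frac{1}{n}\sum_{t=1}^{n} x_{t-1}\transp{x_{t-1}}$ (up to the $\epsilon$-initialisation, which contributes an $\mathcal{O}(n^{-1})$ perturbation and can be absorbed, or removed by a warm start). Likewise $\dmdop_n = \hat C_n \Gamma_n^{-1}$ with $\hat C_n = \frac1n\sum_{t} x_t\transp{x_{t-1}}$. Substituting $x_t = \Theta x_{t-1}+\varepsilon_t$ gives the error decomposition
\begin{equation*}
\dmdop_n - \Theta = M_n \Gamma_n^{-1}, \qquad M_n = \frac1n \sum_{t=1}^n \varepsilon_t \transp{x_{t-1}}.
\end{equation*}

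\textbf{Consistency.} Since $\zeta(\Theta)<1$, $x_t=\sum_{k\ge0}\Theta^k\varepsilon_{t-k}$ is a causal linear process with i.i.d.-type innovations and finite fourth moments, so it is ergodic. Hence $\Gamma_n \to \Gamma$ almost surely (or in $L^2$, via a direct variance computation using the geometric decay $\norm{\Theta^k}\le c\,\kappa^k$). Here $\Gamma \succ 0$ by Lemma~\ref{thm:gamma-pd}. The term $M_n$ is an average of martingale differences with respect to $\mathcal F_t$, so $\E{\norm{M_n}_F^2} = n^{-2}\sum_t \E{\norm{\varepsilon_t}^2\norm{x_{t-1}}^2} = \mathcal{O}(n^{-1})$ and $M_n\to0$ in probability. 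The continuous mapping theorem then gives $\dmdop_n \xrightarrow{\mathbb P}\Theta$.

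\textbf{Bias.} Write $\Gamma_n = \Gamma + \Delta_n$ with $\Delta_n = \mathcal{O}_{L^p}(n^{-1/2})$, and expand
\begin{equation*}
\Gamma_n^{-1} = \Gamma^{-1} - \Gamma^{-1}\Delta_n\Gamma^{-1} + R_n, \qquad R_n = \Gamma^{-1}\Delta_n\Gamma^{-1}\Delta_n\Gamma_n^{-1}.
\end{equation*}
Then $\E{\dmdop_n}-\Theta = \E{M_n}\Gamma^{-1} - \E{M_n\Gamma^{-1}\Delta_n}\Gamma^{-1} + \E{M_n R_n}$.

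The first term vanishes by the martingale property. The second is a product of two $\mathcal{O}_{L^2}(n^{-1/2})$ quantities, so Cauchy--Schwarz gives $\mathcal{O}(n^{-1})$; computing it explicitly as $n^{-2}\sum_{s,t}\E{\varepsilon_t \transp{x_{t-1}}\Gamma^{-1}(x_{s-1}\transp{x_{s-1}}-\Gamma)}$ also shows the cross terms with $s>t$ are summable thanks to geometric decay, using independence of $\varepsilon_t$ from the future innovations. The third term is cubic in fluctuations, and Hölder gives $\E{\norm{M_n}\norm{\Delta_n}^2\norm{\Gamma_n^{-1}}} \le \norm{M_n}_{L^4}\norm{\Delta_n}_{L^8}^2\norm{\Gamma_n^{-1}}_{L^p}$. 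This is where the eighth-moment assumption enters: $\norm{\Delta_n}_{L^8}=\mathcal{O}(n^{-1/2})$ needs $\E{\norm{x_t}^{16}}$-type control of products, or at least $\E{\norm{\varepsilon}^8}$ via Rosenthal/Burkholder inequalities for the quadratic forms. The third term is therefore $o(n^{-1})$.

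\textbf{Main obstacle.} I expect the hardest step to be controlling $\E{\norm{\Gamma_n^{-1}}^p}$, since the sample covariance can be near-singular with small probability. My plan is to split on the event $E_n=\{\lambda_{\min}(\Gamma_n)\ge \lambda_{\min}(\Gamma)/2\}$. On $E_n$ the inverse is bounded. On $E_n^c$, whose probability is $\mathcal{O}(n^{-4})$ by Markov's inequality applied to $\E{\norm{\Delta_n}^8}$, I would use the regularised initialisation $\Gamma_0^{-1}=\epsilon I$: this yields $\norm{\Gamma_n^{-1}}\le n/\epsilon$ deterministically, so the contribution there is at most $\mathcal{O}(n\cdot n^{-4}\cdot\text{moments})=o(n^{-1})$. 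Combining the three terms gives $\E{\dmdop_n}=\Theta+\mathcal{O}(n^{-1})$.
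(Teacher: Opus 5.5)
Your skeleton matches the paper's: the exact least-squares identity, the error $M_n\Gamma_n^{-1}$, the second-order resolvent expansion, Cauchy--Schwarz for the term linear in $\Delta_n$, and a split on $E_n=\{\lambda_{\min}(\Gamma_n)\ge\lambda_{\min}(\Gamma)/2\}$. The gap is in the moment bookkeeping, at the two steps that control the remainder.

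Two of your claims need more than the hypothesis gives:
\begin{itemize}
\item $\norm{\Delta_n}_{L^8}=\mathcal{O}(n^{-1/2})$;
\item $\mathbb{P}(E_n^c)=\mathcal{O}(n^{-4})$ via $\E{\norm{\Delta_n}^8}$.
\end{itemize}
Both require the summands $x_t\transp{x_t}-\Gamma$ of $\Delta_n$ to lie in $L^8$, i.e.\ $\E{\norm{x_t}^{16}}<\infty$. That needs sixteenth moments of the noise. For instance, take $\Theta=0$, $d=1$: then $\Delta_n$ is an i.i.d.\ average of $\varepsilon_{t-1}^2-\Sigma$. Rosenthal/Burkholder cannot get around this, since they bound the $L^8$ norm of a sum by $L^8$ norms of its summands.

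Under $\E{\norm{\varepsilon_t}^8}<\infty$ you only have $\E{\norm{\Delta_n}^4}=\mathcal{O}(n^{-2})$ (the paper's fourth-moment lemma), hence $\mathbb{P}(E_n^c)=\mathcal{O}(n^{-2})$. With that tail, bounding the bad-event contribution by Cauchy--Schwarz gives only $n\,\norm{M_n}_{L^2}\,\mathbb{P}(E_n^c)^{1/2}=\mathcal{O}(n^{-1/2})$. So your $o(n^{-1})$ conclusion does not follow as written.

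The argument does close within eighth moments if you split $\E{\dmdop_n-\Theta}$ on $E_n$ and $E_n^c$ before expanding.

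\textbf{On $E_n$.} Here $\norm{\Gamma_n^{-1}}\le 2/\lambda_{\min}(\Gamma)$.
\begin{itemize}
\item The linear term becomes $-\E{M_n\mathbbm{1}_{E_n^c}}\Gamma^{-1}=\mathcal{O}(n^{-1/2}\,\mathbb{P}(E_n^c)^{1/2})=\mathcal{O}(n^{-3/2})$.
\item The second term is $\mathcal{O}(n^{-1})$, as you say.
\item The remainder is bounded by a constant times $\norm{M_n}_{L^2}\norm{\Delta_n}_{L^4}^2=\mathcal{O}(n^{-3/2})$. This is what the paper does.
\end{itemize}

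\textbf{On $E_n^c$.} Use the integrability of $M_n$ rather than of $\Delta_n$. Since $\varepsilon_t$ is independent of $x_{t-1}$, $\varepsilon_t\transp{x_{t-1}}$ has a finite eighth moment, and Burkholder gives $\norm{M_n}_{L^8}=\mathcal{O}(n^{-1/2})$. H\"older with $\norm{\Gamma_n^{-1}}=\mathcal{O}(n)$ then gives $\E{\norm{M_n}\norm{\Gamma_n^{-1}}\mathbbm{1}_{E_n^c}}=\mathcal{O}\bigl(n\cdot n^{-1/2}\cdot\mathbb{P}(E_n^c)^{7/8}\bigr)=\mathcal{O}(n^{-5/4})$.

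\textbf{Keep the regularisation.} That deterministic $\mathcal{O}(n)$ bound requires the regularised initialisation, so drop the ``removed by a warm start'' option. The recursion is then a ridge-type estimator: $n\Gamma_n=P_0^{-1}+\sum_t x_{t-1}\transp{x_{t-1}}$ for a fixed $P_0\succ0$ set by the initialisation, so $\norm{\Gamma_n^{-1}}\le n\norm{P_0}$. The error becomes $\dmdop_n-\Theta=M_n\Gamma_n^{-1}+\frac1n(\Theta_0-\Theta)P_0^{-1}\Gamma_n^{-1}$. The extra term contributes $\mathcal{O}(n^{-1})\,\E{\norm{\Gamma_n^{-1}}}=\mathcal{O}(n^{-1})$ by the same split.

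With these corrections, your handling of $E_n^c$ is sturdier than the paper's. The paper disposes of $\mathcal{E}_n^c$ by setting $\Gamma_n^{-1}=0$ when $\Gamma_n$ is singular. That convention says nothing about invertible but ill-conditioned $\Gamma_n$, which is exactly what your $\mathcal{O}(n)$ bound controls.
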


The proof is provided in Appendix~\ref{app:proofs-bias-unweighted}. The eighth-moment condition controls higher-order terms and is standard in dependent-data asymptotics \citep{bartz2014covariance, zhang2013communication}; consistency of $\dmdop_n$ requires only finite fourth moments. This condition is nonparametric, strictly weaker than sub-Gaussian noise, and imposes no restriction on the covariance structure, in contrast to \citep{tian2024sequential}. No structural assumption is imposed on $\Theta$ beyond stationarity, which accommodates full-rank, low-rank, oscillatory, and sparse dynamics. 

Theorem~\ref{thm:unweighted-bias} shows that, under stationarity of the data-generating process, the estimator consistently learns the true in-control dynamics, with bias decaying at rate $\mathcal{O}(n^{-1})$. This guarantees that the estimated matrix provides a reliable baseline representation of the in-control process.

\subsection{Weighted estimator}
\label{sec:theory-weighted}

We now consider the online weighted DMD operator $\Theta_n$, given by \eqn~\eqref{eq:online-dmd-updates} with a fixed forgetting factor $\rho \in (0, 1)$. We define the effective sample size $\neff = \sum_{t=1}^n \rho^{n-t} = \frac{1-\rho^n}{1-\rho} \to (1-\rho)^{-1}$ as $n \to \infty$, which quantifies the contribution of recent observations relative to the distant past.

\begin{restatable}{theorem}{weightedbias}
	\label{thm:weighted-bias}
	Suppose $\varepsilon_t \perp \varepsilon_s$ for $t \neq s$ and $\E{\norm{\varepsilon_t}^8} < \infty$. Then for any $\rho \in (0,1)$, there exist $c, c_e > 0$ depending only on $\Theta$, $\Sigma$, $\dimension$, such that, for the event $\mathcal{E}_n = \{\lambda_{\min}(\Gamma_n) \geq \lambda_{\min}(\Gamma)/2\}$,
	\begin{equation*}
		\norm{\E{\left(\dmdop_n - \Theta\right)\mathbbm{1}_{\mathcal{E}_n}}} \leq c \,\neff^{-1}, \qquad \P{\mathcal{E}_n} \geq 1 - c_e\,\neff^{-1},
	\end{equation*}
	and consequently $\E{\dmdop_n} = \Theta + \mathcal{O}(\neff^{-1})$ as $n \to \infty$.
\end{restatable}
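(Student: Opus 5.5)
Our plan is to unroll the recursion in \eqn~\eqref{eq:online-dmd-updates} into a closed-form weighted least-squares estimator and then apply a delta-method-type expansion on the good event $\mathcal{E}_n$. By induction on $n$, and using the Sherman--Morrison identity, we expect to show that (up to the negligible influence of the initialisation $\Gamma_0^{-1}=\epsilon I_\dimension$, which decays like $\rho^n$)
\begin{equation*}
	\Theta_n = C_n \Gamma_n^{-1}, \qquad \Gamma_n = \frac{1}{\neff}\sum_{t=1}^n \rho^{n-t} x_{t-1}\transp{x_{t-1}}, \qquad C_n = \frac{1}{\neff}\sum_{t=1}^n \rho^{n-t} x_t\transp{x_{t-1}}.
\end{equation*}
Substituting $x_t = \Theta x_{t-1}+\varepsilon_t$ then gives the error representation $\Theta_n - \Theta = E_n\Gamma_n^{-1}$ with $E_n = \neff^{-1}\sum_t \rho^{n-t}\varepsilon_t\transp{x_{t-1}}$. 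By the martingale difference property, $\E{E_n}=0$, but $E_n$ and $\Gamma_n^{-1}$ are dependent, so the bias comes entirely from their correlation.

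Next we would expand the inverse around $\bar\Gamma_n = \E{\Gamma_n}$. Up to a transient of order $\rho^n$, $\bar\Gamma_n=\Gamma$ by stationarity. Write $\Delta_n = \Gamma_n - \Gamma$. On $\mathcal{E}_n$,
\begin{equation*}
	\Gamma_n^{-1} = \Gamma^{-1} - \Gamma^{-1}\Delta_n\Gamma^{-1} + \Gamma^{-1}\Delta_n\Gamma_n^{-1}\Delta_n\Gamma^{-1},
\end{equation*}
and $\norm{\Gamma_n^{-1}}\le 2/\lambda_{\min}(\Gamma)$ there. The first term contributes $\E{E_n\mathbbm 1_{\mathcal{E}_n}}\Gamma^{-1}$. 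Since $\E{E_n}=0$, this equals $-\E{E_n\mathbbm 1_{\mathcal{E}_n^c}}\Gamma^{-1}$, which Cauchy--Schwarz bounds by $\norm{E_n}_{L^2}\P{\mathcal{E}_n^c}^{1/2}$. The second term gives $\E{E_n\Gamma^{-1}\Delta_n}$. This is a double sum $\neff^{-2}\sum_{t,s}\rho^{2n-t-s}\E{\varepsilon_t\transp{x_{t-1}}\Gamma^{-1}(x_{s-1}\transp{x_{s-1}}-\Gamma)}$. Its summand vanishes for $s\le t$ by the martingale property. For $s>t$ we expand $x_{s-1}=\Theta^{s-1-t}x_t+\dots$, and independence of the noise yields a geometrically decaying factor $\zeta(\Theta)^{s-t}$ (up to polynomial factors). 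Summing the geometric weights gives $\mathcal O(\neff^{-1})$, because $\sum_t\rho^{2(n-t)}/\neff^2\asymp \neff^{-1}$. The remainder term is bounded by $\frac{2}{\lambda_{\min}(\Gamma)^3}\E{\norm{E_n}\norm{\Delta_n}^2}$. Using Hölder together with $\norm{E_n}_{L^2}=\mathcal O(\neff^{-1/2})$ and $\norm{\Delta_n}_{L^4}=\mathcal O(\neff^{-1/2})$ gives $\mathcal O(\neff^{-3/2})$. The eighth-moment assumption enters exactly here, since $\norm{\Delta_n}_{L^4}$ requires $\E{\norm{x_t}^8}<\infty$, inherited from $\varepsilon_t$ via the MA($\infty$) representation $x_t=\sum_k\Theta^k\varepsilon_{t-k}$.

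For the probability bound, we note that $\mathcal{E}_n^c\subseteq\{\norm{\Delta_n}\ge\lambda_{\min}(\Gamma)/2\}$ by Weyl's inequality. Chebyshev then gives $\P{\mathcal{E}_n^c}\le 4\E{\norm{\Delta_n}_F^2}/\lambda_{\min}(\Gamma)^2$. The key lemma is therefore a variance bound $\E{\norm{\Delta_n}_F^2}\le c'\neff^{-1}$. We would prove it by writing $\Delta_n$ as a weighted sum of the centred, geometrically mixing sequence $x_{t-1}\transp{x_{t-1}}-\Gamma$. Its autocovariances decay like $\zeta(\Theta)^{|t-s|}$, computed from the MA($\infty$) form using fourth moments. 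A summable-autocovariance argument then gives variance $\lesssim\sum_t\rho^{2(n-t)}/\neff^2\lesssim\neff^{-1}$.

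Using Markov with fourth moments, $\P{\mathcal{E}_n^c}\lesssim\neff^{-2}$. This makes the first-term contribution $\mathcal O(\neff^{-1/2}\cdot\neff^{-1})$, so all pieces are $\mathcal O(\neff^{-1})$. For the final claim $\E{\Theta_n}=\Theta+\mathcal O(\neff^{-1})$ we still need $\E{(\Theta_n-\Theta)\mathbbm 1_{\mathcal{E}_n^c}}$ small. Here we would use the bound $\norm{\Theta_n}\lesssim\norm{C_n}/\lambda_{\min}(\Gamma_n)$ together with the regularisation $\epsilon$ or a moment bound on $\lambda_{\min}(\Gamma_n)^{-1}$. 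I expect this to be the main obstacle: controlling $\Gamma_n^{-1}$ off the good event, since no bound on inverse moments is available from the assumptions alone. I would first try to exploit the $\epsilon I_\dimension$ initialisation, which floors the eigenvalues of $\Gamma_n$ at a level $\propto\rho^n/\epsilon$. If that is insufficient, I would interpret the unconditional statement as the good-event result combined with $\P{\mathcal{E}_n}\to 1-\mathcal O(\neff^{-1})$. A secondary technical difficulty is the fourth-order cross-moment calculation in the $E_n\Delta_n$ term, which must be organised carefully to extract the $\zeta(\Theta)^{s-t}$ decay.
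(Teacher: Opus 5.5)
For the two displayed bounds, your plan follows the paper's proof.

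\begin{itemize}
\item \textbf{Error representation.} The paper conditions on $\mathcal{F}_{n-1}$ to drop the $t=n$ summand and works with $\Xi_n$. Your $E_n$ gives identical expectations, because $\Gamma_n$ and $\mathcal{E}_n$ are $\mathcal{F}_{n-1}$-measurable.
\item \textbf{Main expansion.} You use the same resolvent expansion on $\mathcal{E}_n$, bound the first term via $\E{\Xi_n}=0$ and Cauchy--Schwarz against $\P{\mathcal{E}_n^c}^{1/2}$, and bound the remainder by $\norm{\Xi_n}_{L^2}\norm{\Delta_n}_{L^4}^2=\mathcal{O}(\neff^{-3/2})$.
\item \textbf{Probability bound.} You bound $\P{\mathcal{E}_n^c}$ via Weyl and Chebyshev with the weighted summable-autocovariance estimate $\E{\norm{\Delta_n}^2}\le c_\gamma\neff^{-1}$. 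The constants must be uniform in $\rho$, since $\neff$ stays bounded for fixed $\rho$; your sketch delivers this naturally.
\end{itemize}

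The middle term needs none of the fourth-order cross-moment bookkeeping you anticipate. The paper uses $\mathbbm{1}_{\mathcal{E}_n}\le 1$ and Cauchy--Schwarz to get $\norm{\Gamma^{-1}}^2\norm{\Xi_n}_{L^2}\norm{\Delta_n}_{L^2}=\mathcal{O}(\neff^{-1})$ directly. This also fixes a slip in your version: you write the term as $\E{E_n\Gamma^{-1}\Delta_n}$, silently dropping the indicator.

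The gap is the final claim $\E{\dmdop_n}=\Theta+\mathcal{O}(\neff^{-1})$, which requires controlling $\E{(\dmdop_n-\Theta)\mathbbm{1}_{\mathcal{E}_n^c}}$.

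\begin{itemize}
\item \textbf{Your first idea cannot work.} The $\epsilon I_\dimension$ initialisation is discounted like every other term. The floor it provides is only $\lambda_{\min}(\Gamma_n)\gtrsim\rho^n/(\epsilon\neff)$, so the resulting bound on $\norm{\Gamma_n^{-1}}$ grows like $\rho^{-n}$. Meanwhile, for fixed $\rho<1$, your bound on $\P{\mathcal{E}_n^c}$ is a power of $\neff^{-1}\to 1-\rho$, which does not decay in $n$.
\item \textbf{Your fallback} just restates the truncated result.
\end{itemize}

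The paper instead applies Cauchy--Schwarz twice:
\begin{equation*}
	\norm{\E{(\dmdop_n-\Theta)\mathbbm{1}_{\mathcal{E}_n^c}}} \le \E{\norm{\Xi_n}^4}^{1/4}\,\E{\norm{\Gamma_n^{-1}}^4}^{1/4}\,\P{\mathcal{E}_n^c}^{1/2}.
\end{equation*}
It gets $\E{\norm{\Xi_n}^4}=\mathcal{O}(\neff^{-2})$ from a cumulant argument paralleling the one for $\Delta_n$, and asserts $\E{\norm{\Gamma_n^{-1}}^4}=\mathcal{O}(1)$.

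That last bound is only asserted, from the convention $\Gamma_n^{-1}=0$ on singular events plus pointwise finiteness on $\mathcal{E}_n^c$. Pointwise finiteness does not control a fourth moment, and the bound can fail under moment assumptions alone. Take i.i.d.\ noise with an atom of mass $p$ at $0$. A run of $m$ zero innovations immediately before time $n$ has probability $p^m$ and drives $\lambda_{\min}(\Gamma_n)$ down geometrically in $m$. Hence $\E{\norm{\Gamma_n^{-1}}^4}$ is unbounded in $n$ once $p$ is close enough to $1$.

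So your diagnosis is correct. Neither your argument nor the paper's closes the unconditional statement. The missing ingredient in both is an inverse-moment bound on $\lambda_{\min}(\Gamma_n)$, uniform in $n$ and $\rho$. Supplying it needs an extra hypothesis, such as a small-ball or density condition on $\varepsilon_t$, or a modified estimator. What is rigorously established is the pair of truncated bounds.
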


Theorem~\ref{thm:weighted-bias} mirrors Theorem~\ref{thm:unweighted-bias} with the sample size $n$ replaced by the effective sample size $\neff$. As $\rho \to 1^-$, one recovers $\neff \sim n$, and the unweighted regime is obtained as a limiting case. In contrast to the unweighted estimator, $\neff$ remains bounded for fixed $\rho < 1$, and the bias bound does not vanish but stabilises at order $\neff^{-1} = (1-\rho)$. These results are illustrated in Figure~\ref{fig:bias}.

\paragraph{Effective sample size and detection.}
Without forgetting, both bias and variance vanish asymptotically, yielding a concentrated estimator that may become insensitive to changepoints. With forgetting, the estimator can retain a non-degenerate variability driven by recent observations, thereby maintaining sensitivity to changes. The forgetting factor $\rho$ controls this trade-off: larger values favour stability and accurate estimation of the in-control regime, while smaller values increase responsiveness to changes at the cost of higher variability. The effective sample size $\neff$ can be interpreted as the number of observations in the estimate. Since the objective of this work is changepoint detection rather than optimal estimation of $\Theta$, maintaining a non-vanishing error provides a principled mechanism to balance stability and adaptivity. This trade-off justifies the role of $\rho$ in Section~\ref{sec:methodology} and is confirmed empirically by the $\arlzero$ results of Section~\ref{sec:experiments}, where smaller $\rho$ produces shorter in-control run lengths. 

\section{Experiments}
\label{sec:experiments}

\subsection{Experimental setup}
\label{sec:experiments-setup}

We evaluate detection speed via the average run length in control ($\arlzero$), estimating the expected time to false alarm, and out of control ($\arlone$), corresponding to the expected detection delay after a change. Accuracy is assessed via precision ($\precision$), recall ($\recall$), and $\fone$-score, which characterise how reliably changes are detected. These metrics are complementary; see Appendix~\ref{app:experiments-metrics} for more details.

We compare CHASM against six competitors: subspace-based methods mSSA and mSSA-MW \citep{alanqary2021change}; our closest VAR-based competitor the method of \citet{tian2024sequential}; a strong high-dimensional baseline method OCD \citep{chen2022high}; a DMD-based approach CPDMD \citep{khamesi2024online}; and a Bayesian method with a VAR prior BOCPDMS \citep{knoblauch2018spatio}. We follow the authors' recommendations for parameter selection, with details provided in Appendix~\ref{app:experiments-parameters}. Runtime experiments are shown in Appendix~\ref{app:experiments-synthetic-complexity}.

\subsection{Synthetic data simulations}
\label{sec:experiments-synthetic}

\paragraph{Data sets.}
We consider six synthetic data sets based on $\varprocess{\order}{1}$ processes as a canonical and flexible model for multivariate time series, enabling controlled variation of dimension, noise, and dynamics. Four are bivariate ($\dimension=2$) with a common transition parametrisation and differ only by noise: Gaussian, Laplace, Student’s $t_\nu$ ($\nu \in \{3, \dots, 30\}$), and Huber-$\epsilon$ ($\epsilon \in \{0, \dots, 0.40\}$). The remaining two are higher-dimensional ($\dimension \in \{2, \dots, 40\}$): a sparse setting embedding a bivariate system in noise, and a dense setting with full-rank transitions obtained via a similarity transform of a random real canonical form. Full details are in Appendix~\ref{app:experiments-synthetic}. 

\begin{table}[t]
	\centering
	\small
	\caption{\small Best $\fone$-score per method and synthetic data set across each method's parameter grid. Values are mean $\pm$ std from $10^3$ bootstrap replicates. For each data set, the best method is in \textbf{bold} and the second best is \underline{underlined}. CHASM achieves the best performance on 5 out of 6 data sets and ranks second on the sixth.}
	\label{tab:accuracy-synthetic}
	\setlength{\tabcolsep}{4pt}
	\renewcommand{\arraystretch}{1.1}
	\begin{tabular}{l c c c c c c c}
		\toprule
		& {CHASM} & {mSSA} & {mSSA-MW} & {Tian \& S.} & {OCD} & {CPDMD} & {BOCPDMS} \\
		\midrule
		Gaussian   & $\meansd{\mathbf{.75}}{.01}$ & $\meansd{.51}{.01}$ & $\meansd{.49}{.01}$ & $\meansd{\underline{.74}}{.01}$ & $\meansd{.40}{.01}$ & $\meansd{.40}{.01}$ & $\meansd{.58}{.01}$ \\
		Student's $t$   & $\meansd{\mathbf{.73}}{.01}$ & $\meansd{.46}{.01}$ & $\meansd{.48}{.01}$ & $\meansd{\underline{.62}}{.01}$ & $\meansd{.32}{.01}$ & $\meansd{.36}{.01}$ & $\meansd{.53}{.01}$ \\
		Laplace   & $\meansd{\mathbf{.74}}{.01}$ & $\meansd{.48}{.01}$ & $\meansd{.49}{.01}$ & $\meansd{\underline{.64}}{.01}$ & $\meansd{.33}{.01}$ & $\meansd{.36}{.01}$ & $\meansd{.49}{.01}$ \\
		Huber-$\epsilon$   & $\meansd{\mathbf{.73}}{.01}$ & $\meansd{.46}{.01}$ & $\meansd{.47}{.01}$ & $\meansd{\underline{.62}}{.01}$ & $\meansd{.31}{.01}$ & $\meansd{.36}{.01}$ & $\meansd{.55}{.01}$ \\
		High-dim. sparse   & $\meansd{\underline{.84}}{.01}$ & $\meansd{.31}{.01}$ & $\meansd{.28}{.01}$ & $\meansd{\mathbf{.87}}{.01}$ & $\meansd{.32}{.01}$ & $\meansd{.42}{.01}$ & $\meansd{.31}{.01}$ \\
		High-dim. full   & $\meansd{\mathbf{.65}}{.01}$ & $\meansd{.51}{.01}$ & $\meansd{.48}{.01}$ & $\meansd{.34}{.01}$ & $\meansd{.31}{.01}$ & $\meansd{\underline{.53}}{.01}$ & $\meansd{.31}{.01}$ \\
		\bottomrule
	\end{tabular}
\end{table}

Table~\ref{tab:accuracy-synthetic} reports $\fone$-scores on synthetic data. Across the four bivariate data sets, CHASM consistently leads while its performance varies by at most $0.02$ across noise distributions, confirming the nonparametric moment-only assumptions of Section~\ref{sec:theory}. In contrast, Tian \& S(afikhani) is the closest competitor under Gaussian noise but degrades substantially under heavier tails. On high-dimensional sparse data, Tian \& S. leads, as it is specifically constructed for low-rank sparse dynamics; CHASM matches closely this performance without any sparsity assumption. In the high-dimensional full-rank setting, Tian \& S. performs poorly while CHASM leads, even with rank fixed at $\rank=2$ throughout, demonstrating robustness to rank misspecification and to varied transition dynamics. BOCPDMS performs moderately in the bivariate setting but degrades in both high-dimensional cases, consistent with the scalability limitations of Bayesian inference. CPDMD, despite sharing DMD as a building block, underperforms throughout, suggesting that monitoring reconstruction error rather than the spectrum is insufficient. mSSA and mSSA-MW perform consistently below CHASM across all data sets, and OCD performs worst overall, as it ignores temporal dynamics entirely. Detailed $\arlzero/\arlone$ results, where CHASM is also preferred, are provided in Appendix~\ref{app:experiments-results}.

\begin{figure}[t]
	\centering
	\includegraphics[width=\linewidth]{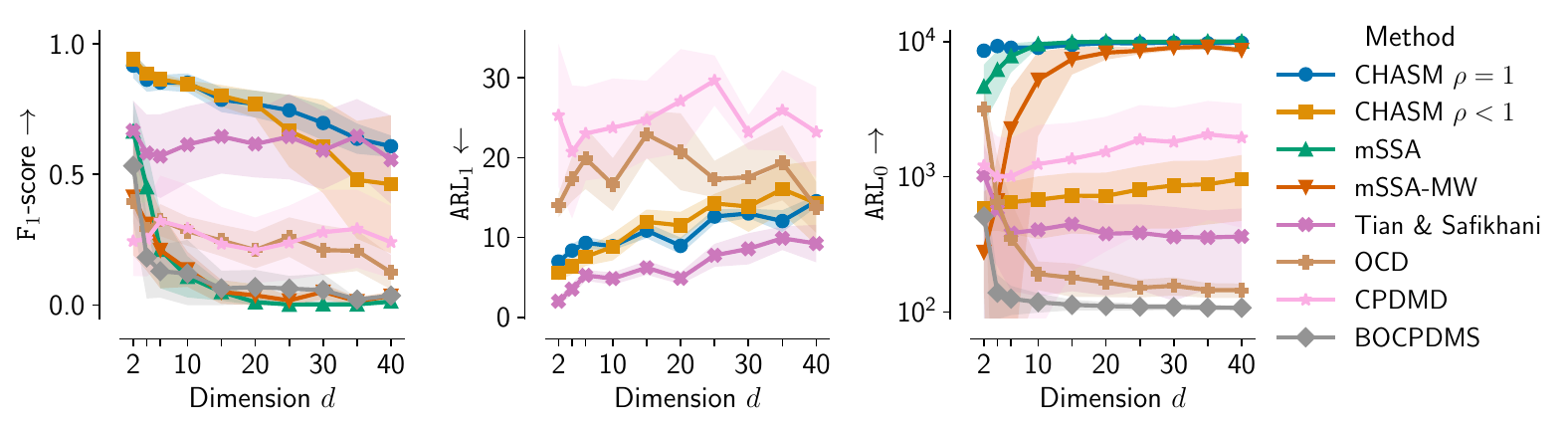}
	\caption{\small $\fone$, $\arlone$, and $\arlzero$ vs. dimension $\dimension$ for the sparse VAR data set. Higher is better for $\fone$ and $\arlzero$, and lower is better for $\arlone$. Shaded bands show mean $\pm$ std over the parameter grid. Methods with near-zero $\fone$ (mSSA, mSSA-MW, BOCPDMS) are omitted from $\arlone$ for readability; the full figure is in Appendix~\ref{app:experiments-results}.}
	\label{fig:dimension}
\end{figure}

Figure~\ref{fig:dimension} shows performance as dimension increases. CHASM with $\rho=1$ yields the best $\fone$-score and $\arlzero$ for all dimensions, with an $\arlone$ that is only slightly larger than Tian \& S. CHASM with $\rho < 1$ has similar performance, except for lower $\arlzero$, consistent with the discussion in Section~\ref{sec:theory}. However, Section~\ref{sec:experiments-real} shows that using $\rho < 1$ yields improved performance for real-world data sets.

\begin{table}[t]
	\centering
	\small
	\caption{\small Best $\fone$-score per method and real-world data set across each method's parameter grid. Values are mean $\pm$ std from each time series. For each data set, the best method is in \textbf{bold} and the second best is \underline{underlined}. CHASM achieves the best performance on 4 out of 5 data sets and ranks second on the fifth.}
	\label{tab:accuracy-real}
	\setlength{\tabcolsep}{4pt}
	\renewcommand{\arraystretch}{1.1}
	\begin{tabular}{l c c c c c c c}
		\toprule
		& {CHASM} & {mSSA} & {mSSA-MW} & {Tian \& S.} & {OCD} & {CPDMD} & {BOCPDMS} \\
		\midrule
		HASC \citep{kawaguchi2011hasc}         & $\meansd{\underline{.35}}{.10}$ & $\meansd{.30}{.10}$ & $\meansd{.33}{.04}$ & $\meansd{.30}{.07}$ & $\meansd{.28}{.10}$ & $\meansd{\mathbf{.47}}{.08}$ & $\meansd{.18}{.05}$ \\
		CIFAR-100 \citep{krizhevsky2009learning}    & $\meansd{\mathbf{.87}}{.17}$ & $\meansd{.56}{.19}$ & $\meansd{.64}{.20}$ & $\meansd{.62}{.19}$ & $\meansd{\underline{.67}}{.15}$ & $\meansd{.61}{.18}$ & $\meansd{.18}{.02}$ \\
		20 Newsgroups \citep{lang1995newsweeder} & $\meansd{\mathbf{.73}}{.20}$ & $\meansd{.50}{.16}$ & $\meansd{.56}{.18}$ & $\meansd{\underline{.67}}{.18}$ & $\meansd{.57}{.16}$ & $\meansd{.50}{.14}$ & $\meansd{.17}{.02}$ \\
		UCF-Crime \citep{sultani2018real}    & $\meansd{\mathbf{.62}}{.20}$ & $\meansd{\underline{.61}}{.19}$ & $\meansd{.50}{.23}$ & $\meansd{.54}{.17}$ & $\meansd{.34}{.19}$ & $\meansd{.55}{.20}$ & $\meansd{.32}{.18}$ \\
		WikiSection \citep{arnold2019sector}  & $\meansd{\mathbf{.50}}{.12}$ & $\meansd{.37}{.14}$ & $\meansd{\underline{.49}}{.10}$ & $\meansd{.36}{.11}$ & $\meansd{.48}{.10}$ & $\meansd{.44}{.11}$ & $\meansd{.16}{.14}$ \\
		\bottomrule
	\end{tabular}
\end{table}

\begin{figure}[t]
	\centering
	\begin{minipage}[t]{0.48\linewidth}
		\centering
		\resizebox{\linewidth}{!}{%
			\includegraphics{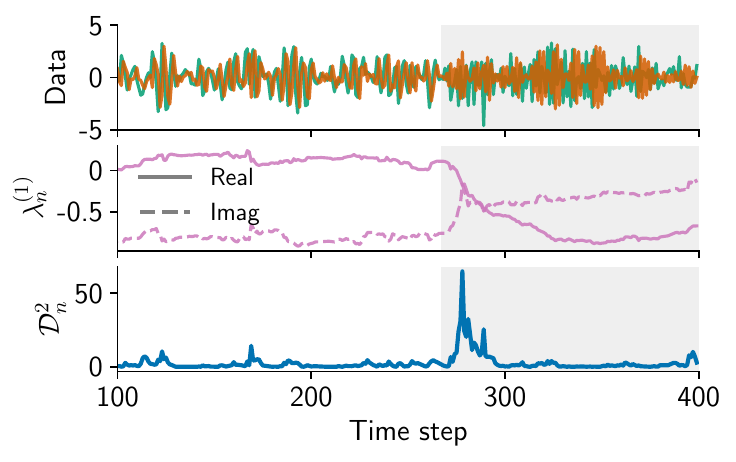}
		}
		\caption{\small Synthetic $\varprocess{2}{1}$ example with a single changepoint (shaded post-change region), along with one of the eigenvalues and CHASM's $\mathcal{D}_n^2$ statistic.
		}
		\label{fig:synthetic-statistic}
	\end{minipage}
	\hfill
	\begin{minipage}[t]{0.48\linewidth}
		\centering
		\resizebox{\linewidth}{!}{%
			\includegraphics{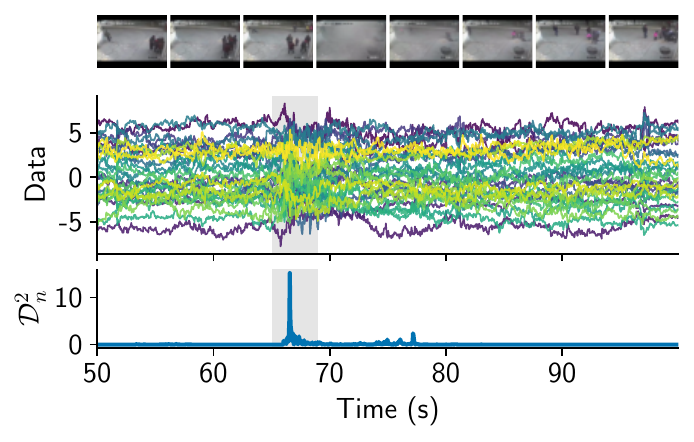}
		}
		\caption{\small Real-world UCF-Crime video example showing sample frames, 32-dim. CLIP-based signal (shaded change region), and CHASM's $\mathcal{D}_n^2$ statistic.
		}
		\label{fig:real-statistic}
	\end{minipage}
\end{figure}

\subsection{Real-world experiments}
\label{sec:experiments-real}

\paragraph{Data sets.}
We evaluate on five real-world data sets across four modalities. HASC provides accelerometry signals; CIFAR-100 and 20 Newsgroups form piecewise i.i.d. sequences of CLIP embeddings \citep{radford2021learning} for images and text; UCF-Crime represents videos as sequences of per-frame CLIP embeddings; and WikiSection embeds articles at the token level using Qwen3-Embedding \citep{zhang2025qwen3}, inducing sequential structure via the autoregressive language model. All data (except HASC) are converted to vector sequences via pretrained models in a fully online, real-time manner, without fine-tuning. As all samples contain annotated changes, $\arlzero/\arlone$ are not meaningful, and we report detection accuracy only. Full details are provided in Appendix~\ref{app:experiments-real}.

\paragraph{Scope and novelty.}
These data sets extend evaluation beyond VAR, covering i.i.d. embedding shifts (CIFAR-100, 20 Newsgroups) and nonlinear temporal dynamics (UCF-Crime, WikiSection, HASC), up to $\dimension=64$. None are naturally VAR-modelled, making this a novel and challenging setting for changepoint detection. All methods are applied fully unsupervised, without task-specific training.

Table~\ref{tab:accuracy-real} reports $\fone$-scores on real data; see Appendix~\ref{app:experiments-results} for results per data set. CHASM is the only method consistently strong across modalities. It reaches clear superior performance on CIFAR-100 and 20 Newsgroups, showing generalisation beyond VAR. On UCF-Crime and WikiSection, with nonlinear temporal dependencies, CHASM again yields superior performance, indicating robustness beyond its assumptions. WikiSection is hardest, with all methods performing relatively poorly. HASC is the only exception, where CPDMD leads, likely due to periodic signals with clear changes in marginal distribution favouring reconstruction-based statistics. No competing method achieves strong performance across all modalities, whereas CHASM remains consistent throughout. Figures~\ref{fig:synthetic-statistic} and~\ref{fig:real-statistic} show how the CHASM statistic $\mathcal{D}_n^2$ reacts to changes in a synthetic and a real-world time series.

\section{Conclusion}
\label{sec:conclusion}

We introduce CHASM, an online method for detecting changes in the transition dynamics of multivariate time series. By monitoring the eigenvalue sequence of a dynamics operator, it yields a direct and computationally efficient operator-level detection statistic. The method requires no parametric noise assumptions or structural constraints on the dynamics. Experiments confirm robustness across diverse noise and regimes, with superior performance on real-world tasks beyond VAR settings.

\paragraph{Limitations.}
While efficient, CHASM uses repeated eigendecompositions; first-order perturbation approximations may reduce cost without degrading performance, pending empirical validation. The optimal choice of $\rho$ varies by setting, so an automatic tuning step would be a natural improvement.
Beyond dynamics operators, the framework of Sections~\ref{sec:methodology-alignment} and~\ref{sec:methodology-detection} naturally extends to general matrix sequences, with applications to covariance or adjacency matrices as a promising direction.

\section*{Acknowledgements}
Victor Khamesi is funded by a Roth Scholarship from the Department of Mathematics, Imperial College London. Ed Cohen acknowledges funding from the EPSRC, grant number EP/X002195/1.

\cleardoublepage

{\small
\bibliographystyle{plainnat}
\bibliography{references}
\nocite{}}

\cleardoublepage

\appendix

\section{Extended background}
\label{app:background}

\subsection{Vector autoregressive processes}
\label{app:background-var}

\begin{definition}
	\label{def:var-process}
	A $\dimension$-dimensional vector autoregressive process of order $\order$, denoted $\varprocess{\dimension}{\order}$, $\dimension, \order \in \N$, is a discrete-time stochastic process $\{ y_t \}$, where $y_t \in \R^{\dimension}$, following the recursion
	\begin{equation}
		\label{eq:var-process}
		y_t = \sum_{k=1}^{\order} \Theta_k \, y_{t-k} + \varepsilon_t \qquad \text{for all } t \in \N,
	\end{equation}
	where $\Theta_1, \dots, \Theta_{\order} \in \R^{\dimension \times \dimension}$ are fixed. The sequence $\{ \varepsilon_t \}$ is a $\dimension$-dimensional white noise process with zero mean, i.e. for all $t, s \in \N$, $\E{\varepsilon_t} = 0$ and $\E{\varepsilon_t \transp{\varepsilon_s}} = \mathbbm{1}_{t=s} \, \Sigma$, with $\Sigma \in \R^{\dimension \times \dimension}$ symmetric positive semi-definite.
\end{definition}

\begin{definition}
	\label{def:stationarity}
	A stochastic process $\{ y_t \}$ is weakly stationary if, for all $t, s \in \N$, $\E{y_t}$ is finite and constant, and $\Cov{y_t}{y_s}$ is finite and depends only on the lag $(t-s)$.
\end{definition}

In the context of VAR processes, the general stationarity conditions of Definition~\ref{def:stationarity} translates into a simple algebraic identity on the transition matrices $\Theta_1, \dots, \Theta_{\order}$.

\begin{lemma}
	\label{thm:var-stationarity}
	A $\varprocess{\dimension}{\order}$ process with transition matrices $\Theta_1, \dots, \Theta_{\order}$ is weakly stationary if and only if
	\begin{equation}
		\label{eq:var-stationarity}
		\det\!\left(I_{\dimension} - \sum_{k=1}^{\order} \Theta_k \, z^k \right) \neq 0
		\qquad \text{for all } z \in \C \text{ such that } \abs{z} \leq 1.
	\end{equation}
\end{lemma}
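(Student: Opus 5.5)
This is the classical stationarity criterion for VAR processes. I would prove it by reducing to the first-order case through the companion form and then applying a spectral argument.

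\emph{Step 1: companion form.} Stack $Y_t = \transp{(\transp{y_t}, \dots, \transp{y_{t-\order+1}})} \in \R^{\dimension\order}$. Then $Y_t = \mathbf{A} Y_{t-1} + E_t$, where $\mathbf{A}$ has first block row $(\Theta_1,\dots,\Theta_\order)$ and identity blocks $I_\dimension$ on the block subdiagonal. The noise is $E_t = \transp{(\transp{\varepsilon_t},0,\dots,0)}$, with covariance $\Sigma_E = \diag(\Sigma,0,\dots,0)$. Since $y_t$ is a coordinate projection of $Y_t$, and each lagged block of $Y_t$ is a shifted copy of $y$, the process $\{y_t\}$ is weakly stationary if and only if $\{Y_t\}$ is.

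\emph{Step 2: determinant identity.} I would show that
\begin{equation*}
\det(I_{\dimension\order} - \mathbf{A} z) = \det\!\left(I_\dimension - \sum_{k=1}^\order \Theta_k z^k\right).
\end{equation*}
The route is block Gaussian elimination: add $z$ times block column $j+1$ to block column $j$, working backwards from $j=\order-1$ down to $j=1$. This clears the subdiagonal and leaves an upper block-triangular matrix whose $(1,1)$ block is $I_\dimension - \sum_k \Theta_k z^k$ and whose other diagonal blocks are $I_\dimension$.

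It follows that $\mu$ is an eigenvalue of $\mathbf{A}$ with $\mu\neq 0$ exactly when $z = 1/\mu$ is a root of the reverse polynomial. Hence condition~\eqref{eq:var-stationarity} is equivalent to $\zeta(\mathbf{A})<1$.

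\emph{Step 3: sufficiency.} Suppose $\zeta(\mathbf{A})<1$. Then $\norm{\mathbf{A}^j} \le C\varrho^j$ for some $\varrho<1$, by Gelfand's formula or the Jordan form. The series $Y_t = \sum_{j\ge0}\mathbf{A}^j E_{t-j}$ therefore converges in $L^2$. It has mean zero and autocovariance $\Cov{Y_{t+h}}{Y_t}=\mathbf{A}^h\sum_j \mathbf{A}^j\Sigma_E\transp{(\mathbf{A}^j)}$, which depends only on $h$. This gives the stationary solution. The paper's index set is $t\in\N$, so I interpret this as the standard convention that the process is the causal (stationary) solution.

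\emph{Step 4: necessity.} This is the main obstacle. Suppose $\mathbf{A}$ has an eigenvalue $\mu$ with $\abs{\mu}\ge1$. Take a left eigenvector $w$ with $\herm{w}\mathbf{A}=\mu\herm{w}$. The scalar process $s_t=\herm{w}Y_t$ then satisfies $s_t=\mu s_{t-1}+\herm{w}E_t$, so
\begin{equation*}
\Var{s_t} \ge \abs{\mu}^2\Var{s_{t-1}}+\herm{w}\Sigma_E w
\end{equation*}
when $E_t$ is uncorrelated with the past, using the causal or initial-condition convention.

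If $\herm{w}\Sigma_E w>0$, the variance grows without bound, contradicting stationarity. The delicate case is degenerate noise ($\Sigma$ only semi-definite, or $w$ orthogonal to the noise direction). There I would use the structure of the companion matrix. Any left eigenvector has the form $\herm{w}=(\herm{u},\mu^{-1}\herm{u}\dots)$ with nonzero first block $u$, and $\herm{w}\Sigma_E w=\herm{u}\Sigma u$. Under a nondegenerate $\Sigma$ this is positive. In the fully degenerate case, the statement should be read, as is standard (e.g.\ Hamilton, Lütkepohl), as characterising existence of a stable stationary causal solution, and I would cite that convention rather than handle pathological zero-noise solutions.
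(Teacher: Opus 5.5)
Your proposal is correct and follows the same route as the source the paper relies on. The paper's proof is only a citation to Hamilton, Section~10.1, and that argument is the companion-form reduction, the identity $\det(I_{\dimension\order}-\mathbf{A}z)=\det(I_\dimension-\sum_{k}\Theta_k z^k)$, and the MA$(\infty)$ representation under $\zeta(\mathbf{A})<1$. Your left-eigenvector argument spells out the necessity direction, which the citation leaves implicit.

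Two small corrections:
\begin{itemize}
\item The block shape $(\herm{u},\mu^{-1}\herm{u},\dots)$ belongs to a right eigenvector. A left eigenvector instead satisfies $\herm{w_{j+1}}=\mu\herm{w_j}-\herm{w_1}\Theta_j$. The only fact you actually use, $w_1\neq 0$, still holds, because $w_1=0$ forces every block to vanish.
\item The degenerate and non-causal cases are genuine counterexamples, not technicalities. With $\Sigma=0$ and $\Theta=I_\dimension$, the process $y_t=y_0$ is stationary. A scalar AR(1) with $\abs{\phi}>1$ has the stationary solution $y_t=-\sum_{j\geq1}\phi^{-j}\varepsilon_{t+j}$. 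So the ``only if'' direction truly needs $\Sigma\succ0$ and causality, which are exactly the hypotheses your Step~4 assumes.
\end{itemize}
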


\begin{proof}
	See \citep{hamilton1994time}, Section 10.1.
\end{proof}

The parameters of interest in a $\varprocess{\dimension}{\order}$ process are the transition matrices ${\Theta_1, \dots, \Theta_{\order}}$, which fully characterise the linear dynamics of the system. We note that it is sufficient to restrict our attention to $\varprocess{\dimension}{1}$ processes by means of a standard reduction.

\begin{lemma}
	\label{thm:varp-var1}
	Let $\{ y_t \}$ be a $\varprocess{\dimension}{\order}$ process as in Definition~\ref{def:var-process}. Then,
	\begin{enumerate}
		\item[(i)] $\{ y_t \}$ admits an equivalent $\varprocess{\dimension \order}{1}$ representation $\{ x_t \}$  via the companion form
		\begin{equation*}
			x_t = \Theta \, x_{t-1} + \tilde \varepsilon_t \qquad \text{for all } t \in \N,
		\end{equation*}
		where $x_t \in \R^{\dimension \order}$ stacks $\order$ consecutive observations of $y_t$, $\Theta \in \R^{\dimension \order \times \dimension \order}$ is the companion matrix constructed from the original autoregressive parameters $\Theta_1, \dots, \Theta_\order$, and $\{ \tilde \varepsilon_t \}$ is a $\dimension \order$-dimensional white noise process.
		\item[(ii)] The process $\{ x_t \}$, and hence $\{ y_t \}$, is weakly stationary if and only if $\zeta(\Theta) < 1$, where $\zeta(\Theta)$ denotes the spectral radius of the companion matrix $\Theta$.
	\end{enumerate}
\end{lemma}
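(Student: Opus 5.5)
For part (i), I will write down the companion construction explicitly. Set $x_t = \transp{(\transp{y_t}, \transp{y_{t-1}}, \dots, \transp{y_{t-\order+1}})} \in \R^{\dimension\order}$. Let $\Theta$ be the block matrix whose first block row is $(\Theta_1, \Theta_2, \dots, \Theta_\order)$, whose subdiagonal blocks are $I_\dimension$, and whose remaining blocks are zero. Set $\tilde\varepsilon_t = \transp{(\transp{\varepsilon_t}, 0, \dots, 0)}$.

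Checking the identity $x_t = \Theta x_{t-1} + \tilde\varepsilon_t$ is a block-by-block verification:
\begin{itemize}
\item the first block row reproduces \eqn~\eqref{eq:var-process};
\item each remaining block row is the tautology $y_{t-k} = y_{t-k}$.
\end{itemize}
Next, $\{\tilde\varepsilon_t\}$ is white noise with mean zero and $\E{\tilde\varepsilon_t\transp{\tilde\varepsilon_s}} = \mathbbm{1}_{t=s}\,\diag(\Sigma, 0, \dots, 0)$, which is positive semi-definite. The two representations are equivalent because $y_t$ is recovered as the first block of $x_t$.

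For part (ii), the key step is the determinant identity
\begin{equation*}
\det(I_{\dimension\order} - \Theta z) = \det\!\left(I_\dimension - \sum_{k=1}^\order \Theta_k z^k\right), \qquad z \in \C.
\end{equation*}
I will prove it by block Gaussian elimination on $I - \Theta z$. Working from the last block column to the first, add $z$ times block column $j+1$ to block column $j$. This clears the $-zI$ subdiagonal entries and leaves a block upper-triangular matrix. Its identity diagonal blocks sit in positions $2,\dots,\order$, and its $(1,1)$ block is $I_\dimension - \sum_k \Theta_k z^k$. Since these column operations have determinant one, the identity follows.

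To pass to eigenvalues, note that for $z \neq 0$ we have $\det(I - \Theta z) = 0$ iff $1/z$ is an eigenvalue of $\Theta$. At $z=0$ both determinants equal $1$. Hence condition~\eqref{eq:var-stationarity} holds iff $\Theta$ has no eigenvalue $\mu$ with $\abs{\mu} \geq 1$, i.e. iff $\zeta(\Theta) < 1$. Combined with Lemma~\ref{thm:var-stationarity}, this gives that $\{y_t\}$ is stationary iff $\zeta(\Theta) < 1$.

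It remains to include $\{x_t\}$ in the equivalence. If $\{y_t\}$ is weakly stationary, then $\{x_t\}$ is too, because its mean and lagged covariances are blocks of those of $y$ at shifted lags, and these depend only on the lag. Conversely, stationarity of $x$ implies stationarity of $y$, since $y_t$ is a fixed linear projection of $x_t$.

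The main obstacle is getting the block elimination in the determinant identity correct, including its ordering and signs. The other steps are routine bookkeeping, citing \citep{hamilton1994time} for the VAR$(1)$ case if needed.
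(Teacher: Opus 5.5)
Your proof is correct and is essentially the argument the paper relies on. The paper's own proof is only a citation of \citep{hamilton1994time}, Section~10.1, which uses the same companion-form construction and the same identity $\det(I_{dp} - \Theta z) = \det(I_d - \sum_{k} \Theta_k z^k)$ (equivalently, that the companion eigenvalues solve $\det(\lambda^p I_d - \Theta_1 \lambda^{p-1} - \cdots - \Theta_p) = 0$) to convert the root condition of Lemma~\ref{thm:var-stationarity} into $\zeta(\Theta) < 1$. Your block column elimination, the $z=0$ versus $z \neq 0$ case split, and the transfer of stationarity between $\{x_t\}$ and $\{y_t\}$ are all sound, so you have simply written that standard argument out in full.
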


\begin{proof}
	See \citep{hamilton1994time}, Section 10.1.
\end{proof}

\begin{remark}
	\label{rem:preliminaries-var-seasonality}
	A seasonal time series, where the current state is a lagged copy of its value $s$ time steps in the past, can be seen as a special case of a VAR process. In particular, $y_t = y_{t-s} + \eta_t$ corresponds to a sparse $\varprocess{\dimension}{s}$ model with $\Theta_s = I_{\dimension}$ and $\Theta_k = 0$ for $k \neq s$.
\end{remark}

\subsection{Dynamic mode decomposition}
\label{app:background-dmd}

Dynamic mode decomposition (DMD) \citep{schmid2010dynamic} is a data-driven spectral method for identifying the dominant spatiotemporal structures of a dynamical system from measurement data. Originally developed for high-dimensional fluid systems, it has since found applications in dimensionality reduction, forecasting, and control \citep{schmid2022dynamic}, and admits an interpretation as a finite-dimensional approximation of the Koopman operator \citep{brunton2022modern}. Unlike methods such as principal or independent component analysis, DMD enforces a joint representation of spatial structure and temporal evolution.

\subsubsection{Modal decomposition of linear dynamics}

We assume the system evolves according to an ordinary differential equation 
\begin{equation*}
	\frac{\mathrm{d}x}{\mathrm{d}t} = \mathcal A x,
\end{equation*}
where $x(t) \in \R^\dimension$ is the state and $\mathcal{A} \in \R^{\dimension \times \dimension}$ is a linear operator. For a given initial condition $x(0)$, the solution is given by $x(t) = e^{\mathcal{A} t} x(0)$. In particular, if $\mathcal{A} = \Phi \, \Omega \, \Phi^{-1}$ is diagonalisable, then 
\begin{equation*}
	x(t) = \sum_{i=1}^\dimension \phi_i e^{\omega_i t} b_i,
\end{equation*}
where $\Phi = \left( \phi_1, \dots, \phi_\dimension \right) \in \C^{\dimension \times \dimension}$ with eigenvectors $\phi_i \in \C^\dimension$, $\Omega = \diag{\!(\omega_1, \dots, \omega_\dimension)} \in \C^{\dimension \times \dimension}$ with eigenvalues $\omega_i \in \C$, and $b = \Phi^{-1} x(0) \in \C^\dimension$. In practice, data are observed at discrete times $t_n = n \Delta t$, yielding snapshots $x_n = x(t_n)$. The continuous-time dynamics are approximated by discrete-time linear map 
\begin{equation*}
	x_{n+1} = A \, x_n, \qquad A = e^{\mathcal{A} \Delta t},
\end{equation*}
with solution $x_n = A^{n-1} x_1$ for an initial condition $x_1 \in \R^\dimension$. If $A = \Phi \, \Lambda \, \Phi^{-1}$, then
\begin{equation*}
	x_n = \sum_{i=1}^\dimension \phi_i \lambda_i^{n-1} b_i,
\end{equation*}
where $\lambda_i = e^{\omega_i \Delta t}$ and $b = \Phi^{-1} x_1$. Each eigenpair $(\phi_i, \lambda_i)$ corresponds to a spatiotemporal mode: $\phi_i$ encodes spatial structure and $\lambda_i$ encodes temporal dynamics through its modulus (growth or decay rate) and argument (oscillation frequency).

\section{Proofs of main results}
\label{app:proofs}

\subsection{Auxiliary results}
\label{app:proofs-aux}

\begin{definition}
	\label{def:asymptotic-boundedness}
	Let $\{a_n\}$ be a sequence of positive real values.
	\begin{enumerate}
		\item[(i)] For a sequence of \emph{deterministic} matrices $\{A_n\} \subset \R^{k \times k}$, we write $A_n = \mathcal{O}(a_n)$ as $n \to \infty$ if there exist $c > 0$ and $n_0 \geq 1$ such that
		\begin{equation*}
			\norm{A_n} \leq c \, a_n, \qquad \text{for all } n > n_0.
		\end{equation*}
		\item[(ii)] For a sequence of \emph{random} matrices $\{X_n\} \subset \R^{k \times k}$, we write $X_n = \mathcal{O}_p(a_n)$ as $n \to \infty$ if, for every $\epsilon > 0$, there exist $m_\epsilon > 0$ and $n_\epsilon \geq 1$ such that
		\begin{equation*}
			\P{\norm{X_n} > m_\epsilon \, a_n} < \epsilon, \qquad \text{for all } n > n_\epsilon.
		\end{equation*}
	\end{enumerate}
\end{definition}

\begin{lemma}[Weyl's inequality]
	\label{thm:weyl-inequality}
	Let $A, B \in \C^{k \times k}$ be Hermitian and let the eigenvalues $\lambda_i(A)$, $\lambda_i(B)$, and $\lambda_i(A+B)$ be arranged in increasing order
	\begin{equation*}
		\lambda_{\min} = \lambda_1 \leq \lambda_2 \leq \dots \leq \lambda_{k-1} \leq \lambda_k = \lambda_{\max}.
	\end{equation*}
	For each $i \in \{ 1, \dots, k \}$, we have
	\begin{equation*}
		\lambda_i(A) + \lambda_1(B) \leq \lambda_i(A+B) \leq \lambda_i(A) + \lambda_n(B).
	\end{equation*}
\end{lemma}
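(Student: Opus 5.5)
Weyl's inequality is a classical result, and I would prove it by the Courant--Fischer min--max characterisation. Note that in the upper bound the index $\lambda_n(B)$ should read $\lambda_k(B) = \lambda_{\max}(B)$. For Hermitian $M \in \C^{k \times k}$ with eigenvalues in increasing order,
\begin{equation*}
	\lambda_i(M) = \min_{\dim \mathcal{V} = i} \; \max_{x \in \mathcal{V},\, \norm{x}=1} \herm{x} M x .
\end{equation*}
Both inequalities then follow from the pointwise Rayleigh-quotient sandwich
\begin{equation*}
	\herm{x}Ax + \lambda_1(B) \leq \herm{x}(A+B)x \leq \herm{x}Ax + \lambda_k(B)
\end{equation*}
for every unit vector $x$. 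This sandwich is the elementary bound $\lambda_1(B) \leq \herm{x}Bx \leq \lambda_k(B)$, obtained by expanding $x$ in an orthonormal eigenbasis of $B$ given by the spectral theorem.

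The steps run as follows. First, I state (or briefly justify) Courant--Fischer. To justify it, let $u_1,\dots,u_k$ be an orthonormal eigenbasis of $M$. Choosing $\mathcal{V} = \mathrm{span}(u_1,\dots,u_i)$ shows that the min--max is at most $\lambda_i(M)$. Conversely, any $i$-dimensional $\mathcal{V}$ meets $\mathrm{span}(u_i,\dots,u_k)$ nontrivially by a dimension count ($i + (k-i+1) > k$), and on that intersection the Rayleigh quotient is at least $\lambda_i(M)$. Second, I fix any $i$-dimensional subspace $\mathcal{V}$ and take the maximum over unit $x \in \mathcal{V}$ in the sandwich. Because adding a constant commutes with $\max$, this gives
\begin{equation*}
	\max_{x \in \mathcal{V}} \herm{x}Ax + \lambda_1(B) \leq \max_{x\in\mathcal{V}} \herm{x}(A+B)x \leq \max_{x\in\mathcal{V}} \herm{x}Ax + \lambda_k(B).
\end{equation*}
Third, I take the minimum over all such $\mathcal{V}$. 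Monotonicity of $\min$ together with Courant--Fischer, applied to $A$ and to $A+B$ (both Hermitian), yields $\lambda_i(A) + \lambda_1(B) \leq \lambda_i(A+B) \leq \lambda_i(A) + \lambda_k(B)$.

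The only nontrivial ingredient is the dimension-counting step inside Courant--Fischer; everything else is monotonicity of max and min. Alternatively, I could simply cite a standard reference for Weyl's inequality.
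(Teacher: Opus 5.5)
Your proof is correct, and you are right that the $\lambda_n(B)$ in the upper bound should read $\lambda_k(B)=\lambda_{\max}(B)$. The paper gives no argument of its own and simply cites Horn and Johnson (Theorem~4.3.1); that reference proves the result by this same Courant--Fischer argument with the pointwise bound $\lambda_1(B) \leq x^{\mathsf{H}}Bx \leq \lambda_k(B)$, so your proposal is a self-contained version of the cited proof.
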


\begin{proof}
	See \citep{horn2012matrix}, Theorem~4.3.1.
\end{proof}

\subsection{Optimal eigenvalue alignment}
\label{app:proofs-alignment}

\hungarianotpermutation*

\begin{proof}
	By definition,
	\begin{equation*}
		W_2^2(\nu_{n-1}, \nu_n) = \inf_{\gamma \in \Gamma(\nu_{n-1}, \nu_n)} \int_{\C \times \C} \abs{z_1 - z_2}^2 \, \mathrm{d}\gamma(z_1, z_2),
	\end{equation*}
	where $\Gamma(\nu_{n-1}, \nu_n)$ denotes the set of couplings of $\nu_{n-1}$ and $\nu_n$, i.e. the set of joint probability measures on $\C \times \C$ with marginals $\nu_{n-1}$ and $\nu_n$. 
	
	Since both measures are discrete and uniform over $\rank$ atoms, any coupling $\gamma$ is characterised by a matrix $\Pi \in \R^{\rank \times \rank}$ with $\Pi_{ij} = \gamma\left(\{\hat\lambda_{n-1}^{(i)}\}, \{\hat\lambda_n^{(j)}\}\right)$. The integral thus becomes a discrete sum
	\begin{equation*}
		\int_{\C \times \C} \abs{z_1 - z_2}^2 \, \mathrm{d}\gamma(z_1, z_2) = \sum_{i=1}^{\rank} \sum_{j=1}^{\rank} \abs{\hat \lambda_{n-1}^{(i)} - \hat \lambda_n^{(j)}}^2 \Pi_{ij}.
	\end{equation*}
	The marginal constraints require
	\begin{equation*}
		\sum_{j=1}^\rank \Pi_{ij} = \frac{1}{\rank} \quad \text{for all } i, \qquad \sum_{i=1}^\rank \Pi_{ij} = \frac{1}{\rank} \quad \text{for all } j,
	\end{equation*}
	so $P = \rank \,\Pi \in \mathcal B_\rank$ is a doubly stochastic matrix, where
	\begin{equation}
		\label{eq:doubly-stochastic-matrices}
		\mathcal{B}_{\rank} = \left\{ M \in \R^{\rank \times \rank} ~ : ~ M_{ij} \geq 0, ~ \sum_{j=1}^{\rank} M_{ij} = \sum_{i=1}^{\rank} M_{ij} = 1 \right\}
	\end{equation}
	The transport cost becomes
	\begin{equation*}
		W_2^2(\nu_{n-1}, \nu_n) = \min_{P \in \mathcal{B}_\rank} \frac{1}{\rank} \sum_{i,j=1}^\rank \abs{\hat\lambda_{n-1}^{(i)} - \hat\lambda_n^{(j)}}^2 P_{ij}.
	\end{equation*}
	This is a linear program in $P$, so its minimum is attained at a vertex of $\mathcal{B}_\rank$. By Birkhoff-von Neumann, the vertices of $\mathcal{B}_\rank$ is precisely the set of permutation matrices $\mathcal S_\rank$, where
	\begin{equation}
		\label{eq:permutation-matrices}
		\mathcal{S}_{\rank} = \left\{ M \in \{0, 1\}^{\rank \times \rank} ~ : ~ \sum_{j=1}^{\rank} M_{ij} = \sum_{i=1}^{\rank} M_{ij} = 1 \right\}
	\end{equation}
	Identifying each matrix of $\mathcal{S}_\rank$ with $\pi \in S_r$ via $P_{ij} = \mathbbm{1}_{j = \pi(i)}$ yields
	\begin{equation*}
		W_2^2(\nu_{n-1}, \nu_n) = \min_{\pi \in S_\rank} \frac{1}{\rank} \sum_{i=1}^\rank \abs{\lambda_{n-1}^{(i)} - \lambda_n^{\pi(i)}}^2.
	\end{equation*}
\end{proof}

\hungarianassignment*

\begin{proof}
	For any permutation matrix $\Pi \in \mathcal{S}_{\rank}$,
	\begin{align*}
		\norm{\transp{\Pi} \Lambda_n \Pi - \Lambda_{n-1}}_F^2 &= \Tr{\!\left(\herm{\left(\transp{\Pi} \Lambda_n \Pi - \Lambda_{n-1}\right)} \left(\transp{\Pi} \Lambda_n \Pi - \Lambda_{n-1}\right)\right)} \\
		&= \Tr{\!\left(\transp{\Pi} \conj{\Lambda_n} \Lambda_n \Pi\right)} - \Tr{\!\left(\transp{\Pi} \conj{\Lambda_n} \Pi \Lambda_{n-1} + \conj{\Lambda_{n-1}} \transp{\Pi} \Lambda_n \Pi\right)} + \Tr{\!\left(\conj{\Lambda_{n-1}} \Lambda_{n-1}\right)}.
	\end{align*}
	For the first term, we note that
	\begin{equation*}
		\Tr{\!\left( \transp{\Pi} \conj{ \Lambda_n}  \Lambda_n \Pi \right)} = \Tr{\!\left(  \Lambda_n \Pi \transp{\Pi} \conj{\Lambda_n} \right)} = \Tr{\!\left(\Lambda_n \conj{\Lambda_n} \right)} = \norm{ \Lambda_n}_F^2,
	\end{equation*}
	which is independent of $\Pi$. For the cross term, we note that $\conj{\Lambda_{n-1}} \transp{\Pi} \Lambda_n \Pi = \conj{\left(\transp{\Pi} \conj{\Lambda_n} \Pi \Lambda_{n-1}\right)}$, so using $\Tr{\!\left(M + \conj{M}\right)} = 2\Tr{\!\left(\Rpart{\!\left(M\right)}\right)}$,
	\begin{equation*}
		\Tr{\!\left(\transp{\Pi} \conj{\Lambda_n} \Pi \Lambda_{n-1} + \conj{\Lambda_{n-1}} \transp{\Pi} \Lambda_n \Pi\right)} = 2\Tr{\!\left(\Rpart{\!\left(\transp{\Pi} \conj{\Lambda_n} \Pi \Lambda_{n-1}\right)}\right)}.
	\end{equation*}
	Since $\Lambda_n$ and $\Lambda_{n-1}$ are diagonal, $\transp{\Pi}\conj{\Lambda_n}\Pi$ is diagonal with $i$-th entry $\conj{\big(\lambda_n^{(\pi(i))}\big)}$, and the product $\transp{\Pi}\conj{\Lambda_n}\Pi\Lambda_{n-1}$ is diagonal with $i$-th entry $\conj{\big(\lambda_n^{(\pi(i))}\big)}\lambda_{n-1}^{(i)}$. Therefore,
	\begin{equation*}
		\Tr{\!\left(\Rpart{\!\left(\transp{\Pi}\conj{\Lambda_n}\Pi\, \Lambda_{n-1}\right)}\right)} = \sum_{i=1}^\rank \Rpart{\!\left(\conj{\big(\lambda_n^{(\pi(i))}\big)} \lambda_{n-1}^{(i)}\right)} = \Tr{\!\left(\Pi \Rpart{\!\left(\conj{\lambda_n} \transp{\lambda_{n-1}}\right)}\right)},
	\end{equation*}
	where $\conj{\lambda_n}\transp{\lambda_{n-1}} \in \C^{\rank\times\rank}$ is the outer product of the eigenvalue vectors, with $(i,j)$-th entry $\conj{\big(\lambda_n^{(i)}\big)}\lambda_{n-1}^{(j)}$. Combining all three terms,
	\begin{equation*}
		\norm{\transp{\Pi}\Lambda_n\Pi - \Lambda_{n-1}}_F^2 = \norm{\Lambda_n}_F^2 + \norm{\Lambda_{n-1}}_F^2 - 2\Tr{\!\left(\Pi\Rpart{\!\left(\conj{\lambda_n} \transp{\lambda_{n-1}}\right)}\right)}.
	\end{equation*}
	Since $\norm{\Lambda_n}_F^2 + \norm{\Lambda_{n-1}}_F^2$ is constant with respect to $\Pi$,
	\begin{equation*}
		\Pi_n = \argmin_{\Pi \in \mathcal{S}_{\rank}} \norm{\transp{\Pi}\Lambda_n\Pi - \Lambda_{n-1}}_F^2 = \argmax_{\Pi \in \mathcal{S}_{\rank}} \Tr{\!\left(\Pi\Rpart{\!\left(\conj{\lambda_n} \transp{\lambda_{n-1}}\right)}\right)}.
	\end{equation*}
	The new maximisation problem is addressed by relaxing the feasible set $\mathcal{S}_{\rank}$ into its convex hull. Since the objective is linear in $\Pi$, the optimal solution of this relaxed problem is guaranteed to lie on the boundary of the feasible region \citep{bertsimas1997introduction}. Therefore,
	\begin{equation*}
		\Pi_n = \argmax_{\Pi \in \mathcal{B}_{\rank}} \left\{\Tr{\!\left(\Pi\Rpart{\!\left(\conj{\lambda_n} \transp{\lambda_{n-1}}\right)}\right)}\right\},
	\end{equation*}
	where the maximum is now taken over the larger set $\mathcal{B}_{\rank} \supset \mathcal{S}_{\rank}$.
\end{proof}

\subsection{Complex multivariate changepoint detection}
\label{app:proofs-detection}

\ewmadecomposition*

\begin{proof}
	Note that $\underline{z}_n - \mu_{\underline{z}_n} = (z_n - \mu_{v_n},\, \conj{(z_n - \mu_{v_n})}) \in \C^{2\rank}$. The augmented covariance writes
	\begin{equation*}
		\Sigma_{\underline{z}_n} = \beta_n \begin{pmatrix} 
			\Sigma_{v_n} & \tilde\Sigma_{v_n} \\ 
			\conj{\tilde\Sigma_{v_n}} & \conj{\Sigma_{v_n}} 
		\end{pmatrix}.
	\end{equation*}
	Applying the block matrix inversion formula with Schur complement $\mathcal{C}_n = \Sigma_{v_n} - \tilde\Sigma_{v_n} (\conj{\Sigma_{v_n}})^{-1}\conj{\tilde\Sigma_{v_n}}$ gives
	\begin{equation*}
		\Sigma_{\underline{z}_n}^{-1} = \frac{1}{\beta_n}\begin{pmatrix}
			\mathcal{C}_n^{-1} & \mathcal{Q}_n \\
			\conj{\mathcal{Q}_n} & (\conj{\mathcal{C}_n})^{-1}
		\end{pmatrix},
	\end{equation*}
	where $\mathcal{Q}_n = -\mathcal{C}_n^{-1}\tilde\Sigma_{v_n} (\conj{\Sigma_{v_n}})^{-1}$. Substituting into \eqn~\eqref{eq:change-statistic} and expanding the quadratic form,
	\begin{align*}
		\mathcal{D}_n^2 &= \frac{1}{\beta_n}
		\begin{pmatrix} \herm{(z_n-\mu_{v_n})} & \transp{(z_n-\mu_{v_n})}
		\end{pmatrix}
		\begin{pmatrix} \mathcal{C}_n^{-1} & \mathcal{Q}_n \\
			\conj{\mathcal{Q}_n} & (\conj{\mathcal{C}_n})^{-1} \end{pmatrix}
		\begin{pmatrix} z_n-\mu_{v_n} \\ \conj{(z_n-\mu_{v_n})}
		\end{pmatrix} \\
		&= \frac{1}{\beta_n}\Big(
		\herm{(z_n-\mu_{v_n})}\mathcal{C}_n^{-1}(z_n-\mu_{v_n})
		+ \herm{(z_n-\mu_{v_n})}\mathcal{Q}_n\conj{(z_n-\mu_{v_n})} \\
		&\qquad\quad + \transp{(z_n-\mu_{v_n})}\conj{\mathcal{Q}_n}(z_n-\mu_{v_n})
		+ \transp{(z_n-\mu_{v_n})}(\conj{\mathcal{C}_n})^{-1}
		\conj{(z_n-\mu_{v_n})}\Big).
	\end{align*}
	The first and fourth terms are complex conjugates, as are the second and third. Since $w + \conj{w} = 2\Rpart{w}$ for any $w \in \C$,
	\begin{equation*}
		\mathcal{D}_n^2 = \frac{2}{\beta_n}\Rpart{\!\left(
			\herm{(z_n-\mu_{v_n})}\mathcal{C}_n^{-1}(z_n-\mu_{v_n}) +
			\herm{(z_n-\mu_{v_n})}\mathcal{Q}_n\conj{(z_n-\mu_{v_n})}
			\right)}. \qedhere
	\end{equation*}
\end{proof}

\subsection{Asymptotic analysis}
\label{app:proofs-bias}

\begin{lemma}
	\label{thm:gamma-pd}
	Under the model assumptions of Section~\ref{sec:theory}, the stationary covariance matrix $\Gamma = \E{x_t \transp{x_t}}$ is positive definite, i.e. $\Gamma \succ 0$.
\end{lemma}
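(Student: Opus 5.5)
The plan is to show that $\Gamma$ is a convergent series of positive semi-definite terms, one of which is already positive definite. Under the standing assumptions of Section~\ref{sec:theory}, $\zeta(\Theta)<1$ and the process is stationary. This lets us use the causal (moving-average) representation
\begin{equation*}
	x_t = \sum_{k=0}^{\infty} \Theta^k \varepsilon_{t-k},
\end{equation*}
which converges in mean square. Convergence follows because $\zeta(\Theta)<1$ gives, for any $\zeta(\Theta)<q<1$, a constant $c$ with $\norm{\Theta^k}\le c\,q^k$ (Gelfand's formula). Hence $\sum_k \norm{\Theta^k}^2\Tr(\Sigma)<\infty$.

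The next step is to compute the covariance. The noise is a martingale difference sequence, so for $j<k$ we have $\E{\varepsilon_{t-k}\transp{\varepsilon_{t-j}}} = \E{\varepsilon_{t-k}\,\E{\transp{\varepsilon_{t-j}}\mid\mathcal F_{t-j-1}}}=0$, since $\varepsilon_{t-k}$ is $\mathcal F_{t-j-1}$-measurable (it equals $x_{t-k}-\Theta x_{t-k-1}$). The cross terms therefore vanish, and
\begin{equation*}
	\Gamma = \sum_{k=0}^\infty \Theta^k \Sigma \transp{(\Theta^k)}.
\end{equation*}
Interchanging expectation and the infinite sum is justified by $L^2$ convergence. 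Equivalently, this series is the unique solution of the Lyapunov equation $\Gamma=\Theta\Gamma\transp{\Theta}+\Sigma$, which is unique because $\zeta(\Theta)<1$ makes $I-\Theta\otimes\Theta$ invertible.

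Positivity then follows directly. For any $u\neq 0$,
\begin{equation*}
	\transp{u}\Gamma u = \transp{u}\Sigma u + \sum_{k\ge1}\norm{\Sigma^{1/2}\transp{(\Theta^k)}u}^2 \ge \lambda_{\min}(\Sigma)\norm{u}^2>0,
\end{equation*}
using $\Sigma\succ0$. So $\Gamma\succeq\Sigma\succ0$; in Loewner order this is Weyl's inequality (Lemma~\ref{thm:weyl-inequality}) applied to $\Gamma=\Sigma+\Theta\Gamma\transp{\Theta}$ with the second term PSD.

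The main obstacle is rigour in the representation step, namely that stationarity allows us to discard the initial-condition term $\Theta^{t}x_0$. If the process is only assumed to start at $x_0$, the cleanest route avoids the series altogether. Stationarity gives $\Gamma=\E{x_t\transp{x_t}}$, and expanding $x_t=\Theta x_{t-1}+\varepsilon_t$ gives $\Gamma=\Theta\Gamma\transp{\Theta}+\Sigma+\Theta\E{x_{t-1}\transp{\varepsilon_t}}+\E{\varepsilon_t\transp{x_{t-1}}}\transp{\Theta}$. The cross terms vanish because $x_{t-1}\in\mathcal F_{t-1}$ and $\E{\varepsilon_t\mid\mathcal F_{t-1}}=0$, which yields the Lyapunov equation. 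Since $\Theta\Gamma\transp{\Theta}\succeq0$, it follows immediately that $\Gamma\succeq\Sigma\succ0$. I would present this second argument as the proof, since it needs only finiteness of second moments and stationarity.
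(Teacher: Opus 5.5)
Your proposal is correct. Your first argument is the paper's proof: write $x_t=\sum_{j\ge0}\Theta^j\varepsilon_{t-j}$, take second moments, and read off $\Gamma=\Sigma+\sum_{j\ge1}\Theta^j\Sigma\transp{(\Theta^j)}\succeq\Sigma\succ0$. You also supply a step the paper leaves implicit, namely why the cross terms $\E{\varepsilon_{t-k}\transp{\varepsilon_{t-j}}}$ vanish.

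The argument you choose to present is genuinely different. You take second moments directly in $x_t=\Theta x_{t-1}+\varepsilon_t$ and remove the cross terms with $\E{\varepsilon_t\mid\mathcal{F}_{t-1}}=0$, with integrability from Cauchy--Schwarz. Time-invariance of $\E{x_t\transp{x_t}}$ then yields the Lyapunov equation, and $\Theta\Gamma\transp{\Theta}\succeq0$ gives $\Gamma\succeq\Sigma$ at once.

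This route uses only weak stationarity, finite second moments and the one-step martingale-difference property. It needs no series convergence and no geometric bound on $\norm{\Theta^j}$. It also does not need the noise, and its uncorrelatedness, to extend to indices $s\le0$ outside the filtration $\mathcal{F}_t=\sigma(x_0,\dots,x_t)$, which the $\mathrm{MA}(\infty)$ representation tacitly requires. As a bonus, it derives, rather than asserts, the Lyapunov equation quoted at the start of Section~\ref{sec:theory}.

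What the paper's route buys is the explicit representation \eqn~\eqref{eq:ma-infty} together with the decay $\norm{\Theta^j}\le\alpha\beta^j$. Both are reused immediately in Lemma~\ref{thm:finite-p-moments} and in the $u_h,v_h$ decomposition of Proposition~\ref{thm:gamma-consistency}, so in context that set-up is not wasted.
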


\begin{proof}
	Recall the $\text{MA}(\infty)$ representation of the stationary process $\{ x_t \}$,
	\begin{equation}
		\label{eq:ma-infty}
		x_t = \sum_{j \geq 0} \Theta^j \varepsilon_{t-j}.
	\end{equation}
	Define the $\ell_p$ norm of a random vector $y$ as $\norm{y}_p = \E{\norm{y}^p}^{1/p}$. Since $\zeta(\Theta) < 1$, the series $\sum_{j \geq 0} \Theta^j \varepsilon_{t-j}$ converges absolutely in $\ell_2$, and
	\begin{equation*}
		\Gamma = \E{x_t \transp{x_t}} = \sum_{j \geq 0} \Theta^j \Sigma \transp{\left( \Theta^j \right)} = \Sigma + \sum_{j \geq 1} \Theta^j \Sigma \transp{\left( \Theta^j \right)}.
	\end{equation*}
	Therefore, $\Gamma \succeq \Sigma \succ 0$ since all other terms are positive semi-definite.
\end{proof}

\begin{lemma}
	\label{thm:finite-p-moments}
	Let $p \geq 1$. If the noise $\{ \varepsilon_t \}$ has finite $p$-th moments, i.e., $\E{\norm{\varepsilon_t}^p} < \infty$, then the state process $\{ x_t \}$ also has finite $p$-th moments, i.e., $\E{\norm{x_t}^p} < \infty$.
\end{lemma}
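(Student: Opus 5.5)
We use the MA($\infty$) representation \eqref{eq:ma-infty} from the proof of Lemma~\ref{thm:gamma-pd}, $x_t = \sum_{j \geq 0} \Theta^j \varepsilon_{t-j}$, and bound its $L^p$ norm by the triangle inequality (Minkowski) in the space of random vectors with norm $\norm{y}_p = \E{\norm{y}^p}^{1/p}$. Since $p \geq 1$, this is a genuine norm, so for each partial sum
\begin{equation*}
	\Bigl\lVert \sum_{j=0}^{J} \Theta^j \varepsilon_{t-j} \Bigr\rVert_p \leq \sum_{j=0}^{J} \norm{\Theta^j \varepsilon_{t-j}}_p \leq \sum_{j=0}^{J} \norm{\Theta^j}\, \norm{\varepsilon_{t-j}}_p .
\end{equation*}
Under the stationarity assumptions of Section~\ref{sec:theory} the noise is identically distributed in its moment, so $\norm{\varepsilon_{t-j}}_p = \E{\norm{\varepsilon_t}^p}^{1/p} =: m_p < \infty$ for every $j$. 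If one prefers not to assume identical distribution, a uniform bound $\sup_t \E{\norm{\varepsilon_t}^p} < \infty$ suffices and is what is implicitly meant.

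The key quantitative step is summability of $\norm{\Theta^j}$. Since $\zeta(\Theta) < 1$, Gelfand's formula $\lim_j \norm{\Theta^j}^{1/j} = \zeta(\Theta)$ gives, for any $\eta \in (\zeta(\Theta), 1)$, a constant $K_\eta$ with $\norm{\Theta^j} \leq K_\eta \eta^j$ for all $j$. Alternatively, one can use a Jordan form bound $\norm{\Theta^j} \leq C j^{\dimension-1} \zeta(\Theta)^j$. Hence $\sum_{j\geq 0} \norm{\Theta^j} \leq K_\eta/(1-\eta) < \infty$. The partial sums are therefore Cauchy in $L^p$, and the series converges in $L^p$ to a limit with $\norm{x_t}_p \leq m_p K_\eta/(1-\eta)$.

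The only subtle point, which I expect to be the main obstacle, is justifying that this $L^p$ limit coincides with $x_t$. The series also converges in $L^2$ (or almost surely, by absolute convergence of $\sum_j \norm{\Theta^j}\norm{\varepsilon_{t-j}}$, whose expectation is finite). Limits in probability are unique, so the $L^p$ limit equals the stationary solution $x_t$ used in Lemma~\ref{thm:gamma-pd}.

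For completeness, one can instead argue by monotone convergence. Note that $\norm{x_t} \leq \sum_j \norm{\Theta^j}\norm{\varepsilon_{t-j}}$ almost surely. Then apply Minkowski's inequality to the nonnegative series, which gives $\E{\norm{x_t}^p} \leq \bigl(m_p \sum_j \norm{\Theta^j}\bigr)^p < \infty$. This avoids any identification issue and concludes the proof.
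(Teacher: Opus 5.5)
Your proposal is correct and follows the paper's proof: apply Minkowski's inequality to the MA($\infty$) representation and combine it with a geometric bound $\norm{\Theta^j} \leq K_\eta \eta^j$ to get $\norm{x_t}_p \leq m_p K_\eta/(1-\eta)$. Your extra remarks tidy up two points the paper passes over silently: identifying the $L^p$ limit with $x_t$ (or bypassing that via monotone convergence), and observing that only $\sup_t \E{\norm{\varepsilon_t}^p} < \infty$ is needed rather than stationarity of the noise.
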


\begin{proof}
	By Minkowski's inequality for the $\ell_p$ norm and the $\text{MA}(\infty)$ representation of the stationary process $\{ x_t \}$ in \eqn~\eqref{eq:ma-infty},
	\begin{align*}
		\norm{x_t}_p &\leq \sum_{j \geq 0} \norm{\Theta^j \varepsilon_{t-j}}_p \\
		&\leq \sum_{j \geq 0} \norm{\Theta^j} \norm{\varepsilon_{t-j}}_p \\
		&= \norm{\varepsilon_0}_p \sum_{j \geq 0} \norm{\Theta^j},
	\end{align*}
	where we used that $\norm{\varepsilon_{t-j}}_p = \norm{\varepsilon_0}_p$ by stationarity of $\{ \varepsilon_t \}$. Since $\zeta(\Theta) < 1$, there exist constants $\alpha > 0$ and $\beta \in \left(\zeta(\Theta), 1\right)$ such that $\norm{\Theta^j} \leq \alpha \beta^j$ for all $j \geq 0$. Therefore,
	\begin{equation*}
		\norm{x_t}_p \leq \norm{\varepsilon_0}_p \frac{\alpha}{1-\beta} < \infty,
	\end{equation*}
	which implies $\E{\norm{x_t}^p} < \infty$.
\end{proof}

\subsubsection{Unweighted estimator}
\label{app:proofs-bias-unweighted}

Throughout this section, we consider the updates of \eqn~\eqref{eq:online-dmd-updates} with fixed $\rho = 1$.

\begin{lemma}
	\label{thm:dmd-estimator-conditional-bias}
	For any $n$, the conditional bias of $\dmdop_n$ given $\mathcal{F}_{n-1}$ is non-zero almost surely. Specifically,
	\begin{equation*}
		\E{\dmdop_n \mid \mathcal{F}_{n-1}} - \Theta = \Xi_n \Gamma_n^{-1} \neq 0 \qquad \text{almost surely},
	\end{equation*}
	where
	\begin{equation*}
		\label{eq:dmd-estimator-conditional-factorisation}
		\Xi_n = \frac{1}{n} \sum_{t=1}^{n-1} \varepsilon_t \transp{x_{t-1}}, \qquad \Gamma_n = \frac{1}{n} \sum_{t=1}^n x_{t-1} \transp{x_{t-1}}.
	\end{equation*}
\end{lemma}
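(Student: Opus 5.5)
The plan is to identify the recursion \eqn~\eqref{eq:online-dmd-updates} with $\rho=1$ as an exact batch least-squares estimator, and then substitute the VAR model. The conditional bias will then appear as a noise cross term, and the step most likely to need an extra assumption is the ``almost surely non-zero'' claim.

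\textbf{Step 1: $\Gamma_n$ is the running second moment.} Set $\Gamma_n = \frac1n\sum_{t=1}^n x_{t-1}\transp{x_{t-1}}$, so that
\begin{equation*}
\Gamma_n = \tfrac{n-1}{n}\left(\Gamma_{n-1} + \tfrac{1}{n-1}x_{n-1}\transp{x_{n-1}}\right).
\end{equation*}
Applying Sherman--Morrison to the rank-one perturbation in the bracket gives
\begin{equation*}
\Gamma_n^{-1}=\tfrac{n}{n-1}\left(\Gamma_{n-1}^{-1}-\tfrac{\gamma_n}{n-1}\Gamma_{n-1}^{-1}x_{n-1}\transp{x_{n-1}}\Gamma_{n-1}^{-1}\right),
\end{equation*}
with $\gamma_n$ exactly as defined in \eqn~\eqref{eq:online-dmd-updates}. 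This is the stated update for $\rho=1$, so by induction the recursion's $\Gamma_n^{-1}$ is the inverse of this $\Gamma_n$. I will ignore the $\epsilon I_\dimension$ initialisation, or equivalently treat the recursion as warm-started from a batch so that the identity holds exactly.

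\textbf{Step 2: $\Theta_n$ is the least-squares solution.} Define $C_n = \frac1n\sum_{t=1}^{n} x_t\transp{x_{t-1}}$. I will show by induction that $\Theta_n = C_n\Gamma_n^{-1}$. Write $nC_n=(n-1)C_{n-1}+x_n\transp{x_{n-1}}$ and use $(n-1)C_{n-1}=(n-1)\Theta_{n-1}\Gamma_{n-1}$. Multiplying the claimed $\Theta_n$ update on the right by $n\Gamma_n$, and using the Sherman--Morrison identity $\gamma_n\transp{x_{n-1}}\Gamma_{n-1}^{-1} = (n-1)\,\transp{x_{n-1}}\Gamma_n^{-1}\cdot\tfrac{1}{n}\cdot n$ in suitably rescaled form, reduces the check to routine algebra.

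Substituting $x_t=\Theta x_{t-1}+\varepsilon_t$ gives $C_n=\Theta\Gamma_n+\frac1n\sum_{t=1}^n\varepsilon_t\transp{x_{t-1}}$, and hence
\begin{equation*}
\dmdop_n=\Theta+\Xi_n\Gamma_n^{-1}+\tfrac1n\varepsilon_n\transp{x_{n-1}}\Gamma_n^{-1}.
\end{equation*}

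\textbf{Step 3: conditioning.} The quantities $\Gamma_n$ and $\Xi_n$ depend only on $x_0,\dots,x_{n-1}$, so they are $\mathcal F_{n-1}$-measurable. The martingale-difference assumption $\E{\varepsilon_n\mid\mathcal F_{n-1}}=0$, together with integrability from Lemma~\ref{thm:finite-p-moments}, then kills the last term. This yields $\E{\dmdop_n\mid\mathcal F_{n-1}}-\Theta=\Xi_n\Gamma_n^{-1}$.

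\textbf{Step 4: non-vanishing (main obstacle).} Since $\Gamma_n$ is invertible, the bias vanishes if and only if $\Xi_n=0$. Isolate the last summand and write $n\Xi_n = S_{n-2}+\varepsilon_{n-1}\transp{x_{n-2}}$, where $S_{n-2}$ is $\mathcal F_{n-2}$-measurable. The event $\Xi_n=0$ then forces $\varepsilon_{n-1}\transp{x_{n-2}}=-S_{n-2}$. Conditionally on $\mathcal F_{n-2}$, with $x_{n-2}\neq0$ (an almost-sure event when the noise has a density), this pins $\varepsilon_{n-1}$ to at most one point.

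That point has probability zero provided $\varepsilon_{n-1}$ has a conditional distribution without atoms. This is not implied by the moment assumptions alone, so I would make an absolute-continuity assumption on the noise explicit at this step. Without it, the honest conclusion is the weaker statement that $\Xi_n\Gamma_n^{-1}$ is not identically zero.
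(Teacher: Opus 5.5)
Your Steps 1--3 follow the paper's route: the exact least-squares form of $\dmdop_n$, substitution of $x_t = \Theta x_{t-1} + \varepsilon_t$, and the martingale-difference property removing the $t=n$ term. You improve on it by actually verifying the recursion/batch equivalence that the paper only asserts. You also rightly note that this equivalence needs a warm start rather than $\Gamma_0^{-1} = \epsilon I_\dimension$.

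One slip needs fixing: the identity that closes your induction. Left-multiply the $\Gamma_n^{-1}$ update by $\transp{x_{n-1}}$ and use $1 - \gamma_n q/(n-1) = \gamma_n$, where $q = \transp{x_{n-1}}\Gamma_{n-1}^{-1}x_{n-1}$. This gives $\gamma_n\transp{x_{n-1}}\Gamma_{n-1}^{-1} = \frac{n-1}{n}\transp{x_{n-1}}\Gamma_n^{-1}$; your expression is off by a factor of $n$. With the corrected identity, $C_n\Gamma_n^{-1} = \Theta_{n-1} + \frac{1}{n}(x_n - \Theta_{n-1}x_{n-1})\transp{x_{n-1}}\Gamma_n^{-1}$, which is exactly the stated update.

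Step 4 is where you genuinely depart from the paper, and your caution is justified. The paper argues that $\E{\varepsilon_t \mid \mathcal{F}_{n-1}} = \varepsilon_t \neq 0$ almost surely for each $t<n$. From this it concludes $\sum_{t=1}^{n-1}\varepsilon_t\transp{x_{t-1}} \neq 0$ almost surely. The first claim is not implied by the moment assumptions. The second does not follow even when the first holds, since non-zero summands can cancel.

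A concrete counterexample: take $\dimension = 1$, $\Theta = 0$ and i.i.d.\ Rademacher noise. Every assumption of Section~\ref{sec:theory} holds, including independence and finite eighth moments. Here $\Gamma_n = 1$, and $n\Xi_n = \sum_{t=1}^{n-1}\varepsilon_t\varepsilon_{t-1}$ is a sum of $n-1$ i.i.d.\ Rademacher variables. This sum vanishes with positive probability for every odd $n \geq 3$.

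So the ``non-zero almost surely'' claim is false as stated. Your argument, conditioning on $\mathcal{F}_{n-2}$ under an atomless conditional law for $\varepsilon_{n-1}$, is the correct repair. You should add the ranges it needs:
\begin{itemize}
\item $n \geq 2$: for $n=1$ the sum is empty, so with $\dimension = 1$ the bias is exactly zero.
\item $n \geq \dimension$: otherwise $\Gamma_n$ has rank below $\dimension$ and cannot be invertible.
\end{itemize}
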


\begin{proof}
	For any time step $n$, the updates of \eqn~\eqref{eq:online-dmd-updates} compute the DMD matrix
	\begin{equation*}
		\dmdop_n = \left( \sum_{t=1}^n x_t \transp{x_{t-1}} \right) \left( \sum_{t=1}^n x_{t-1} \transp{x_{t-1}} \right)^{-1}
	\end{equation*}
	exactly. Substituting the model dynamics $x_t = \Theta \, x_{t-1} + \varepsilon_t$ from the $\varprocess{\dimension}{1}$ model,
	\begin{equation*}
		\sum_{t=1}^n x_t \transp{x_{t-1}} = \Theta \sum_{t=1}^n x_{t-1} \transp{x_{t-1}} + \sum_{t=1}^n \varepsilon_t \transp{x_{t-1}}.
	\end{equation*}
	Right-multiplying by $\left( \sum_{t=1}^n x_{t-1} \transp{x_{t-1}} \right)^{-1}$, we obtain
	\begin{equation}
		\label{eq:dmd-estimator-decomposition}
		\dmdop_n = \Theta + \left( \sum_{t=1}^n \varepsilon_t \transp{x_{t-1}} \right) \left( \sum_{t=1}^n x_{t-1} \transp{x_{t-1}} \right)^{-1}.
	\end{equation}
	Conditioned on $\mathcal{F}_{n-1}$, the denominator $\sum_{t=1}^n x_{t-1} \transp{x_{t-1}}$ is deterministic since $x_0, \dots, x_{n-1}$ are all $\mathcal{F}_{n-1}$-measurable. The conditional expectation of the numerator term can be written as
	\begin{equation*}
		\E{\sum_{t=1}^n \varepsilon_t \transp{x_{t-1}} \mid \mathcal{F}_{n-1}} = \sum_{t=1}^n \E{\varepsilon_t \mid \mathcal{F}_{n-1}} \transp{x_{t-1}},
	\end{equation*}
	since $x_{t-1}$ is $\mathcal{F}_{n-1}$-measurable for $t \leq n$. We evaluate the term $\E{\varepsilon_t \mid \mathcal{F}_{n-1}}$ for different $t$.
	\begin{itemize}
		\item For $t < n$, the filtration $\mathcal{F}_{n-1}$ contains $x_t$ and $x_{t-1}$. Since $\varepsilon_t = x_t - \Theta \, x_{t-1}$, the noise realisation $\varepsilon_t$ is fully determined by the information in $\mathcal{F}_{n-1}$, hence
		\begin{equation*}
			\E{\varepsilon_t \mid \mathcal{F}_{n-1}} = \varepsilon_t \neq 0 \qquad \text{almost surely}.
		\end{equation*}
		\item For $t = n$, the noise $\varepsilon_n$ occurs at time $n$ which is not contained in $\mathcal{F}_{n-1}$. By the martingale difference property, 
		\begin{equation*}
			\E{\varepsilon_n \mid \mathcal{F}_{n-1}} = 0.
		\end{equation*}
	\end{itemize}
	Substituting these back,
	\begin{equation*}
		\E{\sum_{t=1}^n \varepsilon_t \transp{x_{t-1}} \mid \mathcal{F}_{n-1}} = \sum_{t=1}^{n-1} \varepsilon_t \transp{x_{t-1}} \neq 0 \qquad \text{almost surely}.
	\end{equation*}
	Therefore, the conditional bias
	\begin{equation*}
		\E{\dmdop_n \mid \mathcal{F}_{n-1}} - \Theta = \left( \sum_{t=1}^{n-1} \varepsilon_t \transp{x_{t-1}} \right) \left( \sum_{t=1}^n x_{t-1} \transp{x_{t-1}} \right)^{-1}
	\end{equation*}
	depends on the realised history of the noise and is non-zero almost surely.
\end{proof}

\begin{proposition}
	\label{thm:gamma-consistency}
	If $\E{\norm{\varepsilon_t}^4} < \infty$, then $\Gamma_n$ is a consistent estimator of $\Gamma$. In particular, $\Gamma_n \xrightarrow{\mathbb{P}} \Gamma$, and
	\begin{equation*}
		\E{\norm{\Gamma_n - \Gamma}^2} = \mathcal{O}(n^{-1}) \qquad \text{as } n \to \infty.
	\end{equation*}
\end{proposition}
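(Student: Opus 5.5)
We will bound the mean squared error $\E{\norm{\Gamma_n - \Gamma}^2}$ directly, since convergence in probability then follows from Chebyshev's (Markov's) inequality. Throughout, we work with the Frobenius norm, which dominates the spectral norm, so that the squared error splits entrywise. Write
\begin{equation*}
	\Gamma_n - \Gamma = \frac{1}{n}\sum_{t=1}^n \left(x_{t-1}\transp{x_{t-1}} - \Gamma\right) = \frac{1}{n}\sum_{t=1}^n Y_{t-1},
\end{equation*}
where $Y_t = x_t\transp{x_t} - \Gamma$ is a mean-zero, strictly stationary matrix-valued sequence. This uses the fact that the $\mathrm{MA}(\infty)$ representation \eqref{eq:ma-infty} with i.i.d.-type noise makes $\{x_t\}$ stationary with $\E{x_t\transp{x_t}} = \Gamma$. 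Expanding the squared Frobenius norm gives
\begin{equation*}
	\E{\norm{\Gamma_n - \Gamma}_F^2} = \frac{1}{n^2}\sum_{s,t=1}^{n} \E{\Tr\!\left(Y_{s-1}\transp{Y_{t-1}}\right)} \le \frac{1}{n}\sum_{h\in\Z} \abs{\Tr\, \E{Y_0 \transp{Y_h}}}.
\end{equation*}
It therefore suffices to show that the autocovariances of $Y_t$ are absolutely summable.

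The key step is a geometric decay bound $\abs{\E{\Tr(Y_0\transp{Y_h})}} \le C\beta^{h}$, with $\beta\in(\zeta(\Theta),1)$ as in the proof of Lemma~\ref{thm:finite-p-moments}. For $h\ge 1$, iterate the recursion to obtain
\begin{equation*}
	x_h = \Theta^h x_0 + u_h, \qquad u_h = \sum_{j=0}^{h-1}\Theta^j\varepsilon_{h-j},
\end{equation*}
where $u_h$ is independent of $\mathcal{F}_0$; here we use $\varepsilon_t \perp \varepsilon_s$, which is an assumption shared with the theorem. Then
\begin{equation*}
	x_h\transp{x_h} = \Theta^h x_0\transp{x_0}\transp{(\Theta^h)} + \Theta^h x_0\transp{u_h} + u_h\transp{x_0}\transp{(\Theta^h)} + u_h\transp{u_h}.
\end{equation*}
Conditioning on $\mathcal{F}_0$, we have $\E{u_h \mid \mathcal{F}_0} = 0$ and $\E{u_h\transp{u_h}\mid\mathcal{F}_0} = \E{u_h\transp{u_h}}$, which is deterministic. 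Hence
\begin{equation*}
	\E{Y_h \mid \mathcal{F}_0} = \Theta^h\!\left(x_0\transp{x_0} - \Gamma\right)\transp{(\Theta^h)},
\end{equation*}
using the Lyapunov identity $\Gamma = \Theta^h\Gamma\transp{(\Theta^h)} + \E{u_h\transp{u_h}}$. Consequently,
\begin{equation*}
	\E{\Tr(Y_0\transp{Y_h})} = \E{\Tr\!\left(Y_0\,\Theta^h \transp{Y_0}\transp{(\Theta^h)}\right)},
\end{equation*}
and this is bounded in absolute value by $\norm{\Theta^h}^2\,\E{\norm{Y_0}_F^2}$ up to a dimensional constant. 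The moment $\E{\norm{Y_0}_F^2}$ is at most a constant times $\E{\norm{x_0}^4}+\norm{\Gamma}_F^2$, which is finite by Lemma~\ref{thm:finite-p-moments} with $p=4$. Since $\norm{\Theta^h}\le \alpha\beta^h$, we obtain the decay bound with rate $\beta^{2h}$, and the series over $h$ is summable. Negative lags are handled by symmetry, which yields $\E{\norm{\Gamma_n-\Gamma}^2} \le C'/n$.

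The main obstacle is that the statement only imposes a martingale difference structure. The conditional-expectation step, however, needs $\E{u_h\transp{u_h}\mid\mathcal{F}_0}$ to be deterministic. We will invoke the independence assumption, as in Theorem~\ref{thm:unweighted-bias}. If only conditional homoskedasticity were available, that condition would suffice in its place. A minor point is the initial condition: if $x_0$ is not drawn from the stationary law, the extra term $\Theta^{t}(x_0\transp{x_0}-\Gamma)\transp{(\Theta^t)}$ decays geometrically. Its contribution to the average is therefore $\mathcal{O}(n^{-1})$, and it does not change the rate. Finally, Markov's inequality gives
\begin{equation*}
	\P{\norm{\Gamma_n-\Gamma}>\epsilon}\le C'/(n\epsilon^2)\to 0,
\end{equation*}
which establishes consistency.
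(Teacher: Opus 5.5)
Your proof is correct and follows essentially the paper's route: the paper splits $x_h = u_h + v_h$ with $v_h = \sum_{k \geq h}\Theta^k\varepsilon_{h-k} = \Theta^h x_0$, which is exactly your split, then bounds the lag-$h$ autocovariance of $x_t\transp{x_t}$ geometrically, sums it to get $\mathcal{O}(n^{-1})$, and concludes with Markov's inequality. Your exact computation $\E{Y_h \mid \mathcal{F}_0} = \Theta^h Y_0 \transp{(\Theta^h)}$ via the Lyapunov identity is slightly sharper than the paper's Cauchy--Schwarz bound ($\beta^{2h}$ rather than $\beta^{h}$ decay), and you are right that the argument needs independent noise rather than only the martingale difference property, which the paper also uses implicitly when it asserts $u_h \perp \mathcal{F}_0$.
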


\begin{proof}
	Let $\Delta_n = \Gamma_n - \Gamma$, and consider the $(i, j)$-th entry of $\Delta_n$,
	\begin{equation*}
		(\Delta_n)_{ij} = \frac{1}{n} \sum_{t=0}^{n-1} \left( x_{t, i} x_{t, j} - \Gamma_{ij} \right).
	\end{equation*}
	Define $(Y_t)_{ij} = x_{t, i} x_{t, j} - \Gamma_{ij}$, so that $\E{(Y_t)_{ij}} = 0$ and $(\Delta_n)_{ij} = \frac{1}{n} \sum_{t=0}^{n-1} (Y_t)_{ij}$. Then,
	\begin{equation*}
		\Var{(\Delta_n)_{ij}} = \frac{1}{n^2} \sum_{s,t=0}^{n-1} \Cov{(Y_s)_{ij}}{(Y_t)_{ij}}.
	\end{equation*}
	By stationarity, $\Cov{(Y_s)_{ij}}{(Y_t)_{ij}} = \gamma_{ij}(t-s)$ depends only on the lag. Hence,
	\begin{equation}
		\label{eq:variance-delta-entry}
		\Var{(\Delta_n)_{ij}} = \frac{1}{n} \sum_{h=-(n-1)}^{n-1} \left( 1 - \frac{\abs{h}}{n} \right) \gamma_{ij}(h),
	\end{equation}
	where $\gamma_{ij}(h) = \Cov{x_{0, i} x_{0, j}}{x_{h, i} x_{h, j}}$. Using \eqref{eq:ma-infty}, decompose $x_h$ for $h \geq 1$ as $x_h = u_h + v_h$, where
	\begin{equation*}
		u_h = \sum_{k=0}^{h-1} \Theta^k \varepsilon_{h-k}, \qquad v_h = \sum_{k \geq h} \Theta^k \varepsilon_{h-k}.
	\end{equation*}
	Here, $u_h$ depends only on $\{ \varepsilon_1, \dots, \varepsilon_h \}$ and is independent of $\mathcal{F}_0$, while $v_h$ depends only on $\{ \varepsilon_s \}_{s \leq 0}$. Since $x_{h,i} x_{h,j} = u_{h,i} u_{h,j} + v_{h,i} v_{h,j} + u_{h,i} v_{h,j} + v_{h,i} u_{h,j}$ and $u_h$ is independent of $(\mathcal{F}_0, v_h)$, each cross-covariance $\Cov{x_{0,i} x_{0,j}}{\cdot}$ reduces to one involving only $v_h$ and $x_0$. By Cauchy-Schwarz and Lemma~\ref{thm:finite-p-moments},
	\begin{equation*}
		\abs{\gamma_{ij}(h)} \leq c_1 \norm{v_h}_{L^4}^2 + c_2 \norm{v_h}_{L^4}
	\end{equation*}
	for constants $c_1, c_2 > 0$ depending on $\E{\norm{x_0}^4}$ and $\E{\norm{u_h}^4}$. By Minkowski's inequality and $\norm{\Theta^k} \leq \alpha \beta^k$ for some $\alpha > 0$ and $\beta \in (\zeta(\Theta), 1)$,
	\begin{equation*}
		\norm{v_h}_{L^4} \leq \norm{\varepsilon_0}_{L^4} \sum_{k \geq h} \alpha \beta^k = \frac{\alpha \norm{\varepsilon_0}_{L^4}}{1 - \beta} \, \beta^h.
	\end{equation*}
	Therefore, $\abs{\gamma_{ij}(h)} \leq \omega \beta^{\abs{h}}$ for some $\omega > 0$ using $\gamma_{ij}(-h) = \gamma_{ij}(h)$ by stationarity, and
	\begin{equation}
		\label{eq:summable-autocovariance}
		\sum_{h \in \Z} \abs{\gamma_{ij}(h)} \leq \frac{2\omega}{1 - \beta} < \infty.
	\end{equation}
	Since $\left(1 - \frac{\abs{h}}{n}\right) \leq 1$ in \eqn~\eqref{eq:variance-delta-entry}, the absolute summability \eqn~\eqref{eq:summable-autocovariance} gives
	\begin{equation*}
		\Var{(\Delta_n)_{ij}} \leq \frac{1}{n} \sum_{h \in \Z} \abs{\gamma_{ij}(h)} = \mathcal{O}(n^{-1}) \qquad \text{as } n \to \infty.
	\end{equation*}
	Furthermore, as $\norm{\Delta_n}^2 \leq \norm{\Delta_n}_F^2 = \sum_{i,j=1}^\dimension \left( \Delta_n \right)^2_{ij}$,
	\begin{equation}
		\label{eq:bound-expectation-delta}
		\E{\norm{\Delta_n}^2} \leq \sum_{i,j=1}^\dimension \Var{(\Delta_n)_{ij}} = \mathcal{O}(n^{-1}) \qquad \text{as } n \to \infty.
	\end{equation}
	Finally, by Markov's inequality applied to the non-negative random variable $\norm{\Delta_n}^2$, for any $\epsilon > 0$,
	\begin{equation*}
		\P{\norm{\Gamma_n - \Gamma} > \epsilon} = \P{\norm{\Delta_n}^2 > \epsilon^2} \leq \frac{\E{\norm{\Delta_n}^2}}{\epsilon^2},
	\end{equation*}
	which shows $\P{\norm{\Gamma_n - \Gamma} > \epsilon} \to 0$ and therefore $\Gamma_n \xrightarrow{\mathbb{P}} \Gamma$ as $n \to \infty$.
\end{proof}

\begin{lemma}
	\label{thm:gamma-n-pd}
	If $\E{\norm{\varepsilon_t}^4} < \infty$, then $\P{\Gamma_n \succ 0} \to 1$ as $n \to \infty$. In particular, $\Gamma_n$ is eventually positive definite.
\end{lemma}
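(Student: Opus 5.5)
The idea is to combine the consistency of $\Gamma_n$ from Proposition~\ref{thm:gamma-consistency} with the strict positivity of $\Gamma$ from Lemma~\ref{thm:gamma-pd}, using Weyl's inequality (Lemma~\ref{thm:weyl-inequality}) to transfer positivity of the smallest eigenvalue. First, I set $\lambda_\ast = \lambda_{\min}(\Gamma) > 0$, which is a deterministic constant by Lemma~\ref{thm:gamma-pd}. Writing $\Gamma_n = \Gamma + \Delta_n$ with $\Delta_n = \Gamma_n - \Gamma$ symmetric, Weyl's inequality with $i = 1$ gives
\begin{equation*}
	\lambda_{\min}(\Gamma_n) \geq \lambda_{\min}(\Gamma) + \lambda_{\min}(\Delta_n) \geq \lambda_\ast - \norm{\Delta_n},
\end{equation*}
since $\abs{\lambda_{\min}(\Delta_n)} \leq \norm{\Delta_n}$ for a symmetric matrix.

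Next, this eigenvalue bound yields the event inclusion $\{\norm{\Delta_n} < \lambda_\ast/2\} \subseteq \{\lambda_{\min}(\Gamma_n) \geq \lambda_\ast/2\} \subseteq \{\Gamma_n \succ 0\}$. Taking complements and applying Markov's inequality to $\norm{\Delta_n}^2$, exactly as in the last step of Proposition~\ref{thm:gamma-consistency}, gives
\begin{equation*}
	\P{\Gamma_n \not\succ 0} \leq \P{\norm{\Delta_n} \geq \lambda_\ast/2} \leq \frac{4\,\E{\norm{\Delta_n}^2}}{\lambda_\ast^2} = \mathcal{O}(n^{-1}).
\end{equation*}
This tends to zero, which proves the claim. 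It also gives a rate and the stronger statement $\P{\lambda_{\min}(\Gamma_n) \geq \lambda_{\min}(\Gamma)/2} \to 1$; the latter is convenient later, for the event $\mathcal{E}_n$ of Theorem~\ref{thm:weighted-bias}.

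Finally, I would interpret ``eventually positive definite'' as this convergence in probability. An almost-sure version would need more than the $\mathcal{O}(n^{-1})$ rate, which is not summable. One option is the ergodic theorem for the stationary, geometrically mixing $\{x_t x_t^{\mathsf T}\}$. Another is the monotonicity of $\sum_{t} x_{t-1}x_{t-1}^{\mathsf T}$ in $n$: once this sum is positive definite it stays so, because adding positive semidefinite terms preserves definiteness. The monotonicity route gives the almost-sure statement directly from the in-probability one.

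The main obstacle is minor and lies in the first step. Weyl's inequality, as stated, requires Hermitian matrices, and $\Delta_n$ is real symmetric, so it applies. The other point to check is that Proposition~\ref{thm:gamma-consistency} supplies the second-moment bound $\E{\norm{\Delta_n}^2} = \mathcal{O}(n^{-1})$ under only the fourth-moment assumption, which it does.
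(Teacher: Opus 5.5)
Your proof is correct and follows the paper's argument: Weyl's inequality gives $\lambda_{\min}(\Gamma_n) \geq \lambda_{\min}(\Gamma) - \norm{\Delta_n}$, and consistency of $\Gamma_n$ finishes the proof, which you make quantitative via Markov's inequality to get an $\mathcal{O}(n^{-1})$ rate. Your monotonicity remark is a valid addition the paper lacks: $\sum_t x_{t-1}\transp{x_{t-1}}$ is nondecreasing in the Loewner order, so the events $\{\Gamma_n \succ 0\}$ are nested, and this upgrades the paper's informal claim that $\Gamma_n$ is ``eventually positive definite'' to an almost-sure statement.
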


\begin{proof}
	By Weyl's inequality (Lemma~\ref{thm:weyl-inequality}) with $A = \Gamma$ and $B = \Delta_n = \Gamma_n - \Gamma$,
	\begin{equation*}
		\lambda_{\min}(\Gamma_n) \geq \lambda_{\min}(\Gamma) + \lambda_{\min}(\Delta_n) \geq \lambda_{\min}(\Gamma) - \norm{\Delta_n},
	\end{equation*}
	since $\lambda_{\min}(B) \geq -\norm{B}$ for any Hermitian matrix $B$. By Proposition~\ref{thm:gamma-consistency}, for any $\delta < \lambda_{\min}(\Gamma)$,
	\begin{equation*}
		\mathbb{P}\left( \norm{\Gamma_n - \Gamma} < \delta \right) \to 1 \qquad \text{as } n \to \infty.
	\end{equation*}
	Taking $\delta = \lambda_{\min}(\Gamma)/2$, and since $\lambda_{\min}(\Gamma) > 0$ by Lemma~\ref{thm:gamma-pd},
	\begin{equation*}
		\P{\lambda_{\min}(\Gamma_n) > \frac{\lambda_{\min}(\Gamma)}{2}} \to 1 \qquad \text{as } n \to \infty.
	\end{equation*}
	The result is finally obtained by the inclusion $\left\{ \lambda_{\min}(\Gamma_n) > \lambda_{\min}(\Gamma)/2 \right\} \subseteq \left\{ \Gamma_n \succ 0 \right\}$.
\end{proof}

\begin{lemma}
	\label{thm:neumann-expansion}
	On the event $\left\{ \Gamma_n \succ 0 \right\}$,
	\begin{equation*}
		\Gamma_n^{-1} = \Gamma^{-1} - \Gamma^{-1} \Delta_n \Gamma^{-1} + \Upsilon_n,
	\end{equation*}
	where $\Upsilon_n = \Gamma^{-1} \Delta_n \Gamma_n^{-1} \Delta_n \Gamma^{-1}$ with $\Delta_n = \Gamma_n - \Gamma$.
\end{lemma}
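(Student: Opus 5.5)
This is a purely algebraic identity, so I will verify it by direct manipulation rather than by any probabilistic argument. On the event $\{\Gamma_n \succ 0\}$, $\Gamma_n$ is invertible, and $\Gamma$ is invertible by Lemma~\ref{thm:gamma-pd}. Both inverses therefore exist, and we may use the resolvent identity
\begin{equation*}
	\Gamma_n^{-1} - \Gamma^{-1} = \Gamma^{-1}(\Gamma - \Gamma_n)\Gamma_n^{-1} = -\Gamma^{-1}\Delta_n\Gamma_n^{-1}.
\end{equation*}
This is checked by left-multiplying by $\Gamma$ and right-multiplying by $\Gamma_n$: the left side gives $\Gamma - \Gamma_n$, and so does the right.

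Next I will feed the identity back into itself once. Substituting $\Gamma_n^{-1} = \Gamma^{-1} - \Gamma_n^{-1}\Delta_n\Gamma^{-1}$ into the right-hand factor gives
\begin{equation*}
	\Gamma_n^{-1} = \Gamma^{-1} - \Gamma^{-1}\Delta_n\bigl(\Gamma^{-1} - \Gamma_n^{-1}\Delta_n\Gamma^{-1}\bigr) = \Gamma^{-1} - \Gamma^{-1}\Delta_n\Gamma^{-1} + \Gamma^{-1}\Delta_n\Gamma_n^{-1}\Delta_n\Gamma^{-1}.
\end{equation*}
The substituted expression is the mirrored form of the resolvent identity, $\Gamma_n^{-1} - \Gamma^{-1} = -\Gamma_n^{-1}\Delta_n\Gamma^{-1}$, which is verified the same way. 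The last term is exactly $\Upsilon_n$, which completes the argument.

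There is no real obstacle here. The only care needed is to keep the non-commuting factors in the right order, so that the remainder comes out as $\Gamma^{-1}\Delta_n\Gamma_n^{-1}\Delta_n\Gamma^{-1}$ and not some other arrangement. As a sanity check, I will multiply the final expression on the left by $\Gamma_n = \Gamma + \Delta_n$ and confirm that it returns $I_\dimension$.

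The event restriction is used only to guarantee that $\Gamma_n^{-1}$ exists. No Neumann series convergence, such as requiring $\norm{\Gamma^{-1}\Delta_n} < 1$, is needed, because the identity is exact with an explicit remainder.
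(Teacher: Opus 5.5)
Your proof is correct and uses the same idea as the paper: the resolvent identity, substituted once into itself. The only difference is bookkeeping. You plug the right-form $\Gamma_n^{-1} = \Gamma^{-1} - \Gamma_n^{-1}\Delta_n\Gamma^{-1}$ into the left-form $\Gamma_n^{-1} = \Gamma^{-1} - \Gamma^{-1}\Delta_n\Gamma_n^{-1}$, which gives the remainder $\Upsilon_n = \Gamma^{-1}\Delta_n\Gamma_n^{-1}\Delta_n\Gamma^{-1}$ in one step. The paper instead substitutes the right-form into itself, obtains $\Gamma_n^{-1}\Delta_n\Gamma^{-1}\Delta_n\Gamma^{-1}$, and then needs an extra chain of equalities to reorder it into $\Upsilon_n$.
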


\begin{proof}
	The resolvent identity gives
	\begin{equation}
		\label{eq:resolvent-identity-gamma-n}
		\Gamma_n^{-1} - \Gamma^{-1} = -\Gamma_n^{-1} \left( \Gamma_n - \Gamma \right) \Gamma^{-1},
	\end{equation}
	or equivalently, $\Gamma_n^{-1} = \Gamma^{-1} - \Gamma_n^{-1} \Delta_n \Gamma^{-1}$. Substituting this expression into itself,
	\begin{align}
		\label{eq:expansion-r-non-symmetric}
		\Gamma_n^{-1} &= \Gamma^{-1} - \left( \Gamma^{-1} - \Gamma_n^{-1} \Delta_n \Gamma^{-1} \right) \Delta_n \Gamma^{-1} \nonumber \\
		&= \Gamma^{-1} - \Gamma^{-1} \Delta_n \Gamma^{-1} + \Gamma_n^{-1} \Delta_n \Gamma^{-1} \Delta_n \Gamma^{-1}.
	\end{align}
	Then, by \eqn~\eqref{eq:resolvent-identity-gamma-n} and \eqn~\eqref{eq:expansion-r-non-symmetric}
	\begin{align*}
		\Gamma_n^{-1} \Delta_n \Gamma^{-1} \Delta_n \Gamma^{-1} &= \Gamma_n^{-1} - \Gamma^{-1} + \Gamma^{-1} \Delta_n \Gamma^{-1} \\
		&= -\Gamma_n^{-1} \Delta_n \Gamma^{-1} + \Gamma^{-1} \Delta_n \Gamma^{-1} \\
		&= \left( \Gamma^{-1} - \Gamma_n^{-1} \right) \Delta_n \Gamma^{-1} \\
		&= \Gamma^{-1} \Delta_n \Gamma_n^{-1} \Delta_n \Gamma^{-1} \\
		&= \Upsilon_n.
	\end{align*}
\end{proof}

\begin{lemma}
	\label{thm:fourth-moment-delta}
	If $\E{\norm{\varepsilon_t}^8} < \infty$, then $\E{\norm{\Delta_n}^4} = \mathcal{O}(n^{-2})$ as $n \to \infty$.
\end{lemma}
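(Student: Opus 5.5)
Because $\norm{\Delta_n}^4 \leq \norm{\Delta_n}_F^4 \leq \dimension^2 \sum_{i,j} (\Delta_n)_{ij}^4$, it suffices to show that $\E{(\Delta_n)_{ij}^4} = \mathcal{O}(n^{-2})$ for each fixed entry. Write $(\Delta_n)_{ij} = n^{-1}\sum_{t=0}^{n-1} (Y_t)_{ij}$ with $(Y_t)_{ij} = x_{t,i}x_{t,j} - \Gamma_{ij}$, as in the proof of Proposition~\ref{thm:gamma-consistency}. The claim then reduces to a Rosenthal/Burkholder-type bound
\begin{equation*}
\E{\Big(\sum_{t=0}^{n-1} (Y_t)_{ij}\Big)^4} = \mathcal{O}(n^2)
\end{equation*}
for the stationary, geometrically dependent sequence $\{(Y_t)_{ij}\}$. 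By Lemma~\ref{thm:finite-p-moments} with $p=8$, the state has $\E{\norm{x_t}^8}<\infty$, so $\E{\abs{(Y_t)_{ij}}^4} < \infty$. This is exactly where the eighth-moment assumption enters.

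The route I would take is a martingale (projection) decomposition, using independence of the noise. Let $\mathcal{G}_t = \sigma(\varepsilon_s : s \le t)$ and $P_k(\cdot) = \E{\cdot \mid \mathcal{G}_k} - \E{\cdot \mid \mathcal{G}_{k-1}}$. Then $S_n = \sum_t Y_t = \sum_{k} \sum_{t} P_k Y_t$, and for each fixed lag $m = t-k \ge 0$ the sum $M_n^{(m)} = \sum_t P_{t-m}Y_t$ is a sum of martingale differences. Burkholder's inequality followed by Minkowski gives
\begin{equation*}
\norm{M_n^{(m)}}_{L^4} \le C\,\Big\|\sum_t (P_{t-m}Y_t)^2\Big\|_{L^2}^{1/2} \le C\sqrt{n}\,\norm{P_0 Y_m}_{L^4}.
\end{equation*}
Summing over $m$ with Minkowski again yields $\norm{S_n}_{L^4} \le C\sqrt{n}\sum_{m\ge0}\norm{P_0Y_m}_{L^4}$, which is the desired $\mathcal{O}(n^2)$ bound on the fourth moment once the series converges.

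The main obstacle is showing $\sum_m \norm{P_0 Y_m}_{L^4} < \infty$, i.e.\ a geometric decay of the projections in $L^4$. I would proceed with a coupling argument.
\begin{enumerate}
\item Let $x_m^*$ be $x_m$ with $\varepsilon_0$ replaced by an independent copy $\varepsilon_0'$. Then $\norm{P_0Y_m}_{L^4} \le \norm{Y_m - Y_m^*}_{L^4}$.
\item From the MA$(\infty)$ representation \eqref{eq:ma-infty}, $x_m - x_m^* = \Theta^m(\varepsilon_0-\varepsilon_0')$.
\item Factor $x_{m,i}x_{m,j}-x^*_{m,i}x^*_{m,j} = (x_m-x_m^*)_i x_{m,j} + x^*_{m,i}(x_m-x^*_m)_j$.
\item Apply Cauchy--Schwarz in $L^8$ to each product. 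This bounds the difference by $2\norm{x_0}_{L^8}\cdot 2\norm{\varepsilon_0}_{L^8}\,\alpha\beta^m$, using $\norm{\Theta^m}\le\alpha\beta^m$ as in Lemma~\ref{thm:finite-p-moments}.
\end{enumerate}
The resulting series is geometric and hence summable.

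An alternative that avoids Burkholder is to expand the fourth moment directly as $\sum_{s,t,u,v}\E{Y_sY_tY_uY_v}$ and bound the joint cumulants using the $u_h/v_h$ splitting of the earlier proof. Counting index configurations (pairings give $n^2$ terms, and the fourth cumulant summed over lags gives $\mathcal{O}(n)$) again gives $\mathcal{O}(n^2)$. This route is more combinatorial, so I would adopt the martingale route. Combining the bound over the $\dimension^2$ entries completes the argument.
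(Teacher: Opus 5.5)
Your proof is correct, and it takes a genuinely different route from the paper's.

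\textbf{The paper's route.} After the same entrywise reduction, the paper uses the moment--cumulant identity $\E{S_n^4} = 3\,\E{S_n^2}^2 + \sum_{t_1,\dots,t_4}\kappa(Y_{t_1},Y_{t_2},Y_{t_3},Y_{t_4})$. The first term is $\mathcal{O}(n^2)$ directly from the variance bound of Proposition~\ref{thm:gamma-consistency}. The cumulant sum is $\mathcal{O}(n)$ because fourth-order cumulants are absolutely summable over lags; the paper only sketches this (``geometric decay by the same $u_h, v_h$ decomposition'') rather than verifying it.

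\textbf{Your route.} You use projections $S_n=\sum_{m\ge 0}\sum_t P_{t-m}Y_t$, Burkholder's inequality for each fixed-lag martingale, and the coupling bound $\norm{P_0Y_m}_{L^4}\le\norm{Y_m-Y_m^*}_{L^4}\le 4\alpha\norm{x_0}_{L^8}\norm{\varepsilon_0}_{L^8}\beta^m$. This is Wu's functional-dependence approach.

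\textbf{What each buys.} Your only dependence estimate is first-order, and linearity of the MA$(\infty)$ representation makes it a one-line computation. So you replace the paper's least-justified step with a fully checked one. Your argument also extends directly to higher moments and to the weighted sums of Lemma~\ref{thm:weighted-fourth-moment-delta}, where Burkholder gives $(\sum_t\rho^{2(n-t)})^{1/2}$ in place of $\sqrt n$. The paper's route, in exchange, needs no martingale inequalities and reuses the second-moment bound it has already proved.

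\textbf{Details to state explicitly.} The identity $Y_t=\sum_{m\ge0}P_{t-m}Y_t$ uses $\E{Y_t\mid\mathcal{G}_{-\infty}}=\E{Y_t}=0$, which follows from Kolmogorov's zero--one law for independent innovations. Both $\norm{P_{t-m}Y_t}_{L^4}=\norm{P_0Y_m}_{L^4}$ and $\E{Y_m\mid\mathcal{G}_{-1}}=\E{Y_m^*\mid\mathcal{G}_0}$ require the $\varepsilon_t$ to be i.i.d. This is the same implicit assumption behind the paper's own stationarity and $u_h,v_h$ arguments.
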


\begin{proof}
	Since $\norm{\Delta_n}^4 \leq \norm{\Delta_n}_F^4$ and
	\begin{equation*}
		\norm{\Delta_n}_F^4 = \left( \sum_{i,j=1}^\dimension (\Delta_n)_{ij}^2 \right)^2 = \sum_{i,j=1}^\dimension (\Delta_n)_{ij}^4 + 2 \sum_{(i,j) < (k,\ell)} (\Delta_n)_{ij}^2 \, (\Delta_n)_{k\ell}^2,
	\end{equation*}
	it suffices to show $\E{(\Delta_n)_{ij}^4} = \mathcal{O}(n^{-2})$ for each $(i,j)$, noting that the cross terms satisfy
	\begin{equation*}
		\E{(\Delta_n)_{ij}^2 \, (\Delta_n)_{k\ell}^2} \leq \sqrt{\E{(\Delta_n)_{ij}^4}} \sqrt{\E{(\Delta_n)_{k\ell}^4}}
	\end{equation*}
	by Cauchy-Schwarz. Let
	\begin{equation*}
		S_n = \sum_{t=0}^{n-1} (Y_t)_{ij}, \qquad (\Delta_n)_{ij} = \frac{S_n}{n}
	\end{equation*}
	By the moment-cumulant relation for zero-mean random variables,
	\begin{equation}
		\label{eq:moment-cumulant}
		\E{S_n^4} = 3 \, \E{S_n^2}^2 + \sum_{t_1, t_2, t_3, t_4 = 0}^{n-1} \kappa(Y_{t_1}, Y_{t_2}, Y_{t_3}, Y_{t_4}),
	\end{equation}
	where
	\begin{align*}
		\kappa(Y_{t_1}, Y_{t_2} Y_{t_3}, Y_{t_4}) &= \E{Y_{t_1} Y_{t_2} Y_{t_3} Y_{t_4}} - \E{Y_{t_1} Y_{t_2}}\E{Y_{t_3} Y_{t_4}} \\
		&\quad - \E{Y_{t_1} Y_{t_3}}\E{Y_{t_2} Y_{t_4}} - \E{Y_{t_1} Y_{t_4}}\E{Y_{t_2} Y_{t_3}}.
	\end{align*}
	The identity \eqn~\eqref{eq:moment-cumulant} is verified by direct substitution. From Proposition~\ref{thm:gamma-consistency}, $\E{S_n^2} = \mathcal{O}(n)$, thus $3 \, \E{S_n^2}^2 = \mathcal{O}(n^2)$. For the cumulant term, $\kappa(Y_{t_1}, Y_{t_2}, Y_{t_3}, Y_{t_4}) = c(h_1, h_2, h_3)$ by stationarity and depends only on the lag differences $h_k = t_{k+1} - t_1$, for $k \in \{1, 2, 3\}$. Hence
	\begin{equation*}
		\abs{\sum_{t_1, t_2, t_3, t_4 = 0}^{n-1} \kappa(Y_{t_1}, Y_{t_2}, Y_{t_3}, Y_{t_4})} \leq n \sum_{h_1, h_2, h_3 \in \Z} \abs{c(h_1, h_2, h_3)}.
	\end{equation*}
	The joint cumulant $c(h_1, h_2, h_3)$ involves expectations of the form $\E{Y_0 Y_{h_1} Y_{h_2} Y_{h_3}}$, each bounded by $\E{Y_0^4}$ via Cauchy-Schwarz. Then, $\E{Y_0^4} < \infty$ follows from $\E{\norm{x_0}^8} < \infty$, which holds by Lemma~\ref{thm:finite-p-moments} with $p = 8$ under the assumption $\E{\norm{\varepsilon_t}^8} < \infty$. By the same $u_h, v_h$ decomposition used in Lemma~\ref{thm:gamma-consistency}, the cumulant $c(h_1, h_2, h_3)$ decays geometrically in $\max(\abs{h_1}, \abs{h_2}, \abs{h_3})$, and thus
	\begin{equation*}
		\sum_{h_1, h_2, h_3 \in \Z} \abs{c(h_1, h_2, h_3)} < \infty.
	\end{equation*}
	Therefore 
	\begin{equation*}
		\sum_{t_1, t_2, t_3, t_4 = 0}^{n-1} \kappa(Y_{t_1}, Y_{t_2}, Y_{t_3}, Y_{t_4}) = \mathcal{O}(n) \qquad \text{as } n \to \infty,
	\end{equation*}
	and combining with \eqn~\eqref{eq:moment-cumulant},
	\begin{equation*}
		\E{(\Delta_n)_{ij}^4} = \frac{1}{n^4} \E{S_n^4} = \frac{1}{n^4} \left( \mathcal{O}(n^2) + \mathcal{O}(n) \right) = \mathcal{O}(n^{-2}). 
	\end{equation*}
\end{proof}

\unweightedbias*

\begin{proof}
	By Lemma~\ref{thm:dmd-estimator-conditional-bias} and the law of total expectation,
	\begin{equation}
		\label{eq:bias-decomposition}
		\E{\dmdop_n - \Theta} = \E{\E{\dmdop_n - \Theta \mid \mathcal{F}_{n-1}}} = \E{\Xi_n \Gamma_n^{-1}},
	\end{equation}
	where
	\begin{equation*}
		\Xi_n = \frac{1}{n} \sum_{t=1}^{n-1} \varepsilon_t \transp{x_{t-1}}, \qquad \Gamma_n = \frac{1}{n} \sum_{t=1}^n x_{t-1} \transp{x_{t-1}}.
	\end{equation*}
	Combining \eqn~\eqref{eq:bias-decomposition} with Lemma~\ref{thm:neumann-expansion},
	\begin{equation}
		\label{eq:bias-expansion}
		\E{\dmdop_n - \Theta} = \E{\Xi_n} \Gamma^{-1} - \E{\Xi_n \Gamma^{-1} \Delta_n} \Gamma^{-1} + \E{\Xi_n \Upsilon_n}.
	\end{equation}
	Since for each $t \in \{ 1, \dots, n-1 \}$, $x_{t-1}$ is $\mathcal{F}_{t-1}$-measurable and $\varepsilon_t$ is independent of $\mathcal{F}_{t-1}$,
	\begin{equation}
		\label{eq:unweighted-term1}
		\E{\Xi_n} = \frac{1}{n} \sum_{t=1}^{n-1} \E{\E{\varepsilon_t \transp{x_{t-1}} \mid \mathcal{F}_{t-1}}} = \frac{1}{n} \sum_{t=1}^{n-1} \E{\varepsilon_t \mid \mathcal{F}_{t-1}} \transp{x_{t-1}} = 0,
	\end{equation}
	so the first term in \eqn~\eqref{eq:bias-expansion} vanishes. For the second term, by Cauchy-Schwarz,
	\begin{equation*}
		\norm{\E{\Xi_n \Gamma^{-1} \Delta_n}} \leq \norm{\Gamma^{-1}} \sqrt{\E{\norm{\Xi_n}^2}} \sqrt{\E{\norm{\Delta_n}^2}}.
	\end{equation*}
	We compute
	\begin{align*}
		\E{\norm{\Xi_n}^2} &\leq \E{\norm{\Xi_n}_F^2} = \frac{1}{n^2} \sum_{t,s=1}^{n-1} \E{\left( \transp{x_{t-1}} x_{s-1} \right) \left( \transp{\varepsilon_s} \varepsilon_t \right)}.
	\end{align*}
	For $t \neq s$, independence of the $\varepsilon_t$ yields
	\begin{equation*}
		\E{\left( \transp{x_{t-1}} x_{s-1} \right) \left( \transp{\varepsilon_s} \varepsilon_t \right)} = 0,
	\end{equation*}
	while for $t = s$, independence of $\varepsilon_t$ from $\mathcal{F}_{t-1}$ gives
	\begin{equation*}
		\E{\norm{x_{t-1}}^2 \norm{\varepsilon_t}^2} = \E{\norm{x_{t-1}}^2} \E{\norm{\varepsilon_t}^2}.
	\end{equation*}
	Therefore,
	\begin{equation*}
		\E{\norm{\Xi_n}^2} \leq \frac{n-1}{n^2} \Tr\left(\Gamma\right) \Tr\left(\Sigma\right) = \mathcal{O}(n^{-1}).
	\end{equation*}
	Combined with $\E{\norm{\Delta_n}^2} = \mathcal{O}(n^{-1})$ from Proposition~\ref{thm:gamma-consistency},
	\begin{equation}
		\label{eq:unweighted-term2}
		\norm{\E{\Xi_n \Gamma^{-1} \Delta_n}} = \mathcal{O}(n^{-1}).
	\end{equation}
	For the remainder, by Proposition~\ref{thm:gamma-consistency} and Lemma~\ref{thm:gamma-n-pd}, $\Gamma_n \xrightarrow{\mathbb{P}} \Gamma$ with $\Gamma \succ 0$. Since eigenvalues are continuous functions of matrix entries,
	\begin{equation*}
		\norm{\Gamma_n^{-1}} = \frac{1}{\lambda_{\min}(\Gamma_n)} \xrightarrow{\mathbb{P}} \frac{1}{\lambda_{\min}(\Gamma)} = \norm{\Gamma^{-1}} \qquad \text{as } n \to \infty.
	\end{equation*}
	Hence, $\norm{\Gamma_n^{-1}} = \mathcal{O}_p(1)$ in the sense of Definition~\ref{def:asymptotic-boundedness}. From Lemma~\ref{thm:neumann-expansion} and the expression of the remainder, $\norm{\Upsilon_n} \leq \norm{\Gamma^{-1}}^2 \norm{\Gamma_n^{-1}} \norm{\Delta_n}^2$, yielding
	\begin{equation*}
		\E{\norm{\Upsilon_n}^2} \leq \norm{\Gamma^{-1}}^4 \, \E{\norm{\Gamma_n^{-1}}^2 \norm{\Delta_n}^4}.
	\end{equation*}
	On the event $\mathcal{E}_n = \left\{ \lambda_{\min}(\Gamma_n) > \lambda_{\min}(\Gamma)/2 \right\}$, we have $\norm{\Gamma_n^{-1}} \leq 2/\lambda_{\min}(\Gamma)$. Since $\P{\mathcal{E}_n} \to 1$ and $\Upsilon_n = 0$ can be arranged on $\mathcal{E}_n^c$ by defining $\Gamma_n^{-1} = 0$ when $\Gamma_n$ is singular,
	\begin{equation*}
		\E{\norm{\Upsilon_n}^2} \leq \frac{4}{\lambda_{\min}^2(\Gamma)} \norm{\Gamma^{-1}}^4 \, \E{\norm{\Delta_n}^4} = \mathcal{O}(n^{-2}),
	\end{equation*}
	by Lemma~\ref{thm:fourth-moment-delta}. Finally, by Cauchy-Schwarz,
	\begin{equation}
		\label{eq:unweighted-term3}
		\norm{\E{\Xi_n \Upsilon_n}} \leq \sqrt{\E{\norm{\Xi_n}^2}} \sqrt{\E{\norm{\Upsilon_n}^2}} = \sqrt{\mathcal{O}(n^{-1})} \sqrt{\mathcal{O}(n^{-2})} = \mathcal{O}(n^{-3/2}).
	\end{equation}
	Combining the bounds of \eqn~\eqref{eq:unweighted-term1}, \eqn~\eqref{eq:unweighted-term2}, \eqn~\eqref{eq:unweighted-term3} into \eqn~\eqref{eq:bias-expansion} yields
	\begin{equation*}
		\E{\dmdop_n} = \Theta + \mathcal{O}(n^{-1}) \qquad \text{as } n \to \infty.
	\end{equation*}
	Consistency of $\dmdop_n$ follows directly. Writing $\dmdop_n - \Theta = \tilde\Xi_n\Gamma_n^{-1}$ where $\tilde\Xi_n = \frac{1}{n}\sum_{t=1}^n \varepsilon_t\transp{x_{t-1}}$, we have $\E{\big\|\tilde\Xi_n\big\|^2} = \mathcal{O}(n^{-1})$ by similar arguments to $\Xi_n$, so $\tilde\Xi_n \xrightarrow{\ell^2} 0$, and $\Gamma_n^{-1} = \mathcal{O}_p(1)$ since $\lambda_{\min}(\Gamma_n) \xrightarrow{\mathbb{P}} \lambda_{\min}(\Gamma) > 0$. Hence $\dmdop_n - \Theta \xrightarrow{\mathbb{P}} 0$ as $n \to \infty$.
\end{proof}

\subsubsection{Weighted estimator}
\label{app:proofs-bias-weighted}

Throughout this section, we consider the updates of \eqn~\eqref{eq:online-dmd-updates} with fixed $\rho \in (0, 1)$.

\begin{lemma}
	\label{thm:weighted-dmd-estimator-conditional-bias}
	For any $n$, the conditional bias of $\dmdop_n$ given $\mathcal{F}_{n-1}$ is
	non-zero almost surely. Specifically,
	\begin{equation*}
		\E{\dmdop_n \mid \mathcal{F}_{n-1}} - \Theta = \Xi_n \Gamma_n^{-1} \neq 0 \qquad \text{almost surely},
	\end{equation*}
	where
	\begin{equation*}
		\Xi_n = \frac{1}{\neff} \sum_{t=1}^{n-1} \rho^{n-t} \varepsilon_t \transp{x_{t-1}}, \qquad \Gamma_n = \frac{1}{\neff} \sum_{t=1}^{n} \rho^{n-t} x_{t-1} \transp{x_{t-1}}.
	\end{equation*}
\end{lemma}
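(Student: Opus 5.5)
The plan mirrors the proof of Lemma~\ref{thm:dmd-estimator-conditional-bias}, with exponential weights in place of uniform ones. First I will show that the recursion of \eqn~\eqref{eq:online-dmd-updates} with $\rho \in (0,1)$ computes the weighted least-squares solution exactly:
\begin{equation*}
	\dmdop_n = \Bigl( \sum_{t=1}^n \rho^{n-t} x_t \transp{x_{t-1}} \Bigr) \Bigl( \sum_{t=1}^n \rho^{n-t} x_{t-1} \transp{x_{t-1}} \Bigr)^{-1}.
\end{equation*}
This goes by induction on $n$. The weighted Gram matrix satisfies $G_n = \rho G_{n-1} + x_{n-1}\transp{x_{n-1}}$. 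The update for $\Gamma_n^{-1}$ is the Sherman--Morrison formula applied to this rank-one update, after accounting for the normalisation. Substituting it into the $\Theta_n$ update reproduces $(\rho C_{n-1} + x_n \transp{x_{n-1}}) G_n^{-1}$, where $C_n$ is the weighted cross-moment. Any normalising constant, whether $n$ or $\neff$, is common to numerator and denominator and cancels, so I may divide both by $\neff$.

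Next I substitute the model $x_t = \Theta x_{t-1} + \varepsilon_t$ into the numerator. This gives the decomposition
\begin{equation*}
	\dmdop_n = \Theta + \Bigl( \tfrac{1}{\neff}\sum_{t=1}^n \rho^{n-t} \varepsilon_t \transp{x_{t-1}} \Bigr) \Gamma_n^{-1},
\end{equation*}
with $\Gamma_n$ as defined in the statement. Conditionally on $\mathcal{F}_{n-1}$, the matrix $\Gamma_n$ is measurable, because it involves only $x_0,\dots,x_{n-1}$. So it can be pulled out of the conditional expectation, together with each $\transp{x_{t-1}}$ for $t\le n$.

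I then split the noise sum by time. For $t<n$, $\varepsilon_t = x_t - \Theta x_{t-1}$ is $\mathcal{F}_{n-1}$-measurable, so its conditional expectation equals itself. For $t=n$, the martingale difference property gives $\E{\varepsilon_n \mid \mathcal{F}_{n-1}}=0$, and the weight $\rho^0=1$ does not affect this. Combining the two cases yields $\E{\dmdop_n\mid\mathcal{F}_{n-1}} - \Theta = \Xi_n\Gamma_n^{-1}$, with the weighted $\Xi_n$ of the statement.

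The main obstacle is the ``non-zero almost surely'' claim, together with invertibility of $\Gamma_n$. The paper treats this loosely in the unweighted case, and I will argue the same way. On the event $\{\Gamma_n\succ0\}$, $\Xi_n\Gamma_n^{-1}=0$ holds if and only if $\Xi_n=0$. The matrix $\Xi_n$ contains the term $\rho\,\varepsilon_{n-1}\transp{x_{n-2}}/\neff$, and $\varepsilon_{n-1}$ has a nondegenerate covariance $\Sigma\succ0$. Hence $\Xi_n=0$ forces $\varepsilon_{n-1}$ into an $\mathcal{F}_{n-2}$-determined value. I will argue that this is a null event in the same sense used in Lemma~\ref{thm:dmd-estimator-conditional-bias}, noting that it strictly requires a continuity assumption on the noise law. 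I will state this caveat explicitly rather than hide it.
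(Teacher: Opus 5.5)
Your proposal is correct and follows the paper's proof step for step: the exact weighted least-squares form, substitution of the model, $\mathcal{F}_{n-1}$-measurability of $\Gamma_n$ and of $\varepsilon_t$ for $t<n$, the martingale-difference property at $t=n$, and rescaling by $\neff$. Your Sherman--Morrison induction only fills in the exactness claim that the paper asserts without proof (the recursion's $\Gamma_n$ is $\tfrac{\rho}{n}$ times your $G_n$, a constant that cancels); your handling of ``non-zero almost surely'' via the $t=n-1$ term is sounder than the paper's inference from $\varepsilon_t\neq 0$ for each $t$, your continuity caveat is warranted, and you should also note that the claim needs $n\ge 2$ (since $\Xi_1=0$) and $n\ge\dimension$ for $\Gamma_n$ to be invertible.
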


\begin{proof}
	The updates in \eqn~\eqref{eq:online-dmd-updates} compute the weighted DMD matrix
	\begin{equation*}
		\dmdop_n = \left( \sum_{t=1}^n \rho^{n-t} x_t \transp{x_{t-1}} \right) \left( \sum_{t=1}^n \rho^{n-t} x_{t-1} \transp{x_{t-1}} \right)^{-1}
	\end{equation*}
	exactly. Substituting the model dynamics $x_t = \Theta\,x_{t-1} + \varepsilon_t$ into the first term,
	\begin{equation*}
		\sum_{t=1}^n \rho^{n-t} x_t \transp{x_{t-1}} = \Theta \sum_{t=1}^n \rho^{n-t} x_{t-1} \transp{x_{t-1}} + \sum_{t=1}^n \rho^{n-t} \varepsilon_t \transp{x_{t-1}}.
	\end{equation*}
	Right-multiplying by $\left(\sum_{t=1}^n \rho^{n-t} x_{t-1} \transp{x_{t-1}}\right)^{-1}$, we obtain
	\begin{equation}
		\label{eq:weighted-dmd-estimator-decomposition}
		\dmdop_n = \Theta + \left( \sum_{t=1}^n \rho^{n-t} \varepsilon_t \transp{x_{t-1}} \right) \left( \sum_{t=1}^n \rho^{n-t} x_{t-1} \transp{x_{t-1}} \right)^{-1}.
	\end{equation}
	Conditioned on $\mathcal{F}_{n-1}$, the denominator $\sum_{t=1}^n \rho^{n-t} x_{t-1} \transp{x_{t-1}}$ is deterministic since $x_0, \dots, x_{n-1}$ are all $\mathcal{F}_{n-1}$-measurable and the weights $\rho^{n-t}$ are deterministic. The conditional expectation of the numerator term can be written as
	\begin{equation*}
		\E{\sum_{t=1}^n \rho^{n-t} \varepsilon_t \transp{x_{t-1}} \mid \mathcal{F}_{n-1}} = \sum_{t=1}^n \rho^{n-t} \E{\varepsilon_t \mid \mathcal{F}_{n-1}} \transp{x_{t-1}},
	\end{equation*}
	since $x_{t-1}$ is $\mathcal{F}_{n-1}$-measurable for $t \leq n$. We evaluate $\E{\varepsilon_t \mid \mathcal{F}_{n-1}}$ for different $t$.
	\begin{itemize}
		\item For $t < n$, the filtration $\mathcal{F}_{n-1}$ contains $x_t$ and $x_{t-1}$. Since $\varepsilon_t = x_t - \Theta\,x_{t-1}$, the noise realisation $\varepsilon_t$ is fully determined by the information in $\mathcal{F}_{n-1}$, hence
		\begin{equation*}
			\E{\varepsilon_t \mid \mathcal{F}_{n-1}} = \varepsilon_t \neq 0 \qquad \text{almost surely}.
		\end{equation*}
		\item For $t = n$, the noise $\varepsilon_n$ occurs at time $n$ which is not contained in $\mathcal{F}_{n-1}$. By the martingale difference property,
		\begin{equation*}
			\E{\varepsilon_n \mid \mathcal{F}_{n-1}} = 0.
		\end{equation*}
	\end{itemize}
	Substituting these back,
	\begin{equation*}
		\E{\sum_{t=1}^n \rho^{n-t} \varepsilon_t \transp{x_{t-1}} \mid \mathcal{F}_{n-1}} = \sum_{t=1}^{n-1} \rho^{n-t} \varepsilon_t \transp{x_{t-1}} \neq 0 \qquad \text{almost surely}.
	\end{equation*}
	Therefore, dividing numerator and denominator by $\neff$, the conditional bias
	\begin{equation*}
		\E{\dmdop_n \mid \mathcal{F}_{n-1}} - \Theta = \left( \frac{1}{\neff}\sum_{t=1}^{n-1} \rho^{n-t} \varepsilon_t \transp{x_{t-1}} \right) \left( \frac{1}{\neff}\sum_{t=1}^n \rho^{n-t} x_{t-1} \transp{x_{t-1}} \right)^{-1}
	\end{equation*}
	depends on the realised history of the noise and is non-zero almost surely.
\end{proof}

\begin{proposition}
	\label{thm:weighted-gamma-consistency}
	If $\E{\norm{\varepsilon_t}^4} < \infty$, then $\E{\Gamma_n} = \Gamma$ and
	\begin{equation*}
		\E{\norm{\Gamma_n - \Gamma}^2} \leq c_{\gamma} \, \neff^{-1} \qquad \text{as } n \to \infty,
	\end{equation*}
	where $c_{\gamma} > 0$ depends only on $\Theta$, $\Sigma$, $\dimension$.
\end{proposition}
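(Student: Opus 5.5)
The plan mirrors the proof of Proposition~\ref{thm:gamma-consistency}, replacing uniform averaging by exponential weights. Write $\Delta_n = \Gamma_n - \Gamma$ and $w_t = \rho^{n-t}/\neff$, so that $\sum_{t=1}^n w_t = 1$ and
\[
\Gamma_n = \sum_{t=1}^n w_t\, x_{t-1}\transp{x_{t-1}} .
\]
Unbiasedness is immediate. By stationarity $\E{x_{t-1}\transp{x_{t-1}}} = \Gamma$ for every $t$, and the weights sum to one, so $\E{\Gamma_n} = \Gamma$.

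For the second moment, work entrywise as before. Set $(Y_t)_{ij} = x_{t,i}x_{t,j} - \Gamma_{ij}$. Then
\[
\Var{(\Delta_n)_{ij}} = \sum_{s,t=1}^n w_s w_t\, \gamma_{ij}(t-s),
\]
with $\gamma_{ij}$ the same lag autocovariance as in the unweighted proof. I will reuse the bound $\abs{\gamma_{ij}(h)} \le \omega \beta^{\abs{h}}$ already established there via the $u_h, v_h$ split. That bound only needs finite fourth moments, supplied by Lemma~\ref{thm:finite-p-moments} with $p=4$, and $\omega,\beta$ depend only on $\Theta$ and $\Sigma$.

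The key step bounds the quadratic form. Since $w_s w_t \le \tfrac12(w_s^2+w_t^2)$,
\[
\sum_{s,t} w_s w_t \abs{\gamma_{ij}(t-s)} \le \Big(\max_s \sum_t \abs{\gamma_{ij}(t-s)}\Big)\sum_s w_s^2 \le \frac{2\omega}{1-\beta}\sum_s w_s^2 .
\]
One could equally use Young's or Schur's test with the summable kernel. Next compute
\[
\sum_{s=1}^n w_s^2 = \frac{1}{\neff^2}\sum_{k=0}^{n-1}\rho^{2k} = \frac{1-\rho^{2n}}{(1-\rho^2)\neff^2}.
\]
Using $\neff = (1-\rho^n)/(1-\rho)$, this equals
\[
\frac{1+\rho^n}{1+\rho}\cdot\frac{1}{\neff} \le \frac{2}{\neff}.
\]
Hence $\Var{(\Delta_n)_{ij}} \le 4\omega(1-\beta)^{-1}\neff^{-1}$. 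Summing over the $\dimension^2$ entries and using $\norm{\Delta_n}^2 \le \norm{\Delta_n}_F^2$ gives the claim with $c_\gamma = 4\dimension^2\omega/(1-\beta)$. This constant depends only on $\Theta$, $\Sigma$, $\dimension$, and on the noise fourth moment through $\omega$, which the statement implicitly folds into $\Sigma$ and the noise law.

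The main obstacle is the weighted double-sum step: the constant must stay uniform in $\rho$ and $n$, and the weights' $\ell_2$ norm must be identified with $\neff^{-1}$ rather than $n^{-1}$. The exact geometric-series identity above handles this cleanly. A secondary check is that the geometric decay of $\gamma_{ij}$ needs only fourth moments, not the eighth, and this is indeed how it was proved earlier.
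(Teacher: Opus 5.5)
Your proof is correct and follows the paper's skeleton: unbiasedness from stationarity, the entrywise variance as a weighted double sum of lag autocovariances, the reused bound $\abs{\gamma_{ij}(h)} \le \omega\beta^{\abs{h}}$ (and you are right that $\omega$, hence $c_\gamma$, also depends on the noise fourth moment), and Frobenius summation over the $\dimension^2$ entries. The only difference is the double-sum step: the paper extends the sum to $s,t \ge 0$, evaluates $\sum_{s,t}\rho^{s+t}\beta^{\abs{t-s}} = \frac{1+\rho\beta}{(1-\rho^2)(1-\rho\beta)}$ in closed form, and then needs $\rho^n \le \tfrac12$ to get $\neff^2(1-\rho^2) \ge \neff/2$; your Schur-type bound $w_s w_t \le \tfrac12(w_s^2+w_t^2)$, combined with the exact identity $\sum_s w_s^2 = \frac{1+\rho^n}{1+\rho}\,\neff^{-1}$ (which is in fact $\le \neff^{-1}$), gives the rate for every $n \ge 1$, not only asymptotically.
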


\begin{proof}
	By stationarity and the definition of $\neff = \sum_{t=1}^n \rho^{n-t}$,
	\begin{equation*}
		\E{\Gamma_n} = \frac{1}{\neff} \sum_{t=1}^n \rho^{n-t} \, \E{x_{t-1} \transp{x_{t-1}}} = \frac{1}{\neff} \sum_{t=1}^n \rho^{n-t} \Gamma = \Gamma.
	\end{equation*}
	Let $\Delta_n = \Gamma_n - \Gamma$ and $(Y_t)_{ij} = x_{t,i} x_{t,j} - \Gamma_{ij}$, so that $\E{(Y_t)_{ij}} = 0$ and
	\begin{equation*}
		(\Delta_n)_{ij} = \frac{1}{\neff} \sum_{t=0}^{n-1} \rho^{n-t-1} (Y_t)_{ij}.
	\end{equation*}
	By stationarity, $\Cov{(Y_s)_{ij}}{(Y_t)_{ij}} = \gamma_{ij}(t-s)$ depends only	on the lag, so
	\begin{equation*}
		\Var{(\Delta_n)_{ij}} = \frac{1}{\neff^2} \sum_{s,t=0}^{n-1} \rho^{n-t-1} \rho^{n-s-1} \, \gamma_{ij}(t - s).
	\end{equation*}
	Similarly to the proof of Proposition~\ref{thm:gamma-consistency}, there exist constants $\omega > 0$ and $\beta \in (\zeta(\Theta), 1)$ such that $\abs{\gamma_{ij}(h)} \leq \omega \beta^{\abs{h}}$. Re-indexing with $s^\prime = n - s - 1$ and $t^\prime = n - t - 1$,
	\begin{equation*}
		\sum_{s,t \geq 0} \rho^s \rho^t \beta^{\abs{t-s}} = \frac{1}{1-\rho^2} + \frac{2}{1-\rho^2} \frac{\rho\beta}{1-\rho\beta} = \frac{1 + \rho\beta}{(1-\rho^2)(1-\rho\beta)}.
	\end{equation*}
	By definition of $\neff$, $\neff (1-\rho) = 1-\rho^n \to 1$ as $n \to \infty$. For all $n$ sufficiently large (specifically, whenever $\rho^n \leq \frac{1}{2}$), we have $1-\rho^n \geq \frac{1}{2}$, giving
	\begin{equation*}
		\neff^2(1-\rho^2) = \neff (1-\rho^n)(1+\rho) \geq \frac{\neff}{2}.
	\end{equation*}
	Bounding $\frac{(1+\rho\beta)}{(1-\rho\beta)} \leq \frac{(1+\beta)}{(1-\beta)}$ uniformly over $\rho \in (0, 1)$, we obtain
	\begin{equation*}
		\Var{(\Delta_n)_{ij}} \leq \frac{2\omega(1+\beta)}{1-\beta} \, \neff^{-1}.
	\end{equation*}
	Summing over all $(i,j)$ and using $\E{\norm{\Delta_n}^2} \leq \sum_{i,j} \Var{(\Delta_n)_{ij}}$,
	\begin{equation*}
		\E{\norm{\Gamma_n - \Gamma}^2} \leq c_{\gamma} \, \neff^{-1}, \qquad \text{where } c_{\gamma} = \frac{2 \dimension^2 \omega(1+\beta)}{1-\beta}.
	\end{equation*}
\end{proof}

\begin{lemma}
	\label{thm:weighted-gamma-n-pd}
	Let $\mathcal{E}_n = \left\{ \lambda_{\min}(\Gamma_n) \geq \lambda_{\min}(\Gamma)/2 \right\}$. If $\E{\norm{\varepsilon_t}^4} < \infty$, then
	\begin{equation*}
		\P{\mathcal{E}_n} \geq 1 - c_e \, \neff^{-1} \qquad \text{as } n \to \infty,
	\end{equation*}
	where $c_e > 0$ depends only on $\Theta$, $\Sigma$, $\dimension$.
\end{lemma}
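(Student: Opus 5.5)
The plan mirrors the proof of Lemma~\ref{thm:gamma-n-pd}, replacing the qualitative convergence by the quantitative second-moment bound of Proposition~\ref{thm:weighted-gamma-consistency}.

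First, apply Weyl's inequality (Lemma~\ref{thm:weyl-inequality}) with $A = \Gamma$ and $B = \Delta_n = \Gamma_n - \Gamma$, both symmetric. Since $\lambda_{\min}(\Delta_n) \geq -\norm{\Delta_n}$, this gives $\lambda_{\min}(\Gamma_n) \geq \lambda_{\min}(\Gamma) - \norm{\Delta_n}$. Hence, on the event $\{\norm{\Delta_n} \leq \lambda_{\min}(\Gamma)/2\}$ we have $\lambda_{\min}(\Gamma_n) \geq \lambda_{\min}(\Gamma)/2$. This yields the inclusion
\begin{equation*}
	\mathcal{E}_n^c \subseteq \left\{ \norm{\Delta_n} > \lambda_{\min}(\Gamma)/2 \right\}.
\end{equation*}
Here $\lambda_{\min}(\Gamma) > 0$ by Lemma~\ref{thm:gamma-pd}, and it depends only on $\Theta$ and $\Sigma$ through the Lyapunov equation.

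Second, bound the probability of the larger event using Markov's inequality on $\norm{\Delta_n}^2$, together with Proposition~\ref{thm:weighted-gamma-consistency}:
\begin{equation*}
	\P{\mathcal{E}_n^c} \leq \frac{4\,\E{\norm{\Delta_n}^2}}{\lambda_{\min}(\Gamma)^2} \leq \frac{4 c_\gamma}{\lambda_{\min}(\Gamma)^2}\,\neff^{-1}.
\end{equation*}
We therefore set $c_e = 4 c_\gamma / \lambda_{\min}(\Gamma)^2$. This constant depends only on $\Theta$, $\Sigma$ and $\dimension$, because $c_\gamma$ does. The qualifier ``as $n \to \infty$'' is inherited from Proposition~\ref{thm:weighted-gamma-consistency}, whose bound holds once $\rho^n \leq 1/2$.

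The step needing most care is not this chain of inequalities but confirming that $c_\gamma$ is genuinely uniform, with no hidden dependence on $\rho$ or $n$. That uniformity is asserted in Proposition~\ref{thm:weighted-gamma-consistency}, via the bound $(1+\rho\beta)/(1-\rho\beta) \leq (1+\beta)/(1-\beta)$, and we simply invoke it. Beyond that, the only point to check is that Weyl's inequality is applied to real symmetric matrices, which holds because $\Gamma_n$ and $\Gamma$ are both symmetric.
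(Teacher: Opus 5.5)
Your proof is correct and is exactly the paper's argument. Weyl's inequality gives $\lambda_{\min}(\Gamma_n) \geq \lambda_{\min}(\Gamma) - \norm{\Delta_n}$, and Markov's inequality on $\norm{\Delta_n}^2$ with $\delta = \lambda_{\min}(\Gamma)/2$ together with Proposition~\ref{thm:weighted-gamma-consistency} gives $c_e = 4c_\gamma/\lambda_{\min}(\Gamma)^2$, which is precisely the paper's $8\dimension^2\omega(1+\beta)/(\lambda_{\min}^2(\Gamma)(1-\beta))$; on the uniformity point you raise, $c_\gamma$ is indeed free of $\rho$ and $n$, but through $\omega$ it also depends on the noise's fourth moment, not only on $\Theta$, $\Sigma$, $\dimension$ --- an imprecision you inherit from the paper's statement rather than introduce.
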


\begin{proof}
	By Weyl's inequality (Lemma~\ref{thm:weyl-inequality}) with $A = \Gamma$ and $B = \Delta_n = \Gamma_n - \Gamma$,
	\begin{equation}
		\lambda_{\min}(\Gamma_n) \geq \lambda_{\min}(\Gamma) + \lambda_{\min}(\Delta_n) \geq \lambda_{\min}(\Gamma) - \norm{\Delta_n},
	\end{equation}
	since $\lambda_{\min}(B) \geq -\norm{B}$ for any Hermitian matrix $B$. By Proposition~\ref{thm:weighted-gamma-consistency} and Markov's inequality, for any $\delta > 0$,
	\begin{equation*}
		\P{\norm{\Delta_n} > \delta} \leq \frac{\E{\norm{\Delta_n}^2}}{\delta^2} \leq \frac{c_\gamma}{\delta^2} \, \neff^{-1}.
	\end{equation*}
	Taking $\delta = \lambda_{\min}(\Gamma)/2 > 0$ by Lemma~\ref{thm:gamma-pd},
	\begin{equation*}
		\P{\lambda_{\min}(\Gamma_n) > \frac{\lambda_{\min}(\Gamma)}{2}} \geq 1 - c_e \, \neff^{-1}, \qquad \text{where } c_e = \frac{8 \dimension^2 \omega (1+\beta)}{\lambda^2_{\min}(\Gamma) (1-\beta)}.
	\end{equation*}
\end{proof}

\begin{lemma}
	\label{thm:weighted-neumann-expansion}
	On the event $\mathcal{E}_n = \left\{ \lambda_{\min}(\Gamma_n) \geq \lambda_{\min}(\Gamma)/2 \right\}$,
	\begin{equation*}
		\Gamma_n^{-1} = \Gamma^{-1} - \Gamma^{-1} \Delta_n \Gamma^{-1} + \Upsilon_n,
	\end{equation*}
	where $\Upsilon_n = \Gamma^{-1} \Delta_n \Gamma_n^{-1} \Delta_n \Gamma^{-1}$ and $\Delta_n = \Gamma_n - \Gamma$.
\end{lemma}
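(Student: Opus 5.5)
This is a purely algebraic identity, and I would prove it exactly as in Lemma~\ref{thm:neumann-expansion}. The weighting by $\rho^{n-t}$ and the normalisation by $\neff$ only change the definition of $\Gamma_n$, not the algebra. The only point that needs attention is invertibility. On $\mathcal{E}_n$ we have $\lambda_{\min}(\Gamma_n) \geq \lambda_{\min}(\Gamma)/2 > 0$, using $\lambda_{\min}(\Gamma)>0$ from Lemma~\ref{thm:gamma-pd}. Since $\Gamma_n$ is symmetric, it is positive definite and therefore invertible, with $\norm{\Gamma_n^{-1}} \leq 2/\lambda_{\min}(\Gamma)$. So $\mathcal{E}_n \subseteq \{\Gamma_n \succ 0\}$, and every inverse below is well defined on $\mathcal{E}_n$.

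Next I would write down the resolvent identity. Multiplying $\Gamma_n - \Gamma$ on the left by $\Gamma_n^{-1}$ and on the right by $\Gamma^{-1}$ gives
\begin{equation*}
\Gamma_n^{-1} - \Gamma^{-1} = -\Gamma_n^{-1}\Delta_n\Gamma^{-1}.
\end{equation*}
Multiplying the same difference on the left by $\Gamma^{-1}$ and on the right by $\Gamma_n^{-1}$ gives the companion form
\begin{equation*}
\Gamma_n^{-1} - \Gamma^{-1} = -\Gamma^{-1}\Delta_n\Gamma_n^{-1}.
\end{equation*}

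To get the expansion, I substitute the second form into itself. Writing $\Gamma_n^{-1} = \Gamma^{-1} - \Gamma^{-1}\Delta_n\Gamma_n^{-1}$ and replacing the trailing $\Gamma_n^{-1}$ by $\Gamma^{-1} - \Gamma_n^{-1}\Delta_n\Gamma^{-1}$ (the first form) gives
\begin{equation*}
\Gamma_n^{-1} = \Gamma^{-1} - \Gamma^{-1}\Delta_n\Gamma^{-1} + \Gamma^{-1}\Delta_n\Gamma_n^{-1}\Delta_n\Gamma^{-1}.
\end{equation*}
The last term is exactly $\Upsilon_n$ in the symmetric form stated. This avoids the detour in Lemma~\ref{thm:neumann-expansion} of first producing the non-symmetric remainder $\Gamma_n^{-1}\Delta_n\Gamma^{-1}\Delta_n\Gamma^{-1}$ and then converting it.

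There is no real obstacle. The only care needed is to apply the two one-sided resolvent forms in the right order so the remainder comes out with $\Gamma_n^{-1}$ sandwiched between the two copies of $\Delta_n$. I would also record $\norm{\Upsilon_n} \leq 2\norm{\Gamma^{-1}}^2\norm{\Delta_n}^2/\lambda_{\min}(\Gamma)$ on $\mathcal{E}_n$, since that bound is what the subsequent bias argument for Theorem~\ref{thm:weighted-bias} will use.
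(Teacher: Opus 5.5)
Your proof is correct and takes the paper's route: the resolvent identity substituted into itself, with invertibility and $\norm{\Gamma_n^{-1}} \leq 2/\lambda_{\min}(\Gamma)$ read off from the definition of $\mathcal{E}_n$. The only difference is cosmetic. You substitute the companion form $\Gamma_n^{-1} - \Gamma^{-1} = -\Gamma^{-1}\Delta_n\Gamma_n^{-1}$ first, so the symmetric remainder $\Upsilon_n$ appears directly; the proof of Lemma~\ref{thm:neumann-expansion} instead produces $\Gamma_n^{-1}\Delta_n\Gamma^{-1}\Delta_n\Gamma^{-1}$ and then converts it using that same companion identity.
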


\begin{proof}
	The resolvent expansion follows by the same algebraic identity as in Lemma~\ref{thm:neumann-expansion}. The bound $\norm{\Gamma_n^{-1}} \leq 2/\lambda_{\min}(\Gamma)$ follows directly from the definition of $\mathcal{E}_n$.
\end{proof}

\begin{lemma}
	\label{thm:weighted-fourth-moment-delta}
	If $\E{\norm{\varepsilon_t}^8} < \infty$, then $\E{\norm{\Delta_n}^4} \leq c^\prime_{\gamma} \, \neff^{-2}$ as $n \to \infty$, where $c^\prime_{\gamma} > 0$ depends only on $\Theta$, $\Sigma$, $\dimension$.
\end{lemma}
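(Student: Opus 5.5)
The proof follows the template of Lemma~\ref{thm:fourth-moment-delta}, with the uniform weights $1/n$ replaced by the exponential weights $w_t = \rho^{n-t-1}/\neff$.

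\textbf{Reduction to one entry.} Using $\norm{\Delta_n}^4 \leq \norm{\Delta_n}_F^4$ and Cauchy--Schwarz on the cross terms, it suffices to show $\E{(\Delta_n)_{ij}^4} \leq c\,\neff^{-2}$ for each fixed $(i,j)$. The final constant $c^\prime_\gamma$ is then $\dimension^4$ times the entrywise constant.

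\textbf{Moment--cumulant identity.} Write $(\Delta_n)_{ij} = \sum_{t=0}^{n-1} w_t (Y_t)_{ij}$ with $w_t = \rho^{n-t-1}/\neff$. Applying the identity~\eqref{eq:moment-cumulant} to the weighted sum gives
\begin{equation*}
	\E{(\Delta_n)_{ij}^4} = 3\,\Var{(\Delta_n)_{ij}}^2 + \sum_{t_1,\dots,t_4} w_{t_1} w_{t_2} w_{t_3} w_{t_4}\, \kappa(Y_{t_1},Y_{t_2},Y_{t_3},Y_{t_4}).
\end{equation*}

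\textbf{Gaussian part.} Proposition~\ref{thm:weighted-gamma-consistency} already gives $\Var{(\Delta_n)_{ij}} \leq c\,\neff^{-1}$ for $n$ large, with a constant uniform in $\rho$. Hence the first term is $\mathcal{O}(\neff^{-2})$.

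\textbf{Cumulant part.} This is the step needing care.
\begin{itemize}
	\item By stationarity, $\kappa = c(h_1,h_2,h_3)$ with $h_k = t_{k+1}-t_1$.
	\item Under $\E{\norm{\varepsilon_t}^8}<\infty$ and Lemma~\ref{thm:finite-p-moments} with $p=8$, the same $u_h,v_h$ decomposition as in Lemma~\ref{thm:fourth-moment-delta} gives absolute summability: $K := \sum_{h_1,h_2,h_3}\abs{c(h_1,h_2,h_3)} < \infty$, with $K$ depending only on $\Theta$, $\Sigma$, $\dimension$.
	\item Bound $w_{t_2}w_{t_3}w_{t_4} \leq \neff^{-3}$, since every weight satisfies $\rho^{\cdot}\leq 1$.
	\item The remaining factor satisfies $\sum_{t_1} w_{t_1} = 1$.
\end{itemize}
The cumulant sum is therefore at most $K\,\neff^{-3} \leq K\,\neff^{-2}$, since $\neff \geq 1$.

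This crude bound suffices and avoids tracking how the weights interact with the lags. If one wants to be less wasteful, the sharper route is to bound $\sum_t w_t^4 \cdot \sum_{h}\abs{c} \lesssim \neff^{-3}$ via geometric decay, but this is unnecessary here.

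\textbf{Main obstacle.} The delicate point is justifying that $c(h_1,h_2,h_3)$ is absolutely summable over $\Z^3$, i.e.\ that it decays geometrically in $\max_k\abs{h_k}$ rather than merely being bounded. I would handle this as follows:
\begin{itemize}
	\item Order the times and split at the largest gap.
	\item Replace $x_t$ after the gap by its component $u$ that is independent of the past, incurring an $L^8$ error of order $\beta^{\text{gap}}$.
	\item Use that a joint cumulant vanishes when its arguments split into two independent groups.
\end{itemize}
This gives $\abs{c}\lesssim \beta^{\max\text{gap}}$. Since the spread is at most three times the largest gap, summability over $\Z^3$ follows. Hölder's inequality with eighth moments controls the error terms.

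The remaining step is to collect the constants, all uniform in $\rho\in(0,1)$ for $n$ large (for instance whenever $\rho^n\le 1/2$), which gives $c^\prime_\gamma$.
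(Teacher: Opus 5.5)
Your proposal is correct and follows the paper's proof essentially step for step. Both use the entrywise reduction, the moment--cumulant split, the variance bound of Proposition~\ref{thm:weighted-gamma-consistency} for the Gaussian part, and control of the cumulant sum by summing out $t_1$; your $\sum_{t_1} w_{t_1}=1$ with $w_t\le\neff^{-1}$ is exactly the paper's factor $\neff$ divided by $\neff^{4}$. Your largest-gap argument just makes explicit the geometric decay of $c(h_1,h_2,h_3)$ that the paper only asserts via the $u_h,v_h$ decomposition. As in the paper, your constants (your $K$, and $\omega$) also depend on the noise's higher moments, not only on $\Theta$, $\Sigma$, $\dimension$.
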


\begin{proof}
	The proof follows the same structure as Lemma~\ref{thm:fourth-moment-delta}. Since $\norm{\Delta_n}^4 \leq \norm{\Delta_n}_F^4$, it suffices to show $\E{(\Delta_n)_{ij}^4} = \mathcal{O}(\neff^{-2})$ for each $(i,j)$, with the cross terms handled by Cauchy-Schwarz as before. Let
	\begin{equation*}
		S_n = \sum_{t=0}^{n-1} \rho^{n-t-1} (Y_t)_{ij}, \qquad (\Delta_n)_{ij} = \frac{S_n}{\neff},
	\end{equation*}
	where $(Y_t)_{ij} = x_{t,i} x_{t,j} - \Gamma_{ij}$ satisfies $\E{(Y_t)_{ij}} = 0$. By the moment-cumulant relation,
	\begin{equation*}
		\E{S_n^4} = 3 \, \E{S_n^2}^2 + \sum_{t_1,t_2,t_3,t_4=0}^{n-1} \rho^{\sum_{k=1}^4 (n-t_k+1)} \kappa(Y_{t_1}, Y_{t_2}, Y_{t_3}, Y_{t_4}).
	\end{equation*}
	From Proposition~\ref{thm:weighted-gamma-consistency}, $\E{S_n^2} = \mathcal{O}(\neff)$, thus $3 \, \E{S_n^2}^2 = \mathcal{O}(\neff^2)$. For the cumulant sum, since $\kappa(Y_{t_1}, Y_{t_2}, Y_{t_3}, Y_{t_4}) = c(h_1, h_2, h_3)$ depends only on lag differences by stationarity, and decays geometrically in $\max(\abs{h_1},\abs{h_2},\abs{h_3})$ by the same $u_h, v_h$ decomposition as in Lemma~\ref{thm:gamma-consistency} under $\E{\norm{\varepsilon_t}^8} < \infty$,
	\begin{align*}
		\abs{\sum_{t_1,t_2,t_3,t_4=0}^{n-1} \rho^{\sum_{k=1}^4 (n-t_k+1)} \kappa(Y_{t_1}, Y_{t_2}, Y_{t_3}, Y_{t_4})} &\leq \neff \sum_{h_1, h_2, h_3 \in \Z} \abs{c(h_1, h_2, h_3)} \\
		&= \mathcal{O}(\neff),
	\end{align*}
	where the bound $\mathcal{O}(\neff)$ follows by fixing $t_1$ (contributing a factor of $\sum_{t_1} \rho^{n-t_1-1} \leq \neff$) and summing over the remaining lag differences. Therefore,
	\begin{equation*}
		\E{(\Delta_n)_{ij}^4} = \frac{1}{\neff^4} \E{S_n^4} = \frac{1}{\neff^4}\left(\mathcal{O}(\neff^2) + \mathcal{O}(\neff)\right) = \mathcal{O}(\neff^{-2}),
	\end{equation*}
	and summing over all $(i,j)$ yields $\E{\norm{\Delta_n}^4} \leq c^\prime_{\gamma} \, \neff^{-2}$.
\end{proof}

\weightedbias*

\begin{proof}
	By Lemma~\ref{thm:weighted-dmd-estimator-conditional-bias} and the law of total expectation,
	\begin{equation}
		\label{eq:weighted-bias-decomposition}
		\E{\dmdop_n - \Theta} = \E{\E{\dmdop_n - \Theta \mid \mathcal{F}_{n-1}}} = \E{\Xi_n \Gamma_n^{-1}},
	\end{equation}
	where
	\begin{equation*}
		\Xi_n = \frac{1}{\neff} \sum_{t=1}^{n-1} \rho^{n-t} \varepsilon_t \transp{x_{t-1}}, \qquad \Gamma_n = \frac{1}{\neff} \sum_{t=1}^n \rho^{n-t} x_{t-1} \transp{x_{t-1}}.
	\end{equation*}
	Restricting to $\mathcal{E}_n$ and applying Lemma~\ref{thm:weighted-neumann-expansion},
	\begin{equation}
		\label{eq:weighted-bias-expansion}
		\E{\left(\dmdop_n - \Theta\right) \mathbbm{1}_{\mathcal{E}_n}} = \E{\Xi_n \Gamma^{-1} \mathbbm{1}_{\mathcal{E}_n}} - \E{\Xi_n \Gamma^{-1} \Delta_n \Gamma^{-1} \mathbbm{1}_{\mathcal{E}_n}} + \E{\Xi_n \Upsilon_n \mathbbm{1}_{\mathcal{E}_n}}.
	\end{equation}
	We first record the bound
	\begin{equation}
		\label{eq:weighted-xi-bound}
		\E{\norm{\Xi_n}^2} \leq \frac{\Tr(\Gamma)\Tr(\Sigma)}{\neff^2(1-\rho^2)} \leq 2 \Tr(\Gamma)\Tr(\Sigma) \, \neff^{-1},
	\end{equation}
	obtained by expanding $\E{\norm{\Xi_n}_F^2}$, using independence of the $\varepsilon_t$ to eliminate cross-terms, and $\neff^2(1-\rho^2) \geq \neff/2$ as in Lemma~\ref{thm:weighted-gamma-consistency}. Since $x_{t-1}$ is $\mathcal{F}_{t-1}$-measurable and $\varepsilon_t$ is independent of $\mathcal{F}_{t-1}$ for each $t \in \{1,\dots,n-1\}$,
	\begin{equation*}
		\E{\Xi_n} = \frac{1}{\neff} \sum_{t=1}^{n-1} \rho^{n-t} \E{\E{\varepsilon_t \transp{x_{t-1}} \mid \mathcal{F}_{t-1}}} = 0,
	\end{equation*}
	so $\E{\Xi_n \Gamma^{-1} \mathbbm{1}_{\mathcal{E}_n}} = -\E{\Xi_n \Gamma^{-1} \mathbbm{1}_{\mathcal{E}_n^c}}$. By \eqn~\eqref{eq:weighted-xi-bound} and Lemma~\ref{thm:weighted-gamma-n-pd},
	\begin{align}
		\norm{\E{\Xi_n \Gamma^{-1} \mathbbm{1}_{\mathcal{E}_n}}} &\leq \norm{\Gamma^{-1}} \sqrt{\E{\norm{\Xi_n}^2}} \sqrt{\P{\mathcal{E}_n^c}} \nonumber \\
		&\leq \frac{\sqrt{2 \Tr(\Gamma) \Tr(\Sigma) \, c_e}}{\lambda_{\min}(\Gamma)} \, \neff^{-1}. \label{eq:weighted-dmd-bound1}
	\end{align}
	By Cauchy-Schwarz and $\mathbbm{1}_{\mathcal{E}_n} \leq 1$,
	\begin{equation*}
		\norm{\E{\Xi_n \Gamma^{-1} \Delta_n \Gamma^{-1} \mathbbm{1}_{\mathcal{E}_n}}} \leq \norm{\Gamma^{-1}}^2 \sqrt{\E{\norm{\Xi_n}^2}} \sqrt{\E{\norm{\Delta_n}^2}}.
	\end{equation*}
	Combining \eqn~\eqref{eq:weighted-xi-bound} with $\E{\norm{\Delta_n}^2} \leq c_{\gamma} \, \neff^{-1}$ from Lemma~\ref{thm:weighted-gamma-consistency},
	\begin{equation}
		\label{eq:weighted-dmd-bound2}
		\norm{\E{\Xi_n \Gamma^{-1} \Delta_n \Gamma^{-1} \mathbbm{1}_{\mathcal{E}_n}}} \leq \frac{\sqrt{2 \Tr(\Gamma)\Tr(\Sigma) \, c_\gamma}}{\lambda_{\min}^2(\Gamma)} \, \neff^{-1}.
	\end{equation}
	On $\mathcal{E}_n$, Lemma~\ref{thm:weighted-neumann-expansion} gives $\norm{\Upsilon_n} \leq (2/\lambda_{\min}(\Gamma))\norm{\Gamma^{-1}}^2 \norm{\Delta_n}^2$. By Cauchy-Schwarz,
	\begin{equation*}
		\norm{\E{\Xi_n \Upsilon_n \mathbbm{1}_{\mathcal{E}_n}}} \leq \frac{2\norm{\Gamma^{-1}}^2}{\lambda_{\min}(\Gamma)} \sqrt{\E{\norm{\Xi_n}^2}} \sqrt{\E{\norm{\Delta_n}^4}}.
	\end{equation*}
	By Lemma~\ref{thm:weighted-fourth-moment-delta}, $\E{\norm{\Delta_n}^4} \leq c^\prime_{\gamma} \, \neff^{-2}$. Substituting into the above with \eqn~\eqref{eq:weighted-xi-bound},
	\begin{equation}
		\label{eq:weighted-dmd-bound3}
		\norm{\E{\Xi_n \Upsilon_n \mathbbm{1}_{\mathcal{E}_n}}} \leq \frac{\sqrt{8 \Tr(\Gamma)\Tr(\Sigma) \, c^\prime_\gamma}}{\lambda_{\min}^3(\Gamma)} \, \neff^{-3/2}.
	\end{equation}
	Combining the three bounds of \eqn~\eqref{eq:weighted-dmd-bound1}, \eqn~\eqref{eq:weighted-dmd-bound2}, and \eqn~\eqref{eq:weighted-dmd-bound3} into \eqn~\eqref{eq:weighted-bias-expansion} establishes the truncated bias bound
	\begin{equation}
		\label{eq:weighted-bias-bound-cond}
		\norm{\E{\left(\dmdop_n - \Theta\right) \mathbbm{1}_{\mathcal{E}_n}}} \leq c \, \neff^{-1} \qquad \text{as } n \to \infty,
	\end{equation}
	for some constant $c > 0$ depending only on $\Theta$, $\Sigma$, and $\dimension$. We then note that
	\begin{equation*}
		\E{\dmdop_n - \Theta} = \E{\left(\dmdop_n - \Theta\right) \mathbbm{1}_{\mathcal{E}_n}} + \E{\left(\dmdop_n - \Theta\right) \mathbbm{1}_{\mathcal{E}_n^c}},
	\end{equation*}
	where the first term satisfies $\norm{\cdot} \leq c \, \neff^{-1}$ by \eqn~\eqref{eq:weighted-bias-bound-cond}. For the second, as in Theorem~\ref{thm:unweighted-bias}, we adopt the convention that $\Gamma_n^{-1} = 0$ on the event that $\Gamma_n$ is singular, so that $\Xi_n \Gamma_n^{-1}$ is well-defined everywhere. By Cauchy-Schwarz,
	\begin{equation*}
		\norm{\E{\left(\dmdop_n - \Theta\right) \mathbbm{1}_{\mathcal{E}_n^c}}} \leq \sqrt{\E{\norm{\Xi_n}^2 \norm{\Gamma_n^{-1}}^2}} \sqrt{\P{\mathcal{E}_n^c}},
	\end{equation*}
	and by a further application of Cauchy-Schwarz,
	\begin{equation*}
		\E{\norm{\Xi_n}^2 \norm{\Gamma_n^{-1}}^2} \leq \sqrt{\E{\norm{\Xi_n}^4}} \sqrt{\E{\norm{\Gamma_n^{-1}}^4}}.
	\end{equation*}
	We bound each factor separately. 
	
	Since $\norm{\Xi_n}^4 \leq \norm{\Xi_n}_F^4$, it suffices to show $\E{(\Xi_n)_{ij}^4} = \mathcal{O}(\neff^{-2})$ for each $(i,j)$, with cross terms handled by Cauchy-Schwarz. Let $Z_t = \varepsilon_{t,i} \, x_{t-1,j}$ and $T_n = \sum_{t=1}^{n-1} \rho^{n-t} Z_t$, so that we have $(\Xi_n)_{ij} = T_n / \neff$. By the moment-cumulant relation,
	\begin{equation*}
		\E{T_n^4} = 3\,\E{T_n^2}^2 + \sum_{t_1,t_2,t_3,t_4=1}^{n-1} \rho^{\sum_{k=1}^{4}(n-t_k)} \kappa(Z_{t_1}, Z_{t_2}, Z_{t_3}, Z_{t_4}).
	\end{equation*}
	From \eqn~\eqref{eq:weighted-xi-bound}, $\E{T_n^2} = \mathcal{O}(\neff)$, so $3\,\E{T_n^2}^2 = \mathcal{O}(\neff^2)$. For the cumulant sum, since the $\varepsilon_t$ are mutually independent, $\kappa(Z_{t_1}, Z_{t_2}, Z_{t_3}, Z_{t_4})$ depends only on the lag differences by stationarity and decays geometrically in $\max(\abs{h_1}, \abs{h_2}, \abs{h_3})$ by the $u_h, v_h$ decomposition of Lemma~\ref{thm:gamma-consistency}, with $\E{Z_t^4} < \infty$ following from $\E{\norm{\varepsilon_t}^8} < \infty$ via Lemma~\ref{thm:finite-p-moments}. Fixing $t_1$ (contributing at most $\neff$) and summing absolutely over the remaining lag differences yields the cumulant sum as $\mathcal{O}(\neff)$. Therefore $\E{T_n^4} = \mathcal{O}(\neff^2)$, yielding
	\begin{equation*}
		\E{\norm{\Xi_n}^4} = \mathcal{O}(\neff^{-2}).
	\end{equation*}
	We have $\norm{\Gamma_n^{-1}} \leq 2/\lambda_{\min}(\Gamma)$ with probability at least $1 - c_e \, \neff^{-1}$ on the event $\mathcal{E}_n$. By convention, the inverse is set to zero when $\Gamma_n$ is singular. On the complement $\mathcal{E}_n^c$, which occurs with probability at most $c_e \, \neff^{-1}$, the inverse remains finite since singular cases have been treated. Therefore,
	\begin{equation*}
		\E{\norm{\Gamma_n^{-1}}^4} = \mathcal{O}(1).
	\end{equation*}
	Combining,
	\begin{equation*}
		\E{\norm{\Xi_n}^2 \norm{\Gamma_n^{-1}}^2} \leq \sqrt{\mathcal{O}(\neff^{-2})} \sqrt{\mathcal{O}(1)} = \mathcal{O}(\neff^{-1}),
	\end{equation*}
	and with $\P{\mathcal{E}_n^c} \leq c_e \, \neff^{-1}$ from Lemma~\ref{thm:weighted-gamma-n-pd},
	\begin{equation*}
		\norm{\E{\left(\dmdop_n - \Theta\right) \mathbbm{1}_{\mathcal{E}_n^c}}} \leq \sqrt{\mathcal{O}(\neff^{-1})} \sqrt{\mathcal{O}(\neff^{-1})} = \mathcal{O}(\neff^{-1}).
	\end{equation*}
	Together, $\norm{\E{\dmdop_n} - \Theta} \leq C \, \neff^{-1}$ for some $C > 0$ as $n \to \infty$.
\end{proof}

\section{Experiment details}
\label{app:experiments}

\subsection{Performance metrics}
\label{app:experiments-metrics}

We evaluate changepoint detection algorithms along two complementary aspects: detection speed and detection accuracy. Speed captures how quickly an algorithm responds to a change or raises a false alarm; accuracy captures how reliably true changes are found and false positives avoided. We use the same notation as in the problem definition of Section~\ref{sec:background-var}.

\subsubsection{Detection speed}
\label{app:experiments-metrics-speed}

Consider the stream $x_1, x_2, \dots$ under $\mathcal{H}_1$, with a single changepoint at time $\tau$. Let $\hat{\tau}$ denote the first time step at which the algorithm signals a change. Detection speed is characterised by the \emph{average run lengths} \citep{page1954continuous}: the expected time to false alarm $\arlzero$ and the expected detection delay $\arlone$, given by
\begin{align*}
	\arlzero &= \E{\hat{\tau} \mid \mathcal{H}_0}, \\
	\arlone  &= \E{\hat{\tau} - \tau \mid \mathcal{H}_1}.
\end{align*}
$\arlzero$ is the average number of observations before the detector incorrectly signals a change under $\mathcal{H}_0$; $\arlone$ is the average lag between the true changepoint and its detection under $\mathcal{H}_1$. A well-performing detector achieves high $\arlzero$ and low $\arlone$.

\subsubsection{Detection accuracy}
\label{app:experiments-metrics-accuracy}

Average run lengths alone are insufficient: they do not capture what fraction of true changepoints are found, nor what fraction of detections are spurious. Following \citep{vandenburg2020evaluation} and \citep{killick2012optimal}, consider a sequence with $n_t$ true change times $\tau_1, \dots, \tau_n$ and $n_d$ detections $\hat{\tau}_1, \dots, \hat{\tau}_m$. The set of true positives is
\begin{equation*}
	\mathcal{TP} = \left\{ \tau_i ~:~ \exists\, \hat{\tau}_j \text{ such that } -\Delta_\ell \leq \hat{\tau}_j - \tau_i \leq \Delta_r \right\},
\end{equation*}
where $\Delta_\ell, \Delta_r \geq 0$ define the permissible detection margins around each true changepoint. Precision $\precision$, recall $\recall$, and $\fone$-score are then defined\footnote{We rely on the implementation of \citep{vandenburg2020evaluation}, available at \href{https://github.com/alan-turing-institute/TCPDBench}{\texttt{github.com/alan-turing-institute/TCPDBench}}.} as
\begin{equation*}
	\precision = \frac{\operatorname{card}(\mathcal{TP})}{n_d}, \qquad \recall = \frac{\operatorname{card}(\mathcal{TP})}{n_t}, \qquad \fone = \frac{2\,\precision\recall}{\precision + \recall}.
\end{equation*}

As argued in \citep{fearnhead2022detecting}, an algorithm that fails to reliably detect a single changepoint is unlikely to perform well in the more complex multi-changepoint setting; we therefore restrict synthetic data evaluation to single-changepoint sequences, following the same practice as \citep{khamesi2024online} and \citep{bodenham2017continuous}. Given a sequence with true changepoint at $\tau$, let $\hat{\tau}$ denote the first detection time, if any. Each sequence yields exactly one of the following outcomes: a true positive ($\mathtt{TP}$) if $\tau - \Delta_\ell \leq \hat{\tau} \leq \tau + \Delta_r$; a false positive ($\mathtt{FP}$) if $\hat{\tau} < \tau - \Delta_\ell$; a false negative with late detection ($\mathtt{FN}_d$) if $\hat{\tau} > \tau + \Delta_r$; or a false negative with no detection ($\mathtt{FN}_\emptyset$) if no alarm is raised. Aggregating over $N$ independent sequences, we set $n_t = N$ (one true changepoint per sequence) and $n_d = \operatorname{card}(\mathtt{TP}) + \operatorname{card}(\mathtt{FP}_d) + \operatorname{card}(\mathtt{FN}_\emptyset)$ (total number of detections), so that precision and recall reduce to
\begin{equation*}
	\precision = \frac{\operatorname{card}(\mathtt{TP})}{n_d}, \qquad \recall = \frac{\operatorname{card}(\mathtt{TP})}{N}.
\end{equation*}
For real-world data, which contain multiple changepoints per sequence, we apply the margin-based $\mathcal{TP}$ set defined above directly.

\subsection{Competitor methods and parameters}
\label{app:experiments-parameters}

This section details the competitor methods and parameter configurations used in our experiments in Section~\ref{sec:experiments}. To ensure a fair comparison, we consider a broad range of parameter settings for each method and adhere to the original authors' recomendations wherever possible.

\paragraph{mSSA.}
We use the Python implementation provided by \citep{alanqary2021change}.\footnote{\url{https://github.com/ArwaAlanqary/mSSA_cpd}} Following the authors' guidelines, we evaluate the method using the parameter grids listed in Table~\ref{tab:parameters}.

\paragraph{mSSA-MW.}
We use the Python implementation provided by \citep{alanqary2021change}.\footnote{\url{https://github.com/ArwaAlanqary/mSSA_cpd}} Following the authors' guidelines, we evaluate the method using the parameter grids listed in Table~\ref{tab:parameters}.

\paragraph{Tian \& Safikhani.}
We use the R implementation provided by \citep{tian2024sequential}.\footnote{\url{https://cran.r-project.org/web/packages/VARcpDetectOnline/}} The method is interfaced in Python via \texttt{rpy2}. Following the authors' guidelines, we evaluate the method using the parameter grids listed in Table~\ref{tab:parameters}.

\paragraph{OCD.}
We use the R implementation provided by \citep{chen2022high}.\footnote{\url{https://cran.r-project.org/web/packages/ocd/}} The method is interfaced in Python via \texttt{rpy2}. Following the authors' guidelines, we evaluate the method using the parameter grids listed in Table~\ref{tab:parameters}.

\paragraph{CPDMD.}
We use the Python implementation provided by \citep{khamesi2024online}.\footnote{\url{https://github.com/vkhamesi/cpdmd-python}} Following the authors' guidelines, we evaluate the method using the parameter grids listed in Table~\ref{tab:parameters}.

\paragraph{BOCPDMS.}
We use the Python implementation provided by \citep{knoblauch2018spatio}.\footnote{\url{https://github.com/alan-turing-institute/bocpdms}} Following the authors' guidelines, we evaluate the method using the parameter grids listed in Table~\ref{tab:parameters}.

\begin{sidewaystable}[t]
	\caption{\small Summary of parameter configurations explored for each method across synthetic and real-world datasets. For every method, we report the grid of candidate values used in our experiments, following the original papers' recommendations when applicable. Across all experiments and methods, we use a fixed grace period of 100 observations before detection is enabled, ensuring a consistent evaluation protocol and allowing all methods sufficient initial observations for stable estimation. For the HASC real-world data set, which has a similar low-dimensional structure as the synthetic data, we only use rank $\rank = 3$ for both CHASM and CPDMD.}
	\label{tab:parameters}
	\smaller
	\centering
	\begin{tabular}{l l p{5.7cm} p{5.7cm}}
		\toprule
		Method & Parameters & Synthetic data & Real-world data \\
		\midrule
		CHASM & Forgetting $\rho$ & $0.95, \; 0.98, \; 0.99, \; 1$ & $0.95, \; 0.98, \; 0.99, \; 1$ \\
		& Rank $\rank$ & $2$ & $2, \; 4, \; 8$ \\
		& Pair $(\alpha, h)$ & $(0.08, 8),\;$ $(0.08, 10),\;$ $(0.09, 10),\;$ $(0.09, 12),\;$ $(0.10, 10),\;$ $(0.10, 12),\;$ $(0.12, 12),\;$ $(0.12, 14),\;$ $(0.15, 14),\;$ $(0.15, 15),\;$ $(0.18, 15),\;$ $(0.18, 18),\;$ $(0.20, 18),\;$ $(0.20, 20),\;$ $(0.25, 20),\;$ $(0.25, 22),\;$ $(0.30, 25),\;$ $(0.30, 28),\;$ $(0.35, 25),\;$ $(0.35, 28)$ & $(0.08, 8),\;$ $(0.09, 10),\;$ $(0.10, 12),\;$ $(0.12, 12),\;$ $(0.12, 14),\;$ $(0.15, 14),\;$ $(0.15, 15),\;$ $(0.18, 15),\;$ $(0.18, 18),\;$ $(0.25, 20),\;$ $(0.30, 25),\;$ $(0.35, 28),\;$ \\
		\midrule
		mSSA \citep{alanqary2021change} & Window size & $80, \; 90, \; 100$ & $80, \; 90, \; 100$ \\
		& Rows & $7, \; 8, \; 9, \; 10$ & $10, \; 15, \; 20, \; 30$ \\
		& Detection threshold & $1, \; 5, \; 10$ & $1, \; 5, \; 10$ \\
		& Rank threshold & $0.95$ & $0.95$ \\
		& Training size & $0.90$ & $0.90$ \\
		\midrule
		mSSA-MW \citep{alanqary2021change} & Window size & $80, \; 90, \; 100$ & $80, \; 90, \; 100$ \\
		& Rows & $7, \; 8, \; 9, \; 10$ & $10, \; 15, \; 20, \; 30$ \\
		& Detection threshold & $1, \; 5, \; 10$ & $1, \; 5, \; 10$ \\
		& Rank threshold & $0.95$ & $0.95$ \\
		& Training size & $0.90$ & $0.90$ \\
		\midrule
		Tian \& Safikhani \citep{tian2024sequential} & Historical data & $60, \; 70, \; 80$ & $60, \; 70, \; 80$ \\
		& Window size & $5, \; 10, \; 20$ & $5, \; 10, \; 20$ \\
		& Alpha & $5\cdot10^{-3},\; 4\cdot10^{-3},\; 3\cdot10^{-3},\; 2\cdot10^{-3},\; 1\cdot10^{-3},\; 5\cdot10^{-4},\; 4\cdot10^{-4},\; 3\cdot10^{-4},\; 2\cdot10^{-4},\; 1\cdot10^{-4}$ & $5\cdot10^{-3},\; 4\cdot10^{-3},\; 3\cdot10^{-3},\; 2\cdot10^{-3},\; 1\cdot10^{-3},\; 5\cdot10^{-4},\; 4\cdot10^{-4},\; 3\cdot10^{-4},\; 2\cdot10^{-4},\; 1\cdot10^{-4}$ \\
		\midrule
		OCD \citep{chen2022high} & Patience & $5\cdot10^{2}, \; 1\cdot10^{3}, \; 2\cdot10^{3}, \; 5\cdot10^{3}, \; 1\cdot10^{4}$ & $5\cdot10^{2}, \; 1\cdot10^{3}, \; 2\cdot10^{3}, \; 5\cdot10^{3}, \; 1\cdot10^{4}$ \\
		& Beta & $0.125, \; 0.25, \; 0.5, \; 1, \; 2, \; 4, \; 8$ & $2, \; 4, \; 8, \; 25, \; 50, \; 100, \; 150$ \\
		\midrule
		CPDMD \citep{khamesi2024online} & Window size & $60, \; 70, \; 80$ & $60, \; 70, \; 80$ \\
		& Rank & $2$ & $2, \; 4, \; 8, \; 16$ \\
		& Learning rate & $0.01, \; 0.05, \; 0.15, \; 0.20$ & $0.01, \; 0.05, \; 0.15, \; 0.20$ \\
		& Control limit & $3, \; 4, \; 5$ & $3, \; 4, \; 5$ \\
		\midrule
		BOCPDMS \citep{knoblauch2018spatio} & Prior on $a$ & $0.01, \; 1.0, \; 10, \; 100$ & $0.01, \; 1.0, \; 10, \; 100$ \\
		& Prior on $b$ & $0.01, \; 1.0, \; 10, \; 100$ & $0.01, \; 1.0, \; 10, \; 100$ \\
		& Intensity & $10, \; 50, \; 100, \; 200$ & $10, \; 50, \; 100, \; 200$ \\
		\bottomrule
	\end{tabular}
\end{sidewaystable}

\cleardoublepage

\subsection{Synthetic data sets}
\label{app:experiments-synthetic}

\subsubsection{Overview and common setup}
\label{app:experiments-synthetic-overview}

All synthetic data sets simulate a $\varprocess{\dimension}{1}$ process with a single changepoint, given by
\begin{equation*}
	x_t = \Theta_t \, x_{t-1} + \varepsilon_t, \qquad t = \{1, \dots, T\},
\end{equation*}
where
\begin{equation*}
	\Theta_t = \begin{cases} 
		\Theta_0 & t < \tau, \\ 
		\Theta_1 & t \geq \tau, 
	\end{cases}
\end{equation*}
with $\Theta_0, \Theta_1 \in \R^{\dimension \times \dimension}$ with all eigenvalues strictly inside the open unit disk, and $\{ \varepsilon_t \}$ a zero-mean white noise process whose distribution varies by data set. Each data set is generated as $N = 1{,}000$ independent replications. Two variants are produced for each experiment:
\begin{itemize}
	\item $\arlone$ data set: sequences of length $T=400$ with a changepoint present. The change time is drawn from a discrete uniform distribution as $\tau \sim \mathrm{Uniform}\!\left(\lfloor 0.3 T \rfloor,\, \lfloor 0.7 T \rfloor\right)$. This data set allows the computation of detection accuracy metrics, such as precision, recall, and $\fone$-score, as well as the detection speed metric $\arlone$ (Appendix~\ref{app:experiments-metrics} using $\Delta_\ell = 0$ and $\Delta_r = 50$).
	\item $\arlzero$ data set: sequences of length $T=10{,}000$ with no changepoint, i.e. $\tau = T$, so that $\Theta_t = \Theta_0$ throughout. This data set allows the computation of the detection speed metric $\arlzero$, which estimates the expected time to false alarm.
\end{itemize}

\subsubsection{Transition matrix parametrisation}
\label{app:experiments-synthetic-transition}

\paragraph{Bivariate case $(\dimension=2)$. }
All bivariate data sets use the following parametrisation. Given parameters $(a, b) \in \R^2$ with $a^2 + b^2 < 1$, the transition matrix is given by
\begin{equation*}
	\Theta = \begin{pmatrix}
		a & -b \\
		b & a
	\end{pmatrix}.
\end{equation*}
This corresponds to the real canonical form of a complex multiplication by $\lambda = a + bi$, and its eigenvalues are $\lambda_{1,2} = a \pm bi$, with modulus $\abs{\lambda} = \sqrt{a^2 + b^2} < 1$. Therefore, $\Theta$ is stable and $\{ x_t \}$ is stationary.

\paragraph{Uniform sampling from the unit disk.}
Each pair $(a,b)$ is sampled uniformly from the closed unit disk $\{z \in \C : \abs{z} < 1\}$ via the following procedure. We independently let $u \sim \mathrm{Uniform}(0, 1)$ and $\varphi \sim \mathrm{Uniform}(0, 2\pi)$ and then set
\begin{equation*}
	r = \sqrt{u}, \quad a = r\cos\varphi, \quad b = r\sin\varphi.
\end{equation*}
Since the area element in polar coordinates is $r \, dr \, d\varphi$, and the density of $r = \sqrt{u}$ is $f_r(r) = 2r$ for $r \in (0, 1)$, the joint density $f(r, \varphi) = 2r / (2\pi) = r/\pi$ is the uniform density on the unit disk (whose area is $\pi$). This ensures $a^2 + b^2 = r^2 = u < 1$ almost surely.

\paragraph{Noise covariance generation.}
Unless specified otherwise, the noise covariance matrix is generated as follows. Draw $S \in \R^{\dimension \times \dimension}$ with entries i.i.d. from $\mathrm{Uniform}(-1,1)$, and set $\Sigma = \transp{S} S$. This procedure yields a random semi-definite matrix almost surely. 

\subsubsection{Data set descriptions}
\label{app:experiments-synthetic-datasets}

\paragraph{\phantom{}}

\begin{datasetbox}[Bivariate VAR with Gaussian noise]
	\begin{datasetitems}
		\item \textbf{Purpose.} 
		Baseline evaluation on the standard $\varprocess{2}{1}$ setting with Gaussian noise.
		
		\item \textbf{Generative process.}
		For each replication, draw $(a_0, b_0)$ and $(a_1, b_1)$ independently and uniformly from the unit disk (Appendix~\ref{app:experiments-synthetic-transition}), and form the transition matrices
		\begin{equation*}
			\Theta_0 =
			\begin{pmatrix}
				a_0 & -b_0 \\
				b_0 & a_0
			\end{pmatrix},
			\qquad
			\Theta_1 =
			\begin{pmatrix}
				a_1 & -b_1 \\
				b_1 & a_1
			\end{pmatrix}.
		\end{equation*}
		Draw the covariance matrix $\Sigma$ as described in Appendix~\ref{app:experiments-synthetic-transition}. The noise process is independent and identically distributed multivariate Gaussian,
		\begin{equation*}
			\varepsilon_t \sim \mathcal{N}(0, \Sigma).
		\end{equation*}
		The two transition matrices $\Theta_0$ and $\Theta_1$ are sampled independently with no constraint on the distance between their eigenvalues; in particular, the change magnitude is unconstrained.
	\end{datasetitems}
\end{datasetbox}

\paragraph{\phantom{}}

\begin{datasetbox}[Bivariate VAR with Laplace noise]
	\begin{datasetitems}
		\item \textbf{Purpose.}
		Evaluate robustness to heavier-tailed, non-Gaussian noise. 
		
		\item \textbf{Generative process.}
		The transition matrices $\Theta_0$, $\Theta_1$, and the noise covariance $\Sigma$ are generated identically to the Gaussian case (Appendix~\ref{app:experiments-synthetic-transition}). The noise is drawn from a multivariate Laplace distribution with covariance $\Sigma$ via the following Gaussian copula construction:
		\begin{enumerate}
			\item Compute the correlation matrix $R$ as $R_{ij} = \Sigma_{ij} / \sqrt{\Sigma_{ii}\Sigma_{jj}}$.
			\item Draw $z \sim \mathcal{N}(0, R)$, a correlated standard normal vector.
			\item Transform to uniform marginals as $u = \Phi(z)$ element-wise.
			\item Apply the inverse Laplace $x_i = F_{\mathrm{Lap}}^{-1}(u_i \mid 0, s_i)$, with $s_i = \sqrt{\Sigma_{ii}}/\sqrt{2}$.
		\end{enumerate}
		The result $\varepsilon_t = X$ has marginal Laplace distributions with variance $\Sigma_{ii}$ and cross-correlation structure induced by $R$. Note that the marginal variance of a $\mathrm{Laplace}(0, s)$ distribution is $2 s^2 = \Sigma_{ii}$, so the covariance matrix of $\varepsilon_t$ is $\Sigma$ by construction.
	\end{datasetitems}
\end{datasetbox}

\paragraph{\phantom{}}

\begin{datasetbox}[Bivariate VAR with Student's $\boldsymbol t$ noise]
	\begin{datasetitems}
		\item \textbf{Purpose.} 
		Evaluate robustness across a range of noise tail weights, from heavy tails ($\nu=3$) to near-Gaussian ($\nu=30$).
		
		\item \textbf{Generative process.} 
		The transition matrices and noise covariances are generated as in the Gaussian and Laplace noise data sets (Appendix~\ref{app:experiments-synthetic-transition}). The noise at each time step is drawn from a multivariate Student's $t$ distribution
		\begin{equation*}
			\varepsilon_t \sim t_\nu(0, \Sigma),
		\end{equation*}
		generated via a scale-mixture of Gaussians: draw $y \sim \mathcal{N}(0, \Sigma)$ and $u \sim \chi_\nu^2$ independently, and set $\varepsilon_t = y \sqrt{\nu / u}$. Note that for $\nu > 2$, $\operatorname{Cov}(\varepsilon_t) = \frac{\nu}{\nu-2}\Sigma$.
		
		\item \textbf{Binning.} 
		The $N=1{,}000$ replications are divided equally into 10 bins of 100 samples each, corresponding to degrees of freedom 
		\begin{equation*}
			\nu \in \{ 3, 4, 5, 6, 8, 10, 12, 15, 20, 30 \},
		\end{equation*}
		sorted from heavy-tailed to near-Gaussian. Within each bin, $\Theta_0$, $\Theta_1$, and $\Sigma$ are redrawn independently per replication. 
	\end{datasetitems}
\end{datasetbox}

\paragraph{\phantom{}}

\begin{datasetbox}[Bivariate VAR with Huber-contaminated noise]
	\begin{datasetitems}
		\item \textbf{Purpose.} 
		Evaluate robustness to outlier contamination. The noise follows Huber's $\epsilon$-contamination model: most observations are drawn from the nominal Gaussian, but with probability $\epsilon$ an outlier is drawn instead. Changepoint detection in contaminated or adversarial noise settings has attracted interest, including the algorithm proposed by \citep{li2021adversarially}.
		
		\item \textbf{Generative process.} 
		The transition matrices are generated as in Appendix~\ref{app:experiments-synthetic-transition}. At each time step, the noise is drawn from the mixture
		\begin{equation*}
			\varepsilon_t \sim (1-\epsilon) \, \mathcal{N}(0, I_2) + \epsilon \, \mathcal{N}(0, 9 I_2),
		\end{equation*}
		where $9 I_2 = \transp{(3 I_2)} 3 I_2$ corresponds to the contaminating observations with standard deviation three times the nominal. This is implemented as a Bernoulli indicator: $\varepsilon_t \sim \mathcal{N}(0, I_2)$ w.p. $1-\epsilon$ and $\varepsilon_t \sim \mathcal{N}(0, 9 I_2)$ w.p. $\epsilon$.
		
		\item \textbf{Binning.} 
		The 10 contamination levels are 
		\begin{equation*}
			\epsilon \in \{0, 0.01, 0.02, 0.05, 0.10, 0.15, 0.20, 0.25, 0.30, 0.40\},
		\end{equation*}
		each applied to $N/10=100$ replications. At $\epsilon = 0$, the model reduces to the standard Gaussian case.
	\end{datasetitems}
\end{datasetbox}

\paragraph{\phantom{}}

\begin{datasetbox}[High-dimensional VAR with sparse dynamics]
	\begin{datasetitems}
		\item \textbf{Purpose.} 
		Evaluate scalability and robustness to ambient noise dimensions. A $\varprocess{\dimension}{1}$ process is generated by embedding a bivariate dynamical system into a $\dimension$-dimensional observation space, where the remaining $\dimension-2$ components evolve as pure noise with no cross-interaction. This simulates a realistic setting in which a low-dimensional signal is embedded in a high-dimensional noisy observation space.
		
		\item \textbf{Generative process.} 
		A random permutation $\pi$ of $\{1, \ldots, \dimension\}$ is drawn, and two coordinates $i = \pi(1)$, $j = \pi(2)$ are selected. The $\dimension \times \dimension$ transition matrix is
		\begin{equation*}
			(\Theta)_{ab} = \begin{cases} a_{\mathrm{val}} & a = b \in \{i, j\}, \\ -b_{\mathrm{val}} & a = i, b = j, \\ b_{\mathrm{val}} & a = j, b = i, \\ 0 & \text{otherwise,} \end{cases}
		\end{equation*}
		where $(a_{\mathrm{val}}, b_{\mathrm{val}})$ parametrises the $2 \times 2$ rotation block. The $\dimension$-dimensional noise is $\varepsilon_t \sim \mathcal{N}(0, \Sigma)$ where $\Sigma \in \R^{\dimension \times \dimension}$ is generated as $S^\top S$ with $S \sim \mathrm{Uniform}(-1, 1)^{\dimension \times \dimension}$.
		
		\item \textbf{Minimum change size.} 
		To ensure the embedded dynamics change is non-trivial and detectable, the pair of eigenvalues $(z_0, z_1)$ for $(\Theta_0, \Theta_1)$ are drawn with a minimum Euclidean distance constraint:
		\begin{equation*}
			\abs{z_1 - z_0} \geq d_{\mathrm{low}},
		\end{equation*}
		where $d_{\mathrm{low}}$ is the 90th percentile of the Euclidean distance $\abs{z_1 - z_0}$ between two points drawn independently and uniformly from the unit disk. This 90th percentile is estimated with reproducibility from $10^6$ Monte Carlo samples. Note that here the Euclidean distance $\abs{z_1 - z_0}$ is used directly (not symmetrised as in the controlled distance case), since both points are embedded at the same positions $i, j$.
		
		\item \textbf{Noise.}
		Gaussian $\varepsilon_t \sim \mathcal{N}(0, \Sigma)$ where $\Sigma = \transp{S} S$ with $S \sim \mathrm{Uniform}(-1,1)^{\dimension \times \dimension}$, independently drawn per replication.
		
		\item \textbf{Binning.} 
		Dimension is varied within 10 bins as 
		\begin{equation*}
			\dimension \in \{ 2, 4, 6, 10, 15, 20, 25, 30, 35, 40 \},
		\end{equation*}
		each receiving $N=100$ replications.
	\end{datasetitems}
\end{datasetbox}

\paragraph{\phantom{}}

\begin{datasetbox}[High-dimensional VAR with full-rank transition matrix]
	\begin{datasetitems}
		\item \textbf{Purpose.} 
		Evaluate performance on high-dimensional VAR processes where all dimensions participate in the dynamics, without sparsity assumptions. This contrasts with the previous data set by removing the sparse embedding structure and generating dense, non-diagonally dominant transition matrices. 
		
		\item \textbf{Eigenvalue sampling.}
		To ensure conjugate closure (required for the transition matrix to be real-valued), eigenvalues are drawn as follows. First, sample $n_{\mathrm{pairs}} \sim \mathrm{Uniform}(0, 1, \dots, \lfloor\dimension/2\rfloor)$, giving $n_{\mathrm{real}} = \dimension - 2 n_{\mathrm{pairs}}$. Then,
		\begin{itemize}
			\item For each real eigenvalue, draw $(a, \cdot)$ from the unit disk; add $\lambda = a \in \R$.
			\item For each complex conjugate pair, draw $(a, b)$ from the unit disk with $b>0$; add the pair $\lambda = a \pm bi \in \C$.
		\end{itemize}
		
		\item \textbf{Real canonical form.}
		Construct the block-diagonal matrix $D_c \in \R^{\dimension \times \dimension}$ as
		\begin{equation*}
			D_c = \mathrm{blockdiag}\!\left(\lambda_1, \ldots, \lambda_{n_{\mathrm{real}}}, \begin{pmatrix} a_1 & -b_1 \\ b_1 & a_1 \end{pmatrix}, \ldots, \begin{pmatrix} a_{n_{\mathrm{pairs}}} & -b_{n_{\mathrm{pairs}}} \\ b_{n_{\mathrm{pairs}}} & a_{n_{\mathrm{pairs}}} \end{pmatrix}\right).
		\end{equation*}
		
		\item \textbf{Similarity transform.}
		A random invertible matrix $P \in \R^{\dimension \times \dimension}$ with bounded condition number is constructed as follows. Draw $G \in \R^{\dimension \times \dimension}$ with i.i.d. standard Gaussian entries and compute the singular value decomposition $G = U \diag(s) \transp{V}$. Clip the singular values as
		\begin{equation*}
			s_{\mathrm{clipped}, i} = \max\!\left(s_i, \frac{s_1}{\kappa_{\max}}\right), \qquad \kappa_{\max} = 15,
		\end{equation*}
		and set $P = U \diag(s_{\mathrm{clipped}}) \transp{V}$. This preserves the random orientation of $U$ and $V$ (uniformly distributed on the orthogonal group) while bounding the 2-norm condition number $\kappa(P) \leq \kappa_{\max}$, controlling the spread of entries in the resulting transition matrix.
		
		\item \textbf{Transition matrix.}
		The final transition matrix is given by
		\begin{equation*}
			\Theta = P D_c P^{-1},
		\end{equation*}
		which as the prescribed spectrum $\sigma(\Theta) = \sigma(D_c)$ and is real-valued by construction.
		
		\item \textbf{Noise.}
		Gaussian $\varepsilon_t \sim \mathcal{N}(0, \Sigma)$ where $\Sigma = \transp{S} S$ with $S \sim \mathrm{Uniform}(-1,1)^{\dimension \times \dimension}$, independently drawn per replication.
		
		\item \textbf{Binning.} 
		Dimension is varied within 10 bins as 
		\begin{equation*}
			\dimension \in \{ 2, 4, 6, 10, 15, 20, 25, 30, 35, 40 \},
		\end{equation*}
		each receiving $N=100$ replications.
	\end{datasetitems}
\end{datasetbox}

\subsubsection{Computational complexity}
\label{app:experiments-synthetic-complexity}

\begin{figure}[t]
	\centering
	\includegraphics[width=\linewidth]{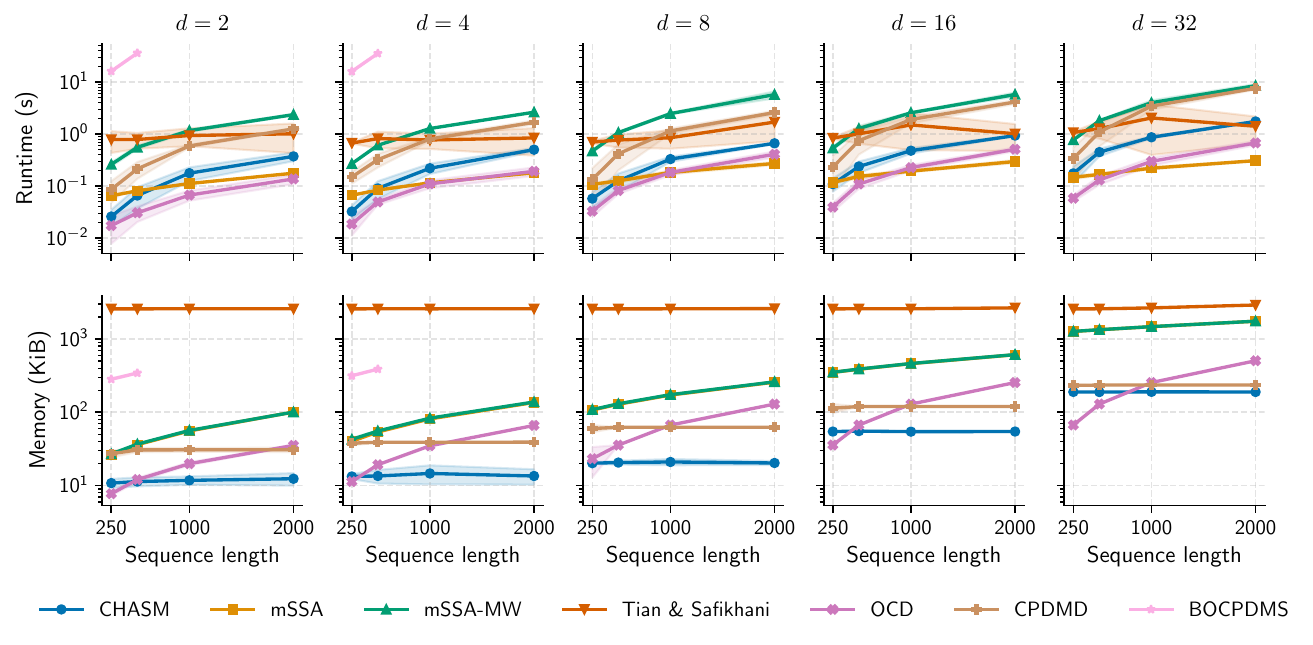}
	\caption{\small Computational scaling of online changepoint detection methods across sequence length and dimension. Runtime (top) and peak memory usage (bottom) are reported for sequence lengths $T \in \{ 250, 500, 1000, 2000 \}$ and dimensions $\dimension \in \{2, 4, 8, 16, 32\}$, on $\varprocess{\dimension}{1}$ process $x_t = \Theta \, x_{t-1} + \varepsilon_t$, $\varepsilon_t \sim \mathcal{N}(0, I_\dimension)$, with transition $\Theta_{ij} \overset{\text{i.i.d.}}{\sim} \mathcal{N}(0, 0.09/\dimension)$. Lines denote the mean and shaded bands denote $\pm 1$ standard deviation estimated over 10 Monte Carlo replications, which is sufficiently large given the relatively low variability of the results. All methods are run with default parameters, as specified in the respective publications. Memory is measured via \texttt{tracemalloc}; note that OCD and Tian \& Safikhani are called via \texttt{rpy2} R bindings, their reported memory figures reflect only the Python-side allocation and are likely underestimates of true memory consumption. BOCPDMS is evaluated only for $\dimension \in \{2, 4\}$ and $T \in \{ 250, 500 \}$ due to its substantially higher computational cost; truncated results are sufficient to establish it as the most expensive method.}
	\label{fig:complexity}
\end{figure}

Figure~\ref{fig:complexity} illustrates two distinct behaviours with respect to memory. Methods whose memory footprint is flat in $T$ (CHASM, CPDMD, and Tian \& Safikhani) are truly streaming, retaining only a fixed-size summary of the data regardless of sequence length. In contrast, mSSA, mSSA-MW, OCD, and BOCPDMS exhibit memory that grows with $T$, reflecting window buffers or accumulated history inherent to their implementations. CHASM achieves the lowest overall memory consumption across nearly all $(\dimension, T)$ configurations, and its flat profile is consistent with the $\mathcal{O}(\dimension^2)$ state maintained by the online DMD update: no observations are stored beyond the current sufficient statistics.

In terms of runtime, BOCPDMS is the most computationally demanding method by a large margin, already exhibiting the highest latency at $\dimension=2$ and $T=250$. CHASM is consistently among the fastest methods across all dimensions, with only mSSA and OCD achieving lower or comparable runtimes in some regimes. Notably, CHASM runtime scaling closely mirrors that of OCD, which has reported complexity $\mathcal{O}(\dimension^2 \log\dimension)$; this is consistent with CHASM's theoretical complexity of $\mathcal{O}(\dimension^2 k \rank)$, where $k$ is the number of Arnoldi iterations, placing it below the cubic cost of a full eigendecomposition. This is particularly notable given that the dense transition matrix $\Theta$ affords no low-rank structure to exploit, making this a worst-case scenario for CHASM. Taken together, CHASM dominates mSSA-MW and CPDMD in both runtime and memory across all considered settings, suggests a favourable trade-off compared to Tian \& Safikhani, and scales comparably to OCD in time while using lower memory.

\subsection{Real-world data sets}
\label{app:experiments-real}

\subsubsection{Overview}
\label{app:experiments-real-overview}

Real-world data sets are used to evaluate CHASM on naturally occurring changepoints across diverse modalities (motion, image, text, video). For each data set, the time series $\{x_t\}$ is a sequence of feature vectors (pre-trained model embeddings or hand-crafted features) at consecutive time steps, and the ground-truth changepoints are derived from provided annotations. All data sets are used in online changepoint detection protocol supporting multiple sequential changepoints. 

\subsubsection{Data set descriptions}
\label{app:experiments-real-datasets}

\paragraph{\phantom{}}

\begin{datasetbox}[HASC]
	\begin{datasetitems}
		\item \textbf{Source.} 
		The HASC \citep{kawaguchi2011hasc} data set provides three-axis accelerometer recordings from wearable devices, collected from seven subjects performing different labelled activities (stay, walk, jogging, skip, stair-up, and stair-down). Each recording corresponds to 120 seconds of measurements, sampled at 100 Hz.
		
		\item \textbf{Preprocessing.} 
		All $N=18$ recordings contain timestamped three-axis accelerometer readings, with corresponding annotations specifying time intervals of labelled activity. A binary label sequence is constructed: observations within a labelled interval receive label 0 (active), and all other observations receive label 1 (unlabelled). Changepoints are identified as time locations where consecutive binary labels differ.
		
		\item \textbf{Changepoint extraction.} 
		The following procedure is applied to the detected transitions:
		\begin{enumerate}
			\item The first and last transitions are discarded (start/end boundary artefacts).
			\item The other transitions come in pairs (labelled $\to$ unlabelled and unlabelled $\to$ labelled); the first element of each pair is retained, corresponding to the onset of each labelled activity region.
		\end{enumerate}
		
		\item \textbf{Output.} 
		Each recording yields one time series of dimension $\dimension=3$ and length $T_i$, where $T_i$ varies by recording, along with a list of changepoint indices. We compute accuracy metrics using $\Delta_\ell = 200$ and $\Delta_r = 200$, corresponding to 2 seconds of data each (Appendix~\ref{app:experiments-metrics-accuracy}).
	\end{datasetitems}
\end{datasetbox}

\paragraph{\phantom{}}

\begin{datasetbox}[CIFAR-100]
	\begin{datasetitems}
		\item \textbf{Source.} 
		CIFAR-100 \citep{krizhevsky2009learning} consists of $60{,}000$ colour images ($32 \times 32 \times 3$ pixels) across $100$ fine-grained classes grouped into $20$ coarse categories.
		
		\item \textbf{Embedding.} 
		Each image is encoded with CLIP ViT-B/32\footnote{Available on Hugging Face via \href{https://huggingface.co/openai/clip-vit-base-patch32}{\texttt{openai/clip-vit-base-patch32}}.}, a vision-language model trained on a contrastive learning objective that jointly learns an image encoder and a text encoder, denoted $f_{\mathrm{image}}$ and $f_{\mathrm{text}}$, respectively. Both encoders produce $512$-dimensional vector representations in a shared latent space \citep{radford2021learning}. The embedding of an image $I$ is defined as the $\ell^2$-normalised output of the visual encoder
		\begin{equation*}
			\varphi(I) = \frac{f_{\mathrm{image}}(I)}{\norm{f_{\mathrm{image}}(I)}} \in \R^{512}.
		\end{equation*}
		
		\item \textbf{PCA projection.}
		Prior to dimensionality reduction, the embeddings are standardised feature-wise to have zero mean and unit variance. The standardisation parameters and the principal component basis with target dimension $\dimension=32$ are estimated exclusively from the embeddings of the test split ($10{,}000$ images). The embeddings from the training split are then transformed using these fixed parameters and projected onto the first principal components via
		\begin{equation*}
			\varphi_{\mathrm{pca}}(I) = \transp{W_{\mathrm{image}}} \frac{\varphi(I) - \mu}{\sigma} \in \R^{32},
		\end{equation*}
		where $\mu \in \R^{512}$ and $\sigma \in \R^{512}$ denote the feature-wise mean and standard deviation estimated from the test split, and $W_{\mathrm{image}} \in \R^{512 \times 32}$ contains the leading $32$ principal component directions.
		
		This procedure ensures that the dimensionality reduction mapping is estimated on a held-out split that is entirely disjoint from the sequences used in the experiment, thereby preventing information leakage into the time series construction.
		
		\item \textbf{Time series construction.} 
		Each of the $N=100$ samples is a time series of length $T=1{,}000$ observations with $K=4$ changepoints, yielding $K+1=5$ segments. For each sample:
		\begin{enumerate}
			\item Draw $K+1=5$ distinct coarse classes uniformly without replacement.
			\item For each chosen superclass, randomly select one of its 5 fine classes.
			\item Compute nominal changepoint positions $\tau_k^{\mathrm{nom}} = k \, T / (K + 1)$ for $k = 1, \ldots, K$, and add i.i.d. jitter $j_k \sim \mathrm{Uniform}\{-30, \ldots, +30\}$: $\tau_k = \tau_k^{\mathrm{nom}} + j_k$.
			\item Segment lengths are $\ell_k = \tau_{k+1} - \tau_k$ (with $\tau_0 = 0$, $\tau_{K+1} = T$).
			\item Each segment is filled by sampling $\ell_k$ images without replacement from the train-split images of the selected fine class ($50{,}000$ images).
		\end{enumerate}
		All sequences are drawn from the train split; the test split is used only for PCA fitting. A sample time series from the created data set is shown in Figure~\ref{fig:cifar-sample}. We compute accuracy metrics using $\Delta_\ell = 0$ and $\Delta_r = 50$.
		
		\item \textbf{Motivation.}
		This construction yields piecewise i.i.d. segments, where each segment corresponds to a fixed image class. Although the resulting changes do not reflect changepoints in dynamics but rather shifts in the distribution of high-dimensional observations, this setting is relevant in practice. In remote sensing and satellite imagery, consecutive images of the same scene are observed under varying conditions such as sensor noise, illumination, or weather, and changepoints correspond to abrupt scene changes such as floods, wildfires, or new constructions \cite{borsoi2021online, liu2019review}. A similar formulation arises in visual quality control, where repeated observations of manufactured objects are monitored to detect defects or production faults from visual appearance alone \cite{fridman2021changechip, megahed2012spatiotemporal}.
	\end{datasetitems}
\end{datasetbox}

\begin{figure}[t]
	\centering
	\includegraphics[width=\linewidth]{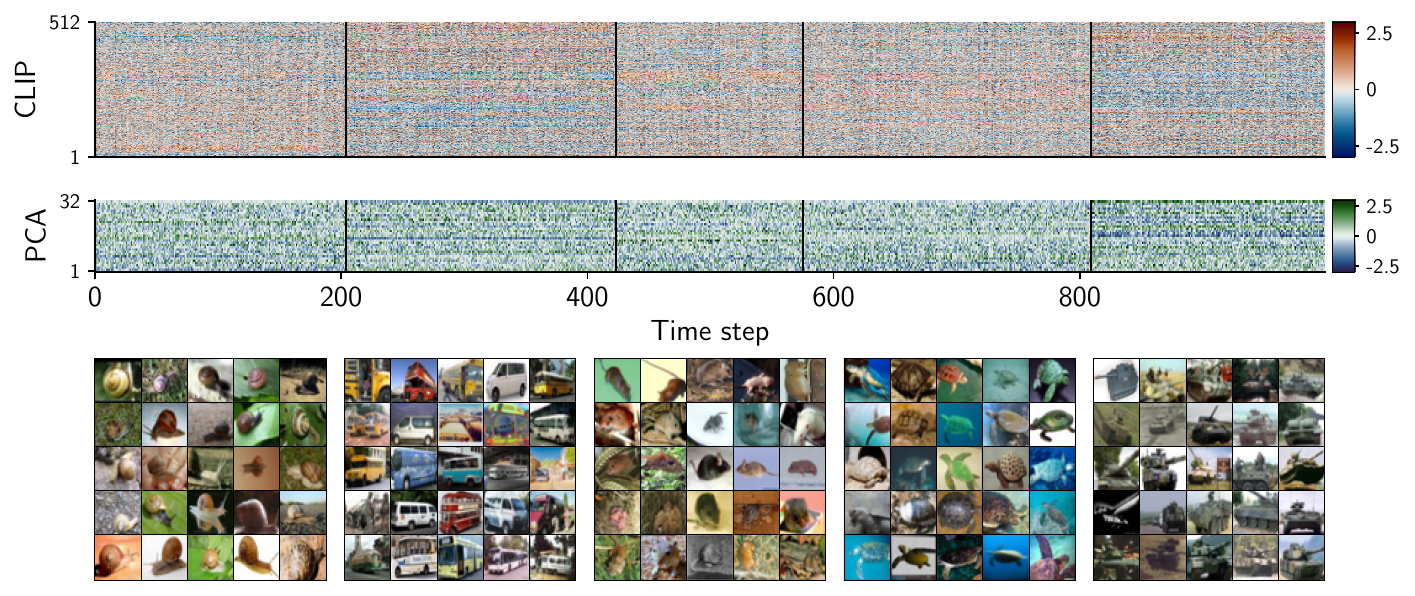}
	\caption{\small Sample time series from the CIFAR-100 benchmark dataset. \emph{Top:} Heatmap of the raw $512$-dimensional CLIP ViT-B/32 embeddings $\varphi(I_t) \in \R^{512}$ across $T = 1{,}000$ time steps. \emph{Middle:} Heatmap of the corresponding PCA-projected embeddings $\varphi_{\mathrm{pca}}(I_t) \in \R^{32}$, obtained by standardising the CLIP embeddings and projecting onto the leading $32$ principal components estimated from the held-out test split. \emph{Bottom:} Representative image grids for each of the $K+1 = 5$ segments, corresponding to the fine classes \texttt{snail}, \texttt{bus}, \texttt{mouse}, \texttt{turtle}, and \texttt{tank}, drawn sequentially from five distinct CIFAR-100 coarse categories. Vertical black lines mark the four changepoint locations $\tau_1, \ldots, \tau_4$, placed at approximately equal spacing with $\pm 30$-step uniform jitter. Distribution shifts between segments are clearly visible in both embedding spaces, particularly in the PCA projection where the per-segment colour profile changes abruptly at each changepoint.}
	\label{fig:cifar-sample}
\end{figure}

\paragraph{\phantom{}}

\begin{datasetbox}[20 Newsgroups]
	\begin{datasetitems}
		\item \textbf{Source.} 
		The 20 Newsgroups corpus \citep{lang1995newsweeder} consists of approximately $18{,}000$ Usenet posts distributed across $20$ thematic newsgroups. These groups are commonly organised into $7$ broader topical categories, including computer technology, recreation, science, politics, religion, and miscellaneous discussions.
		
		\item \textbf{Embedding.} 
		Each document is encoded using the text encoder of CLIP ViT-B/32\footnote{Available on Hugging Face via \href{https://huggingface.co/openai/clip-vit-base-patch32}{\texttt{openai/clip-vit-base-patch32}}.}, a vision-language model trained with a contrastive learning objective that jointly learns image and text representations in a shared latent space \citep{radford2021learning}. Prior to encoding, document headers, footers, and quoted reply text are removed, retaining only the main body content of each post. The embedding of a document $D$ is defined as the $\ell^2$-normalised output of the text encoder
		\begin{equation*}
			\varphi(t) = \frac{f_{\mathrm{text}}(D)}{\norm{f_{\mathrm{text}}(D)}} \in \R^{512}.
		\end{equation*}
		
		\item \textbf{PCA projection.}
		Prior to dimensionality reduction, the embeddings are standardised feature-wise to have zero mean and unit variance. The standardisation parameters and the principal component basis with target dimension $\dimension=32$ are estimated exclusively from the embeddings of the test split. The embeddings from the training split are then transformed using these fixed parameters and projected onto the first principal components via
		\begin{equation*}
			\varphi_{\mathrm{pca}}(D) = \transp{W_{\mathrm{text}}} \frac{\varphi(D) - \mu}{\sigma} \in \R^{32},
		\end{equation*}
		where $\mu \in \R^{512}$ and $\sigma \in \R^{512}$ denote the feature-wise mean and standard deviation estimated from the test split, and $W_{\mathrm{text}} \in \R^{512 \times 32}$ contains the leading $32$ principal component directions.
		
		This procedure ensures that the dimensionality reduction mapping is estimated on a held-out split that is entirely disjoint from the sequences used in the experiment, thereby preventing information leakage into the time series construction.
		
		\item \textbf{Time series construction.} 
		Each of the $N=100$ samples is a time series of length $T=1{,}000$ observations with $K=4$ changepoints, constructed as in the CIFAR-100 data set described above. Coarse categories are drawn uniformly without replacement from the set of topical categories. For each selected category, one constituent newsgroup is chosen uniformly at random. Each segment is filled by sampling documents without replacement from the training split of the selected newsgroup. All sequences are drawn from the training split; the test split is used only for PCA fitting. A sample time series from the created data set is shown in Figure~\ref{fig:newsgroups-sample}. We compute accuracy metrics using $\Delta_\ell = 0$ and $\Delta_r = 50$ (Appendix~\ref{app:experiments-metrics-accuracy}).
		
		\item \textbf{Motivation.}
		Text streams such as news articles, social media posts, and customer communications often exhibit abrupt changes in topic or semantic content, and the 20 Newsgroups data set has been used to model this setting \citep{garcia2025concept}. In our construction, each segment corresponds to a fixed topic, and documents within a segment are sampled independently from the same class, yielding a piecewise i.i.d. structure in the embedding space. Although this does not correspond to changes in temporal dynamics, it captures distributional shifts that are central in applications such as news and media monitoring, online product reviews, and customer support systems, where changepoints correspond to topic or sentiment changes in text streams \citep{jia2026unsupervised, wang2018real}.
	\end{datasetitems}
\end{datasetbox}

\begin{figure}[t]
	\centering
	\includegraphics[width=\linewidth]{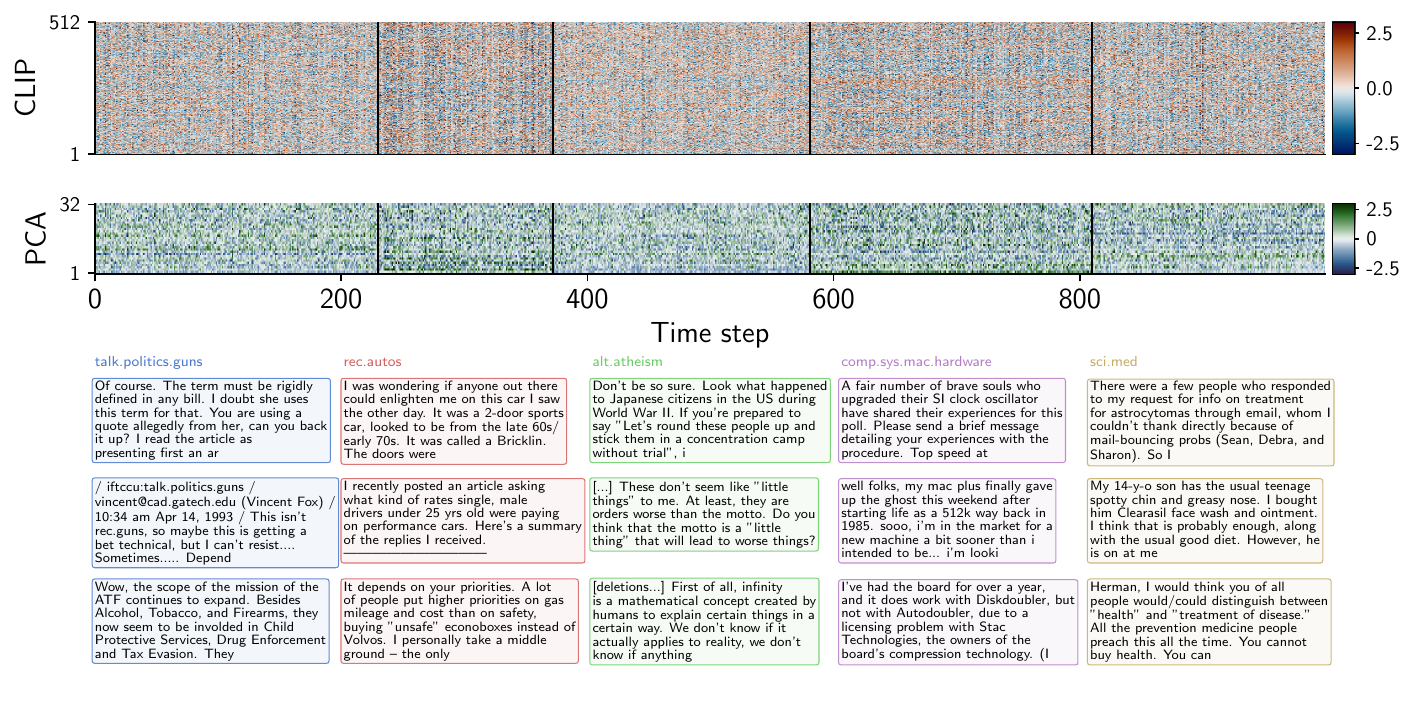}
	\caption{\small Sample time series from the 20 Newsgroups benchmark dataset. \emph{Top:} Heatmap of the raw $512$-dimensional CLIP ViT-B/32 embeddings $\varphi(D_t) \in \R^{512}$ across $T = 1{,}000$ time steps. \emph{Middle:} Heatmap of the corresponding PCA-projected embeddings $\varphi_{\mathrm{pca}}(D_t) \in \R^{32}$, obtained by standardising the CLIP embeddings and projecting onto the leading $32$ principal components estimated from the held-out test split. \emph{Bottom:} Representative image grids for each of the $K+1 = 5$ segments, corresponding to the fine classes \texttt{guns}, \texttt{autos}, \texttt{atheism}, \texttt{hardware}, and \texttt{med}, drawn sequentially from five distinct 20 Newsgroups coarse categories. Vertical black lines mark the four changepoint locations $\tau_1, \ldots, \tau_4$, placed at approximately equal spacing with $\pm 30$-step uniform jitter. Distribution shifts between segments are clearly visible in both embedding spaces, particularly in the PCA projection where the per-segment colour profile changes abruptly at each changepoint.}
	\label{fig:newsgroups-sample}
\end{figure}

\paragraph{\phantom{}}

\begin{datasetbox}[UCF-Crime]
	\begin{datasetitems}
		\item \textbf{Source.}
		The UCF-Crime data set \citep{sultani2018real} consists of real-world surveillance videos with frame-level temporal annotations of anomalous events. We focus on the \emph{Explosion} category, where annotations specify the start and end of each explosion interval. Because a video can contain up to two explosion events, each sample may contain up to four annotated changepoints.
		
		\item \textbf{Embedding.}
		We represent each video as $V = \begin{bmatrix} F_1 & \cdots & F_T \end{bmatrix} \in \R^{H \times W \times C \times T}$, with height $H=240$, width $W=320$, channels $C=3$, and $T$ denotes the number of frames. Frames are decoded at the native frame rate of 30 frames per second, without downsampling, and processed independently with CLIP ViT-B/32\footnote{Available on Hugging Face via \href{https://huggingface.co/openai/clip-vit-base-patch32}{\texttt{openai/clip-vit-base-patch32}}.}, a vision-language model trained with a contrastive objective \citep{radford2021learning}. For each frame $F_t \in \R^{H \times W \times C}$, we compute the $\ell^2$-normalised image embedding
		\begin{equation*}
			\varphi(F_t) = \frac{f_{\mathrm{image}}(F_t)}{\norm{f_{\mathrm{image}}(F_t)}} \in \R^{512}.
		\end{equation*}
		The resulting video embedding is the temporally ordered sequence
		\begin{equation*}
			\varphi(V)= \begin{bmatrix}
				\varphi(F_1) & \cdots & \varphi(F_T)
			\end{bmatrix} \in \R^{512 \times T}.
		\end{equation*}
		This representation uses no explicit video model; temporal structure is left entirely to the downstream changepoint detection algorithm.
		
		\item \textbf{PCA projection.}
		Because UCF-Crime does not provide a designated split for fitting PCA independently, we instead transfer the projection learned in the CIFAR-100 experiment. For each frame embedding $\varphi(F_t) \in \R^{512}$, we compute
		\begin{equation*}
			\varphi_{\mathrm{pca}}(F_t) = \transp{W_{\mathrm{image}}} \frac{\varphi(F_t)-\mu}{\sigma} \in \R^{32},
		\end{equation*}
		where $\mu \in \R^{512}$, $\sigma \in \R^{512}$, and $W_{\mathrm{image}} \in \R^{512 \times 32}$ are the feature-wise mean, standard deviation, and leading principal directions estimated from the CIFAR-100 test embeddings. The resulting PCA-reduced video representation is
		\begin{equation*}
			\varphi_{\mathrm{pca}}(V) = \begin{bmatrix} \varphi_{\mathrm{pca}}(F_1) & \cdots & \varphi_{\mathrm{pca}}(F_T) \end{bmatrix} \in \R^{32 \times T}.
		\end{equation*}
		Since both data sets are encoded with the same CLIP ViT-B/32 model, this transfer is well-defined in the shared $512$-dimensional embedding space. From the UCF-Crime perspective, it defines a fixed linear map that is independent of UCF-Crime data. While not optimal under distribution shift, we find it preferable to a random projection since it remains aligned with the geometry of CLIP-based image features.
		
		\item \textbf{Ground truth.}
		For each available video, ground-truth changepoints are extracted from the annotated frame indices. We exclude videos whose first annotated changepoint occurs within the first $100$ frames, since they do not provide sufficient pre-change context. After preprocessing, this results in a collection of $N=18$ time series. A sample time series from the created data set is shown in Figure~\ref{fig:ucf-sample}. We compute accuracy metrics using $\Delta_\ell = 100$ and $\Delta_r = 100$, corresponding to $\approx$ 1 second of data each (Appendix~\ref{app:experiments-metrics-accuracy})
	\end{datasetitems}
\end{datasetbox}

\begin{figure}[t]
	\centering
	\includegraphics[width=\linewidth]{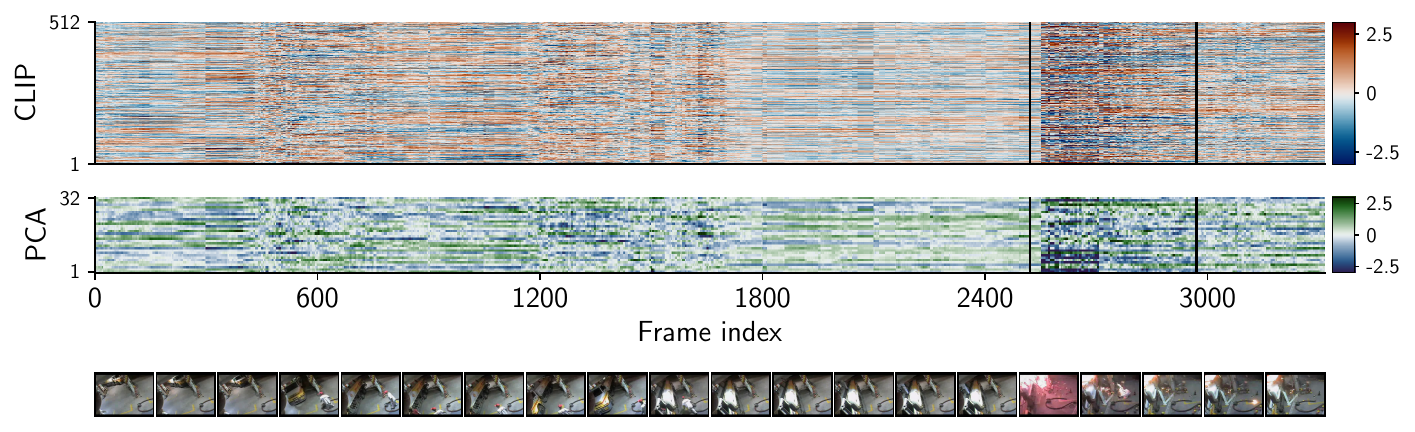}
	\caption{\small Sample time series from the UCF-Crime benchmark dataset (Explosion category). \emph{Top:} Heatmap of the raw $512$-dimensional CLIP ViT-B/32 frame embeddings $\varphi(F_t) \in \R^{512}$ across $T \approx 3{,}200$ frames, decoded at 30\,fps. \emph{Middle:} Heatmap of the corresponding PCA-projected embeddings $\varphi_{\mathrm{pca}}(F_t) \in \R^{32}$, obtained by applying the fixed linear projection transferred from the CIFAR-100 experiment. \emph{Bottom:} A strip of representative frames sampled uniformly across the video, illustrating the visual transition from a normal scene to an explosion event and its aftermath. Vertical black lines mark the annotated changepoint locations delimiting the explosion interval. Distributional shifts here arise from genuine changes in scene dynamics and visual content within a continuous video stream, and are clearly visible as abrupt colour discontinuities in both embedding spaces near the annotated changepoints.}
	\label{fig:ucf-sample}
\end{figure}

\paragraph{\phantom{}}

\begin{datasetbox}[WikiSection]
	\begin{datasetitems}
		\item \textbf{Source.}
		WikiSection \citep{arnold2019sector} provides Wikipedia articles annotated with section boundaries, where each section transition defines a topic changepoint. We use the \emph{English city} subset from the test split.
		
		\item \textbf{Filtering.}
		We retain only articles satisfying the following criteria: the number of sections lies in $[4,8]$, in order to control the number of changepoints per sequence; the total number of tokens lies in $[1000,4000]$, in order to control sequence length; no section label is repeated within the same article; and the first section boundary occurs after at least $200$ tokens ($\approx$ 150 words), to ensure sufficient pre-change context.
		
		\item \textbf{Embedding.}
		Let $x_{1:T}$ denote the tokenised document, where $T$ is the number of tokens. Each document is encoded with Qwen3-Embedding-4B\footnote{Available via \href{https://huggingface.co/Qwen/Qwen3-Embedding-4B}{\texttt{Qwen/Qwen3-Embedding-4B}}.}, a causal language model with embedding dimension $2560$ \citep{zhang2025qwen3}. The model is applied in a single causal forward pass to the full token sequence, without special tokens. For each token position $t \in \{1,\dots,T\}$, we extract the last hidden state and define the $\ell^2$-normalised token embedding
		\begin{equation*}
			h_t = \operatorname{LastHiddenState}(x_{1:t}) \in \R^{2560}, \qquad \varphi(x_t) = \frac{h_t}{\norm{h_t}} \in \R^{2560}.
		\end{equation*}
		Stacking these embeddings in sequential order yields the document representation
		\begin{equation*}
			\varphi(x_{1:T}) =
			\begin{bmatrix}
				\varphi(x_1) & \cdots & \varphi(x_T)
			\end{bmatrix} \in \R^{2560 \times T}.
		\end{equation*}
		
		\item \textbf{MRL truncation.}
		Qwen3-Embedding-4B is trained with the Matryoshka Representation Learning objective \citep{kusupati2022matryoshka}, which makes the leading coordinates of the embedding space informative at multiple target dimensions. We therefore truncate each token embedding to its first $\dimension=64$ coordinates and re-normalise:
		\begin{equation*}
			\varphi_{\mathrm{mrl}}(x_t) = \frac{\varphi(x_t)_{1:\dimension}}{\norm{\varphi(x_t)_{1:\dimension}}} \in \R^{64}.
		\end{equation*}
		The resulting truncated document representation is
		\begin{equation*}
			\varphi_{\mathrm{mrl}}(x_{1:T}) = \begin{bmatrix}
				\varphi_{\mathrm{mrl}}(x_1) & \cdots & \varphi_{\mathrm{mrl}}(x_T)
			\end{bmatrix} \in \R^{64 \times T}.
		\end{equation*}
		
		\item \textbf{Ground truth.}
		For each article, ground-truth changepoints are defined by the annotated section boundaries, at the token level. From the filtered candidate pool, we sample $N=50$ articles uniformly at random without replacement, using a fixed random seed for reproducibility. A sample time series from the created data set is shown in Figure~\ref{fig:wikisection-sample}. We compute accuracy metrics using $\Delta_\ell = 50$ and $\Delta_r = 50$ tokens, corresponding to $\approx$ 37 words (Appendix~\ref{app:experiments-metrics-accuracy}).
		
		\item \textbf{Novelty.}
		Text and topic segmentation are predominantly studied in an offline setting. For example, \citep{jia2026unsupervised} assume access to the complete document and perform batch inference to identify segment boundaries at the sentence level. In contrast, online changepoint detection methods for text data such as \citep{wang2018real} are comparatively less explored, and typically operate at the coarser document level, as in the 20 Newsgroups experiment. Online changepoint detection at finer granularities, such as the word or token level, remains largely unexplored. To the best of our knowledge, there is limited work on causal, real-time detection of topic boundaries directly from token streams. We argue that this formulation has several advantages: it enables fine-grained, online detection of changes directly at the token level, while remaining computationally efficient and thus allow applications such as authorship change detection, which can arguably be more challenging than document-level segmentation. In this experiment, we rely on autoregressive large language models, which generate tokens sequentially. We interpret the resulting embedding sequence as a dynamical system and detect changes in the effective dynamics of the token-generation process induced by the model’s hidden representations.
	\end{datasetitems}
\end{datasetbox}

\begin{figure}[t]
	\centering
	\includegraphics[width=\linewidth]{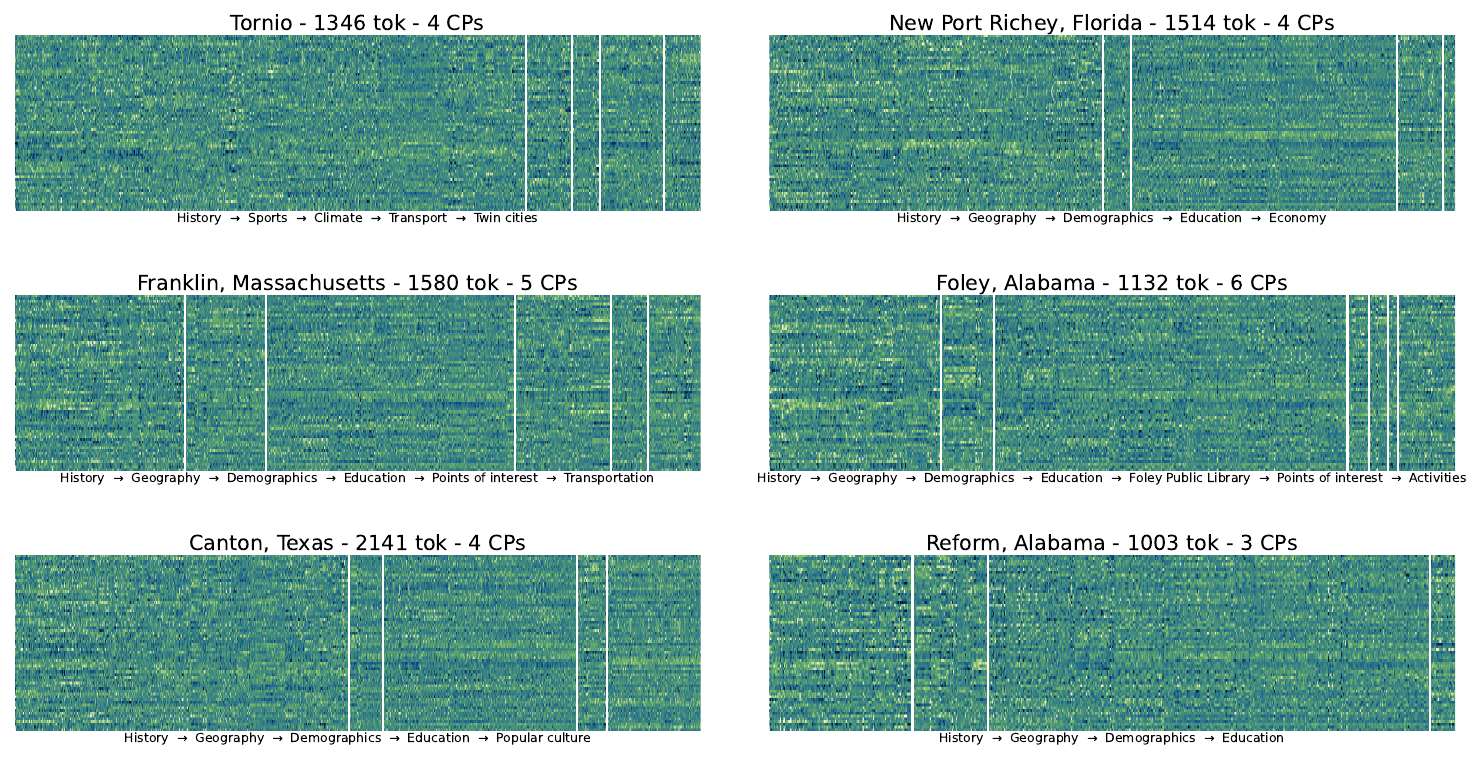}
	\caption{\small Six samples from the WikiSection benchmark dataset (English city subset). Each panel displays the heatmap of the $64$-dimensional MRL-truncated token embeddings $\varphi_{\mathrm{mrl}}(x_t) \in \R^{64}$ for a single Wikipedia article, with the article title, total token count, and number of changepoints indicated above, and the sequence of section labels shown below. Vertical white lines mark the ground-truth changepoint locations, defined by annotated section boundaries at the token level. Embeddings are produced by a single causal forward pass of Qwen3-Embedding-4B, so each $\varphi_{\mathrm{mrl}}(x_t)$ summarises the left context $x_{1:t}$ up to and including token $t$. Subtle shifts in the embedding colour profile are visible across section boundaries.}
	\label{fig:wikisection-sample}
\end{figure}

\paragraph{Dimensionality reduction.}
For data sets whose ambient dimension exceeds a practical threshold, we reduce to $\dimension$ principal components via PCA prior to applying all methods. As discussed in Appendix~\ref{app:experiments-synthetic-complexity}, competing approaches can rely on matrix inversions or SVD, making high-dimensional inputs such as flattened image pixels or raw CLIP embeddings computationally prohibitive; this work does not specifically target high-dimensional regimes. Reducing dimension is standard practice in changepoint detection applied to high-dimensional data; for instance,  \citep{ryan2023detecting} apply pixel subset selection when processing image streams. Among the alternatives (subset selection, random projection, and offline fine-tuning of a smaller projection head) PCA requires no offline training, is computationally lightweight, and empirically generalises well across modalities. Crucially, the same projected sequence is fed to all competing methods, preserving a fair comparison.

\subsection{Additional results}
\label{app:experiments-results}

\subsubsection{Synthetic bivariate time series}

Figures~\ref{fig:additional-gaussian}, \ref{fig:additional-laplace}, \ref{fig:additional-student}, and \ref{fig:additional-huber} report accuracy and speed metrics across all parameter configurations for four noise distributions. CHASM ($\rho = 1$ and $\rho < 1$) consistently dominates competing methods on precision and $\fone$-score across all settings, with relatively concentrated distributions indicating robustness to parameter choice. Jointly, CHASM achieves among the smallest $\arlone$ and the largest $\arlzero$, reflecting the best overall trade-off between detection speed and false-alarm control.

\begin{figure}[p]
	\centering
	\includegraphics[width=0.8\linewidth]{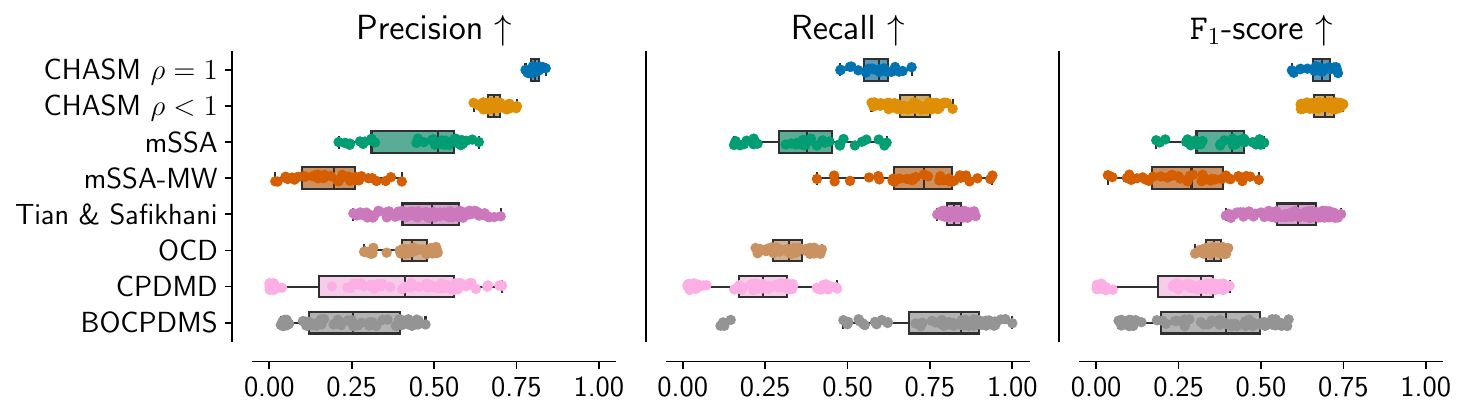}
	\includegraphics[width=0.6\linewidth]{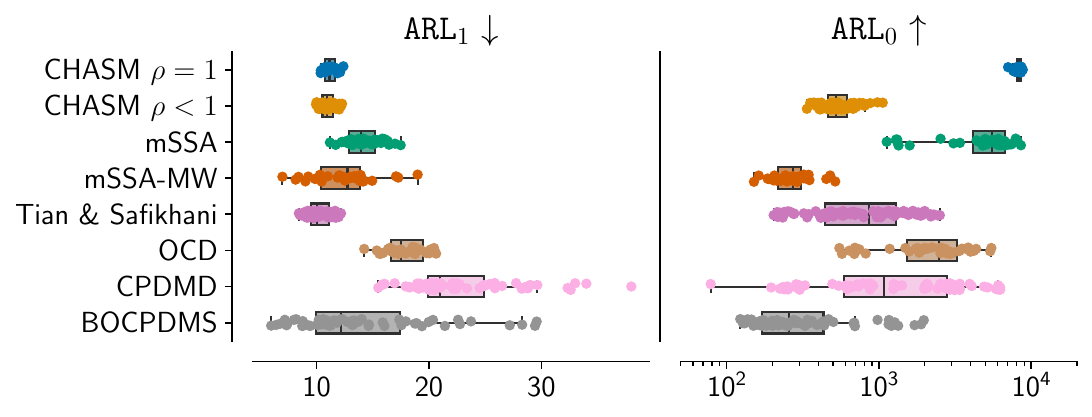}
	\caption{\small Accuracy (top) and speed (bottom) metrics (see Appendix~\ref{app:experiments-metrics}) across all parameter configurations on the synthetic bivariate $\varprocess{2}{1}$ data set with Gaussian noise (see Appendix~\ref{app:experiments-synthetic-datasets}). Each dot corresponds to a single parameter set evaluated over the grid (see Table~\ref{tab:parameters}).}
	\label{fig:additional-gaussian}
\end{figure}

\begin{figure}[p]
	\centering
	\includegraphics[width=0.8\linewidth]{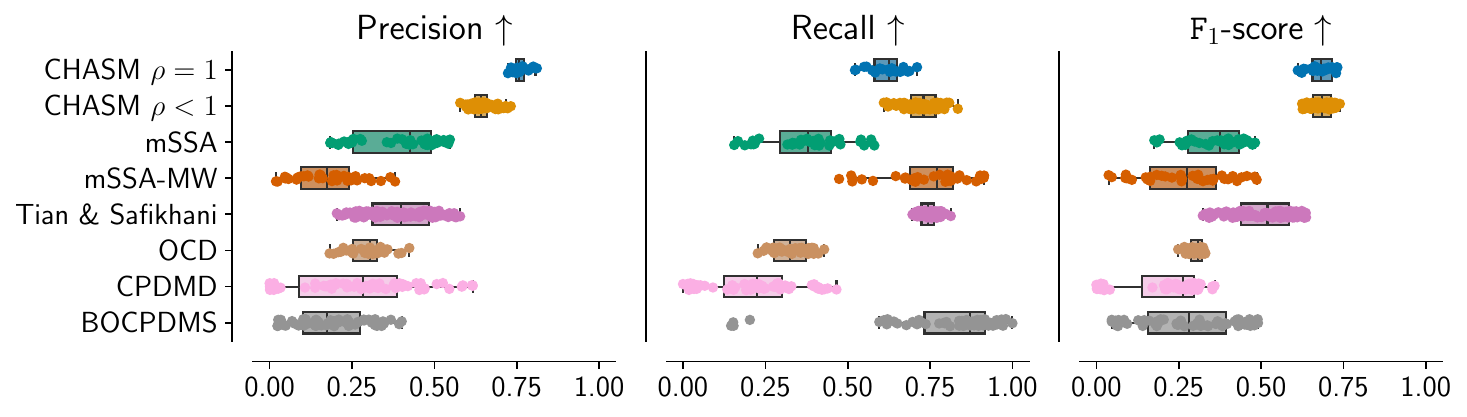}
	\includegraphics[width=0.6\linewidth]{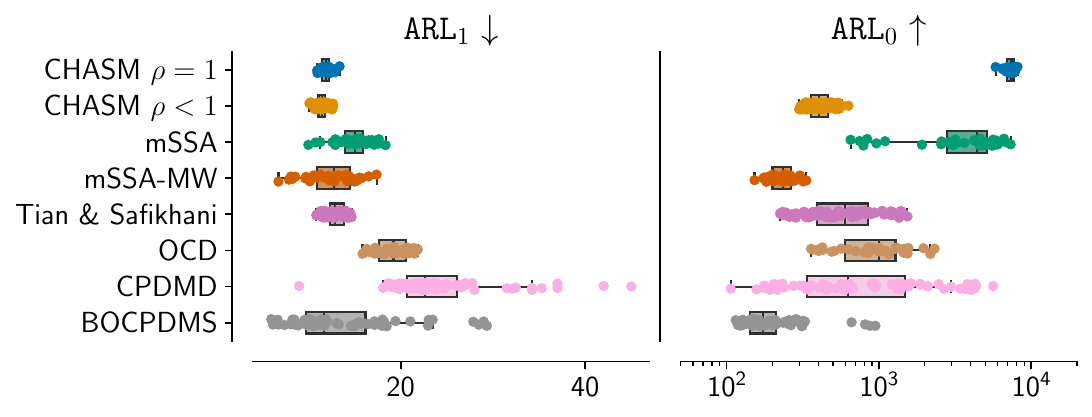}
	\caption{\small Accuracy (top) and speed (bottom) metrics (see Appendix~\ref{app:experiments-metrics}) across all parameter configurations on the synthetic bivariate $\varprocess{2}{1}$ data set with Laplace noise (see Appendix~\ref{app:experiments-synthetic-datasets}). Each dot corresponds to a single parameter set evaluated over the grid (see Table~\ref{tab:parameters}).}
	\label{fig:additional-laplace}
\end{figure}

\begin{figure}[p]
	\centering
	\includegraphics[width=0.8\linewidth]{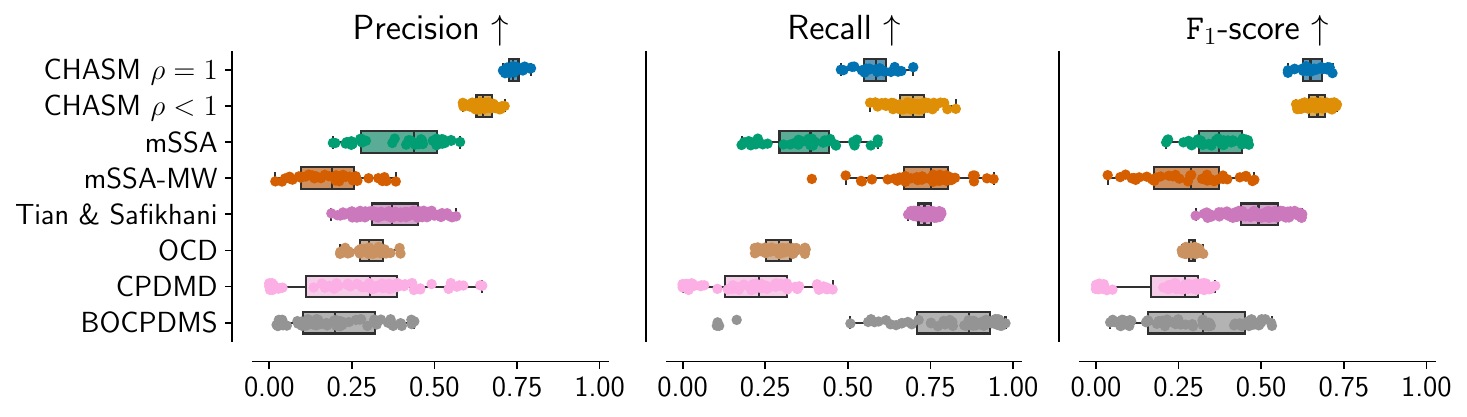}
	\includegraphics[width=0.6\linewidth]{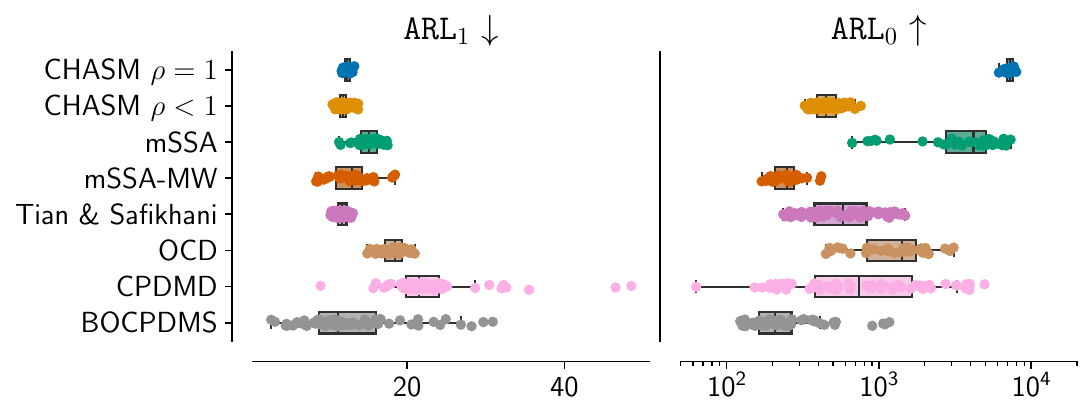}
	\caption{\small Accuracy (top) and speed (bottom) metrics (see Appendix~\ref{app:experiments-metrics}) across all parameter configurations on the synthetic bivariate $\varprocess{2}{1}$ data set with Student's $t_\nu$ noise (see Appendix~\ref{app:experiments-synthetic-datasets}). Each dot corresponds to a single parameter set evaluated over the grid (see Table~\ref{tab:parameters}).}
	\label{fig:additional-student}
\end{figure}

\begin{figure}[p]
	\centering
	\includegraphics[width=0.8\linewidth]{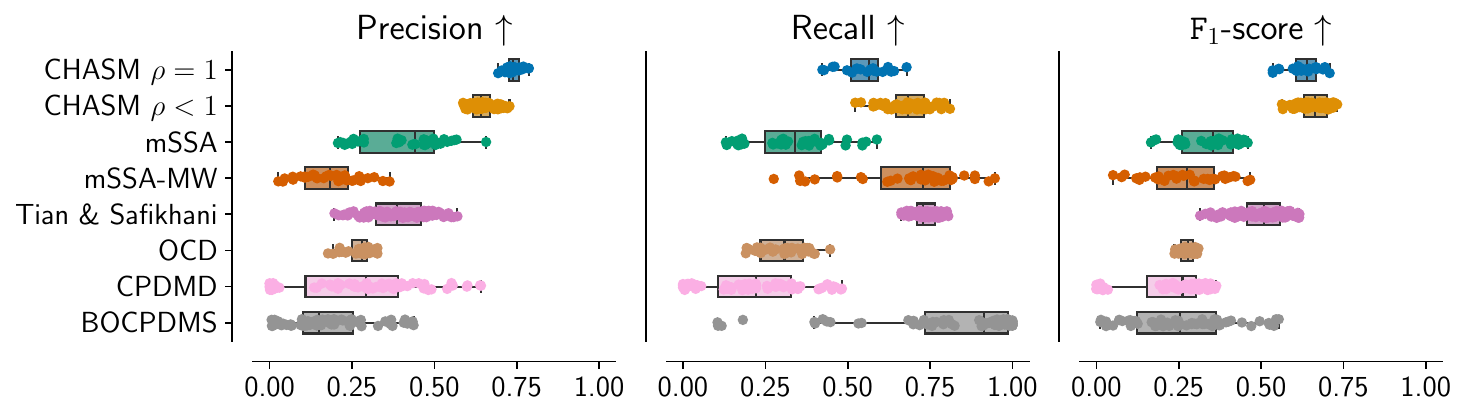}
	\includegraphics[width=0.6\linewidth]{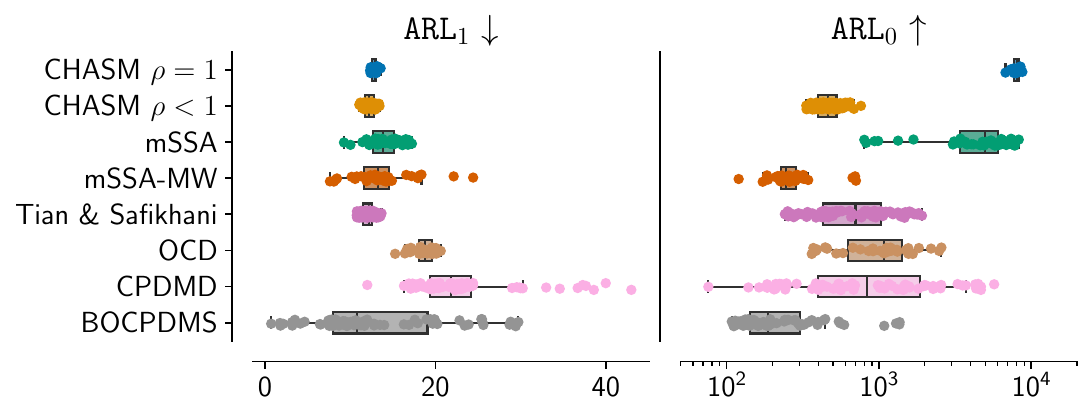}
	\caption{\small Accuracy (top) and speed (bottom) metrics (see Appendix~\ref{app:experiments-metrics}) across all parameter configurations on the synthetic bivariate $\varprocess{2}{1}$ data set with Huber-$\epsilon$ contaminated noise (see Appendix~\ref{app:experiments-synthetic-datasets}). Each dot corresponds to a single parameter set evaluated over the grid (see Table~\ref{tab:parameters}).}
	\label{fig:additional-huber}
\end{figure}

\subsubsection{Huber-$\epsilon$ contamination noise}

Figure~\ref{fig:additional-huber-eps} illustrates CHASM's overall superior performance overal all contamination levels $\epsilon$, especially for $\rho=1$, as it achieves the largest $\fone$-score and $\arlzero$ along with the lowest $\arlone$ values.

\begin{figure}[p]
	\centering
	\includegraphics[width=0.9\linewidth]{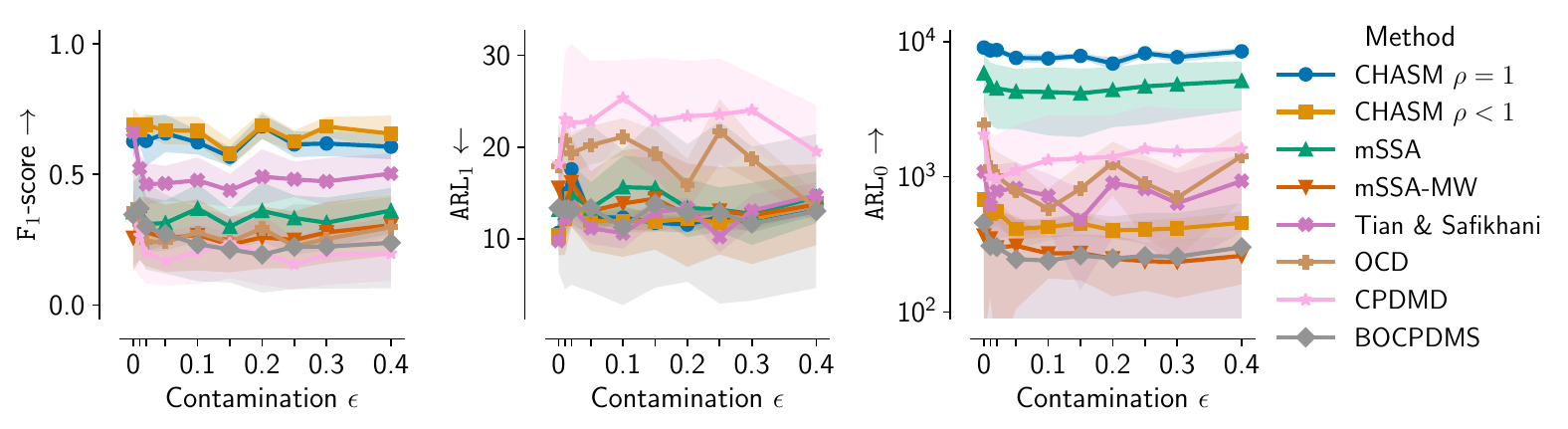}
	\caption{\small $\fone$, $\arlone$, and $\arlzero$ vs. contamination $\epsilon$ for the $\varprocess{2}{1}$ data set with Huber-$\epsilon$ contaminated noise. Shaded bands show mean $\pm$ std over the parameter grid.}
	\label{fig:additional-huber-eps}
\end{figure}

\subsubsection{Heavy tailed noise distribution}

Figure~\ref{fig:additional-student-df} reports accuracy and speed metrics as the degrees of freedom $\nu$ of the Student's $t_\nu$ noise distribution increases. As $\zeta \to \nu$, the noise approaches Gaussian, and performance of all methods tends to improve. CHASM maintains a superior joint trade-off between detection accuracy and speed throughout the entire range, with the advantage most pronounced at low $\nu$ where heavy tails are most severe. We note that $\nu < 8$ falls outside the finite eighth-moment assumption of Section~\ref{sec:theory}, yet CHASM remains competitive empirically in this regime.

\begin{figure}[p]
	\centering
	\includegraphics[width=0.9\linewidth]{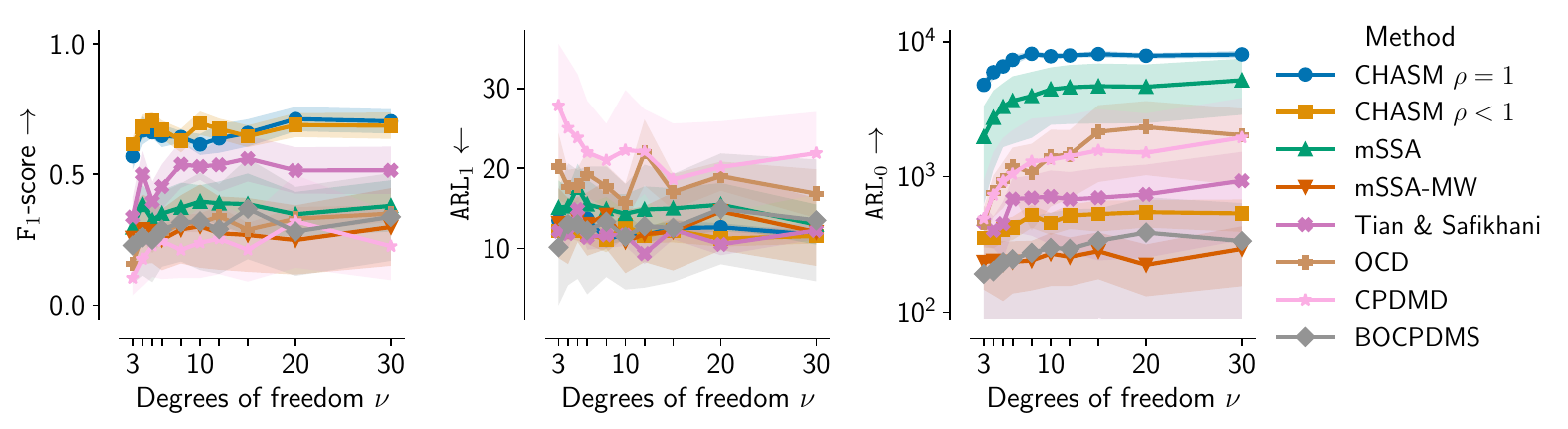}
	\caption{\small $\fone$, $\arlone$, and $\arlzero$ vs. degrees of freedom $\nu$ for the $\varprocess{2}{1}$ data set with Student's $t_\nu$ noise. Shaded bands show mean $\pm$ std over the parameter grid.}
	\label{fig:additional-student-df}
\end{figure}

\subsubsection{Synthetic high-dimensional time series}

Figures~\ref{fig:additional-sparse} and~\ref{fig:additional-fullrank} report accuracy and speed metrics across all parameter configurations for sparse and full-rank dynamics respectively. Under sparse dynamics, which correspond to the optimal setting for Tian \& Safikhani, CHASM achieves competitive performance, confirming that no accuracy is sacrificed relative to a method specifically designed for this regime. Under full-rank dynamics, where the low-rank assumption of Tian \& Safikhani is violated, their method degrades substantially, whereas CHASM maintains strong performance and dominates all competing methods across both accuracy and speed metrics.

\begin{figure}[p]
	\centering
	\includegraphics[width=0.8\linewidth]{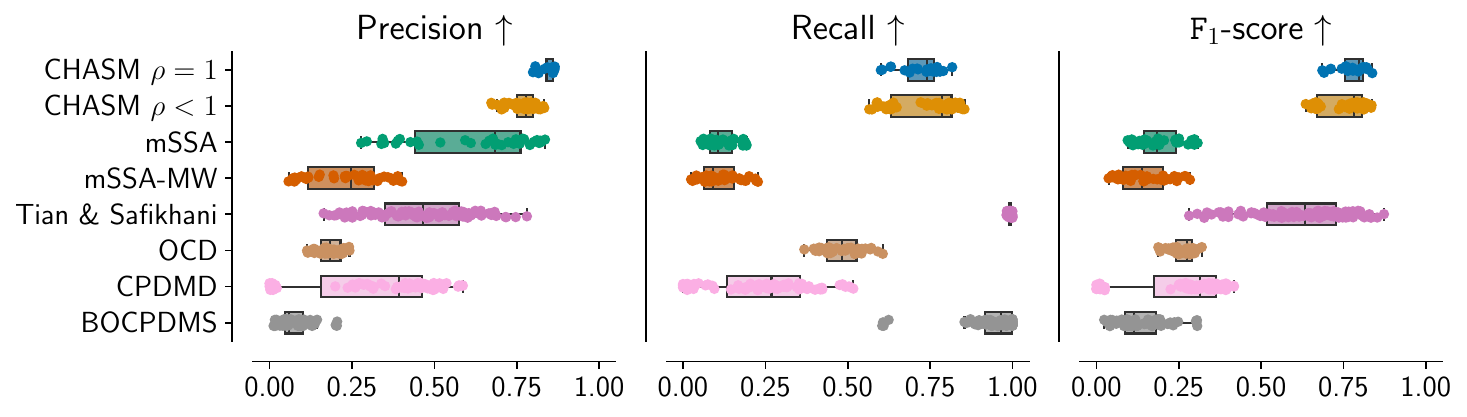}
	\includegraphics[width=0.6\linewidth]{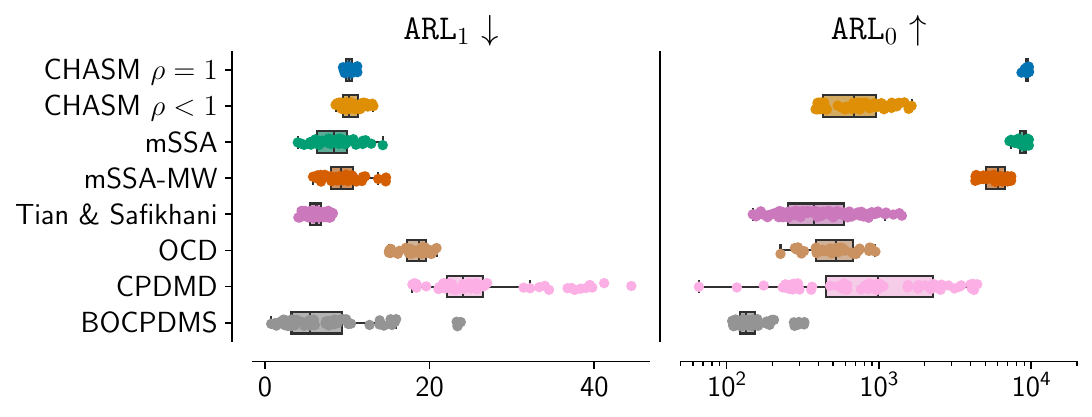}
	\caption{\small Accuracy (top) and speed (bottom) metrics (see Appendix~\ref{app:experiments-metrics}) across all parameter configurations on the synthetic bivariate $\varprocess{\dimension}{1}$ data set with sparse dynamics, $\dimension \in \{ 2, \dots, 40\}$ (see Appendix~\ref{app:experiments-synthetic-datasets}). Each dot corresponds to a single parameter set evaluated over the grid (see Table~\ref{tab:parameters}).}
	\label{fig:additional-sparse}
\end{figure}

\begin{figure}[t]
	\centering
	\includegraphics[width=0.8\linewidth]{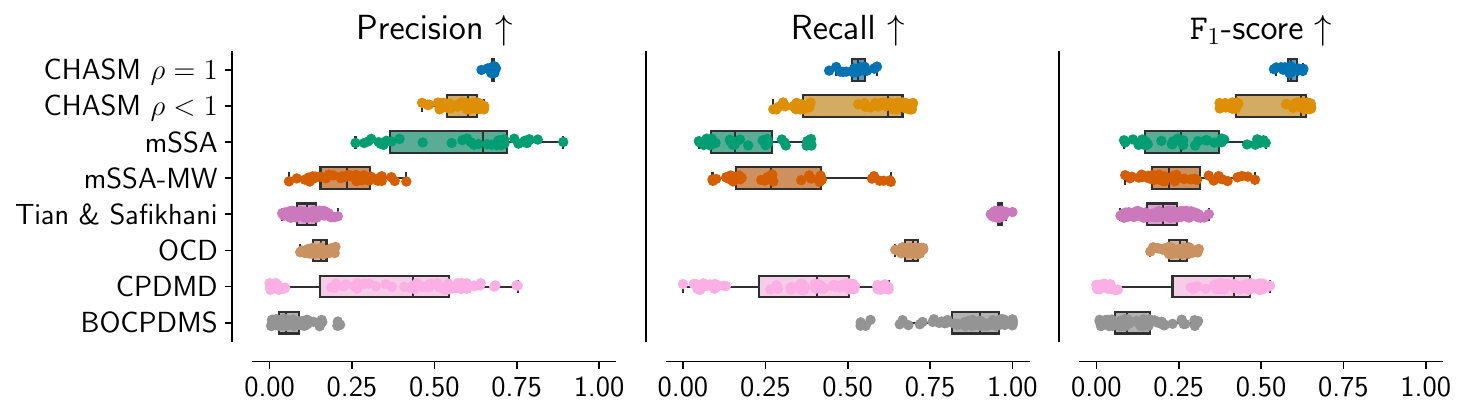}
	\includegraphics[width=0.6\linewidth]{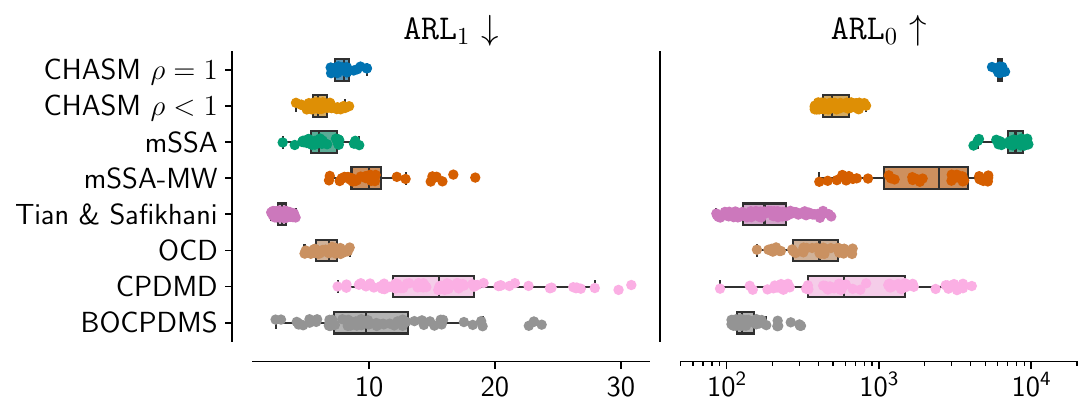}
	\caption{\small Accuracy (top) and speed (bottom) metrics (see Appendix~\ref{app:experiments-metrics}) across all parameter configurations on the synthetic bivariate $\varprocess{\dimension}{1}$ data set with full-rank dynamics, $\dimension \in \{ 2, \dots, 40\}$ (see Appendix~\ref{app:experiments-synthetic-datasets}). Each dot corresponds to a single parameter set evaluated over the grid (see Table~\ref{tab:parameters}).}
	\label{fig:additional-fullrank}
\end{figure}

\subsubsection{Performance with dimension}

Figure~\ref{fig:additional-sparse-dim} reports performance across all dimensions $d \in \{2, \dots, 40\}$ under sparse dynamics, complementing Figure~\ref{fig:dimension} in the main paper by including \textsc{mSSA}, \textsc{mSSA-MW}, and \textsc{Bocpdms} in the $\mathrm{ARL}_1$ panel. Figure~\ref{fig:additional-fullrank-dim} shows the analogous results under full-rank dynamics, where Tian \& Safikhani degrades sharply beyond $\dimension > 15$, while CHASM remains robust across the full dimensional range.

\begin{figure}[p]
	\centering
	\includegraphics[width=0.9\linewidth]{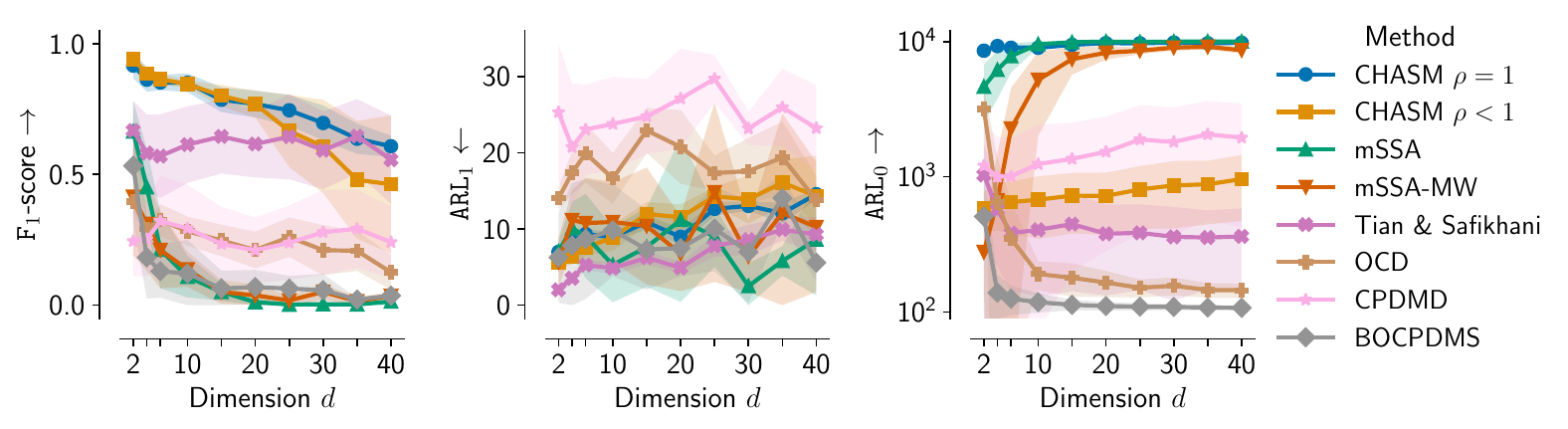}
	\caption{\small $\fone$, $\arlone$, and $\arlzero$ vs. dimension $\dimension$ for the sparse $\varprocess{\dimension}{1}$ data set. Shaded bands show mean $\pm$ std over the parameter grid.}
	\label{fig:additional-sparse-dim}
\end{figure}

\begin{figure}[p]
	\centering
	\includegraphics[width=0.9\linewidth]{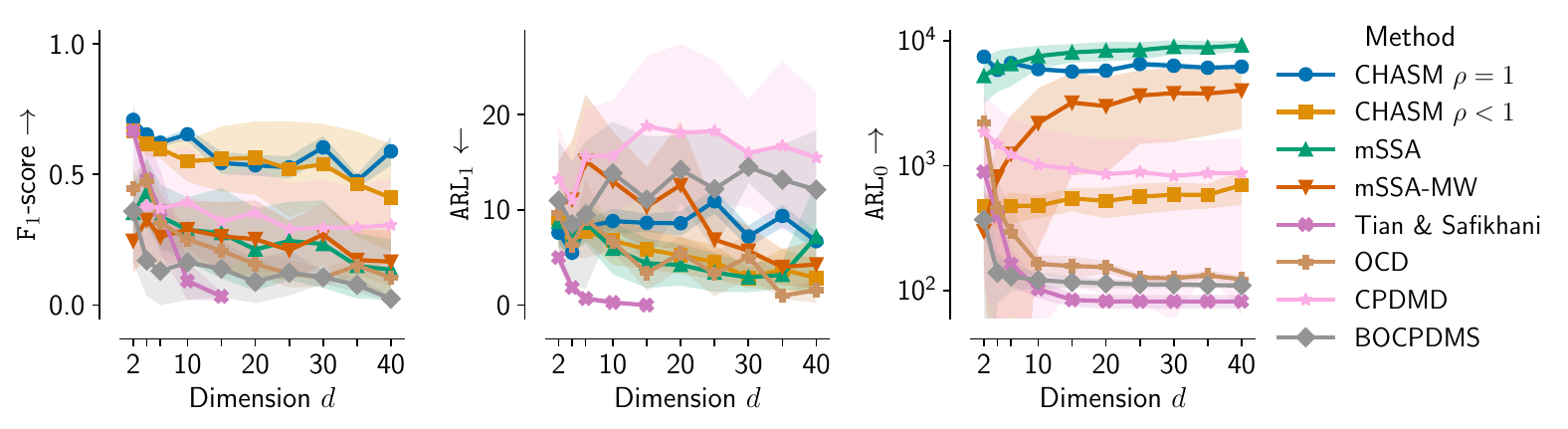}
	\caption{\small $\fone$, $\arlone$, and $\arlzero$ vs. dimension $\dimension$ for the full-rank $\varprocess{\dimension}{1}$ data set. Shaded bands show mean $\pm$ std over the parameter grid.}
	\label{fig:additional-fullrank-dim}
\end{figure}

\subsubsection{Real-world performance per data set}

Figures~\ref{fig:additional-hasc}, \ref{fig:additional-cifar}, \ref{fig:additional-newsgroups}, \ref{fig:additional-ucf} and~\ref{fig:additional-wikisection} report accuracy metrics across all parameter configurations on the HASC, CIFAR-100, 20 Newsgroups, UCF-Crime, and WikiSection real-world data sets respectively. Across these benchmarks, CHASM ($\rho < 1$) demonstrates practical advantages that were less apparent in the synthetic experiments, suggesting that adaptive downweighting of older observations is beneficial when the data-generating process exhibits more complex or gradual dynamics. Both regimes $\rho = 1$ and $\rho < 1$ remain competitive and are worth exploring in practice. Representative detection outputs are shown in Figures~\ref{fig:additional-cifar-det}, \ref{fig:additional-newsgroups-det}, \ref{fig:additional-ucf-det} and~\ref{fig:additional-wikisection-det}.

\begin{figure}[p]
	\centering
	\includegraphics[width=0.8\linewidth]{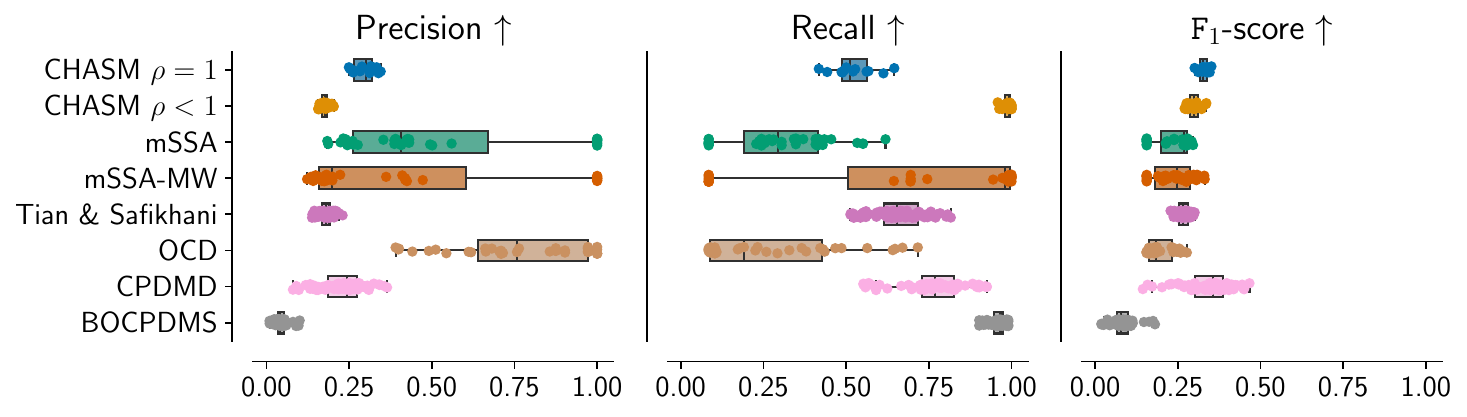}
	\caption{\small Accuracy metrics (see Appendix~\ref{app:experiments-metrics}) across all parameter configurations on the HASC real-world data set. Each dot corresponds to a single parameter set evaluated over the grid (see Table~\ref{tab:parameters}).}
	\label{fig:additional-hasc}
\end{figure}

\begin{figure}[p]
	\centering
	\includegraphics[width=0.8\linewidth]{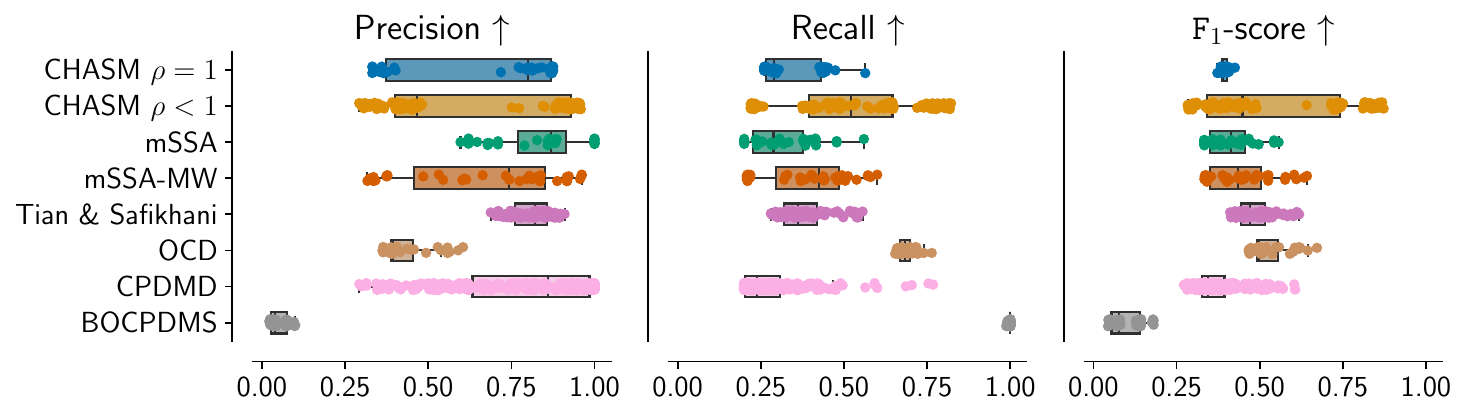}
	\caption{\small Accuracy metrics (see Appendix~\ref{app:experiments-metrics}) across all parameter configurations on the CIFAR-100 real-world data set. Each dot corresponds to a single parameter set evaluated over the grid (see Table~\ref{tab:parameters}).}
	\label{fig:additional-cifar}
\end{figure}

\begin{figure}[p]
	\centering
	\includegraphics[width=0.8\linewidth]{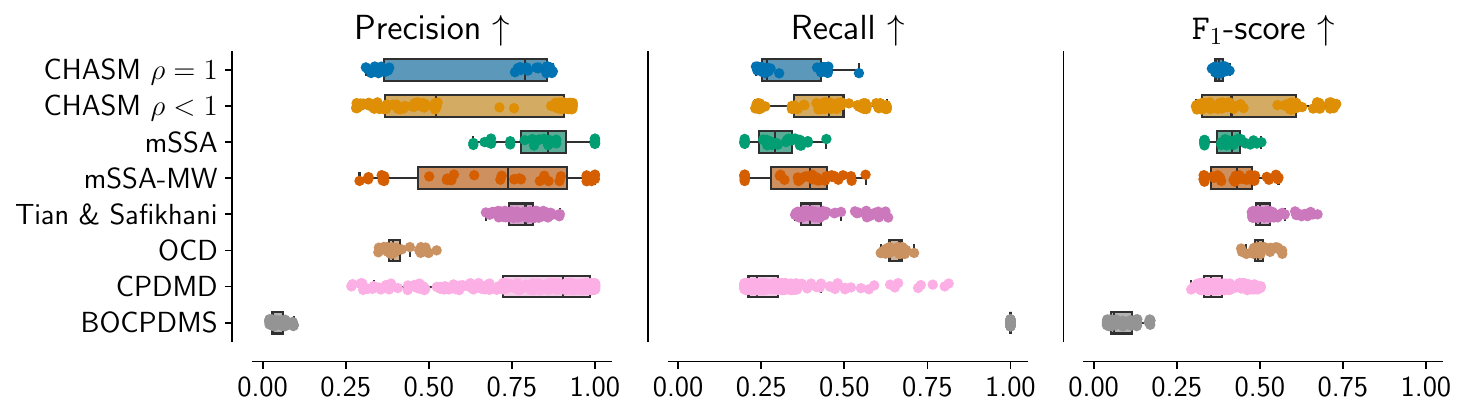}
	\caption{\small Accuracy metrics (see Appendix~\ref{app:experiments-metrics}) across all parameter configurations on the 20 Newsgroups real-world data set. Each dot corresponds to a single parameter set evaluated over the grid (see Table~\ref{tab:parameters}).}
	\label{fig:additional-newsgroups}
\end{figure}

\begin{figure}[p]
	\centering
	\includegraphics[width=0.8\linewidth]{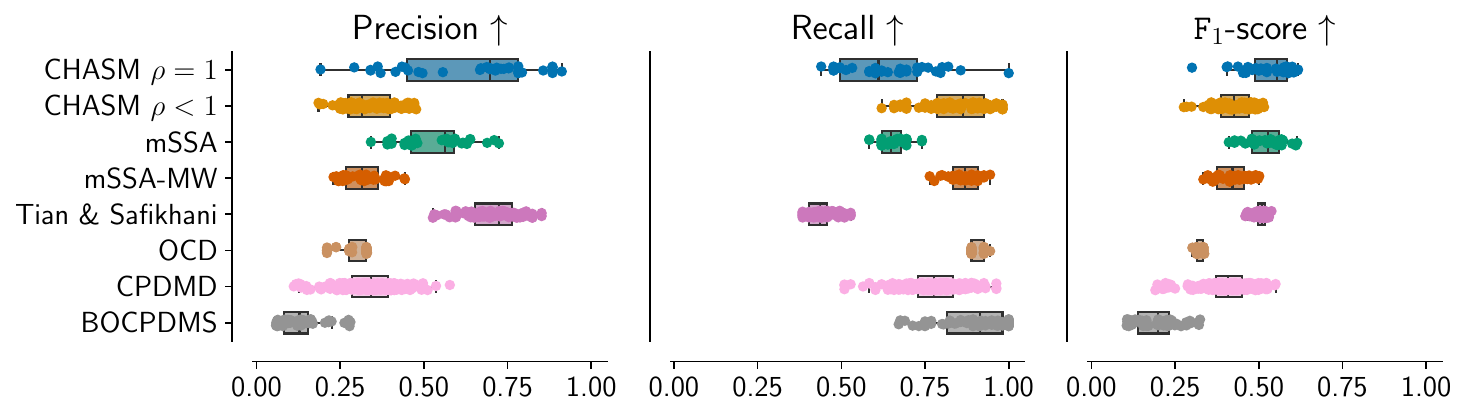}
	\caption{\small Accuracy metrics (see Appendix~\ref{app:experiments-metrics}) across all parameter configurations on the UCF-Crime real-world data set. Each dot corresponds to a single parameter set evaluated over the grid (see Table~\ref{tab:parameters}).}
	\label{fig:additional-ucf}
\end{figure}

\begin{figure}[p]
	\centering
	\includegraphics[width=0.8\linewidth]{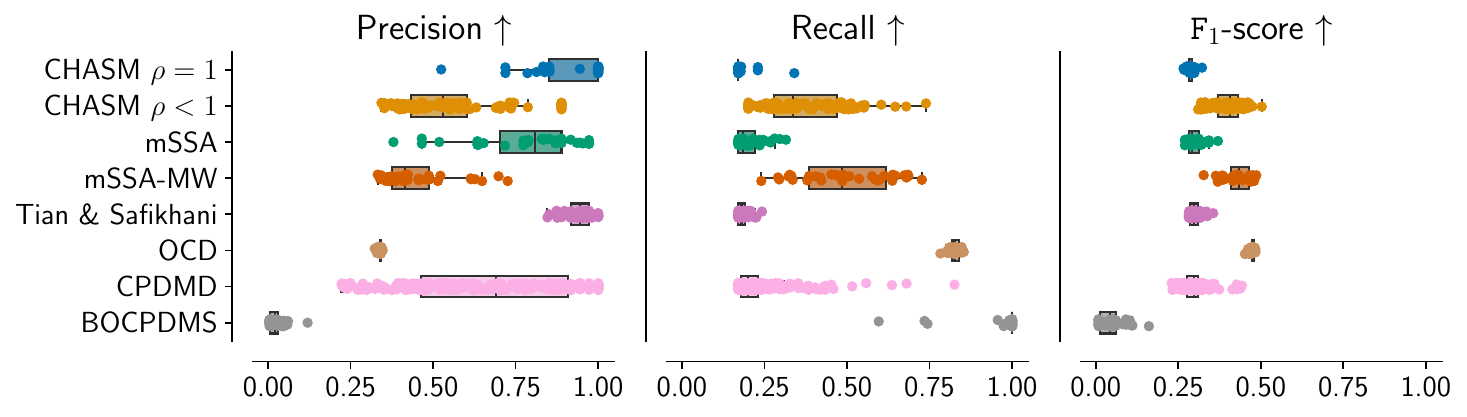}
	\caption{\small Accuracy metrics (see Appendix~\ref{app:experiments-metrics}) across all parameter configurations on the WikiSection real-world data set. Each dot corresponds to a single parameter set evaluated over the grid (see Table~\ref{tab:parameters}).}
	\label{fig:additional-wikisection}
\end{figure}

\begin{figure}[p]
	\centering
	\includegraphics[width=0.8\linewidth]{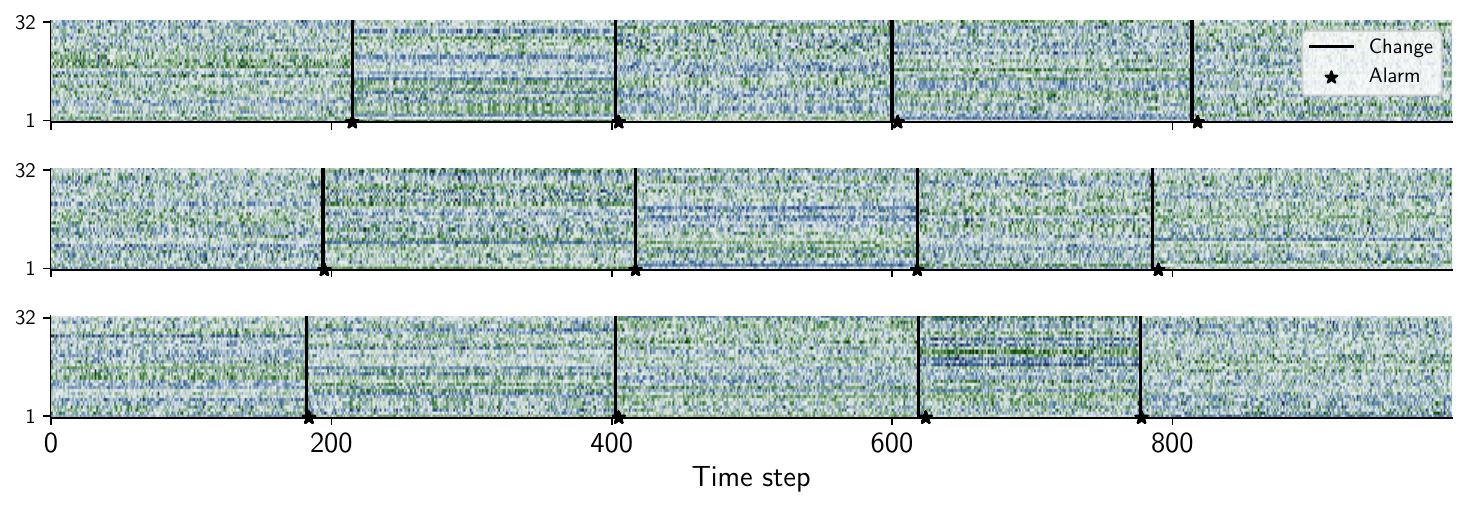}
	\caption{\small Three representative samples from the CIFAR-100 data set, visualised as heatmaps over the $\dimension = 32$ feature dimensions across time. Vertical black lines indicate true changepoints and star markers ($\star$) indicate CHASM alarms with parameters $\rho=0.98$, $\rank=4$, and $(\alpha, h)=(0.35, 28)$.}
	\label{fig:additional-cifar-det}
\end{figure}

\begin{figure}[p]
	\centering
	\includegraphics[width=0.8\linewidth]{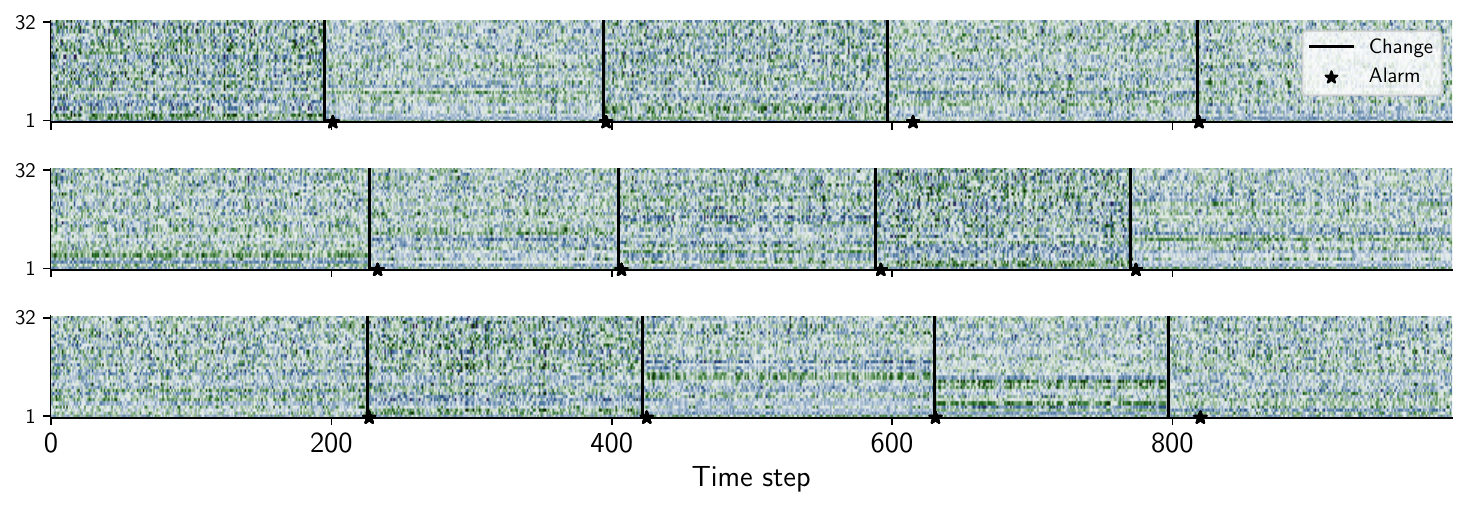}
	\caption{\small Three representative samples from the 20 Newsgroups data set, visualised as heatmaps over the $\dimension = 32$ feature dimensions across time. Vertical black lines indicate true changepoints and star markers ($\star$) indicate CHASM alarms with parameters $\rho=0.98$, $\rank=4$, and $(\alpha, h)=(0.35, 28)$.}
	\label{fig:additional-newsgroups-det}
\end{figure}

\begin{figure}[p]
	\centering
	\includegraphics[width=0.8\linewidth]{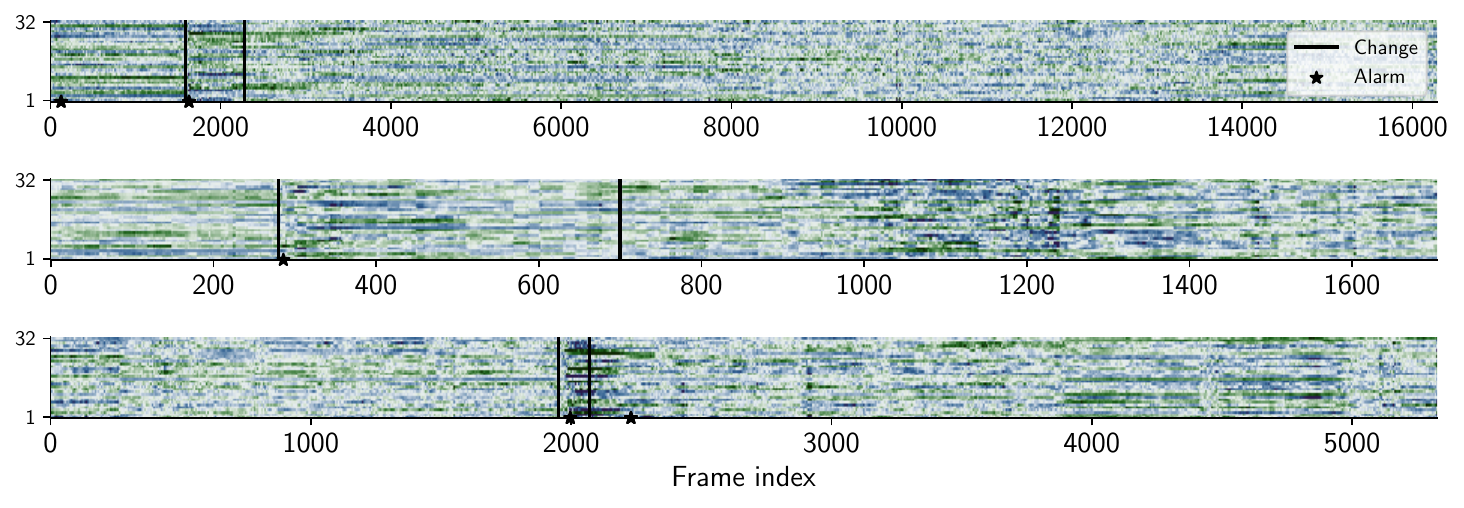}
	\caption{\small Three representative samples from the UCF-Crime data set, visualised as heatmaps over the $\dimension = 32$ feature dimensions across time. Vertical black lines indicate true changepoints and star markers ($\star$) indicate CHASM alarms with parameters $\rho=1$, $\rank=4$, and $(\alpha, h)=(0.18, 15)$. The data exhibits an asymmetry between onset and offset transitions: onset changepoints, corresponding to the beginning of anomalous activity, produce more pronounced shifts and are reliably detected, whereas offset changepoints are subtler and more frequently missed. We also note that, even for long sequences, CHASM is robust to false alarms.}
	\label{fig:additional-ucf-det}
\end{figure}

\begin{figure}[p]
	\centering
	\includegraphics[width=0.8\linewidth]{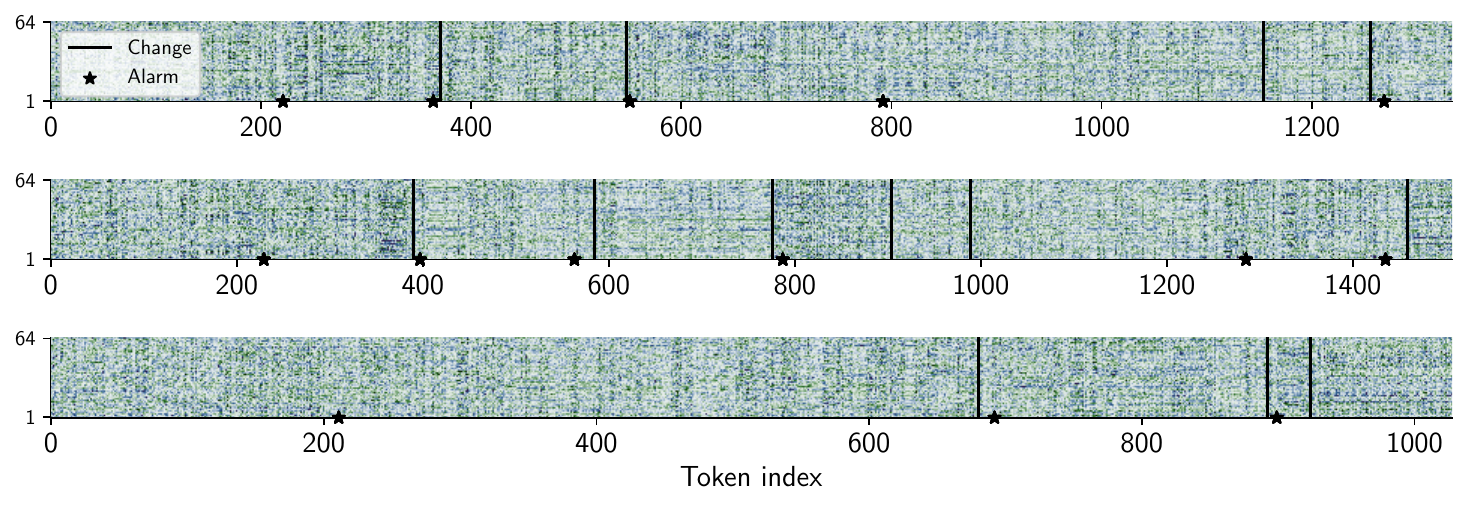}
	\caption{\small Three representative samples from the WikiSection data set, visualised as heatmaps over the $\dimension = 32$ feature dimensions across time. Vertical black lines indicate true changepoints and star markers ($\star$) indicate CHASM alarms with parameters $\rho=0.95$, $\rank=8$, and $(\alpha, h)=(0.25, 25)$. Changes in this data set are subtle, with little visible structure in the heatmaps, yet CHASM successfully detects a subset of changepoints with reasonable detection delay, demonstrating its ability to identify shifts in topic structure from sequential text embeddings.}
	\label{fig:additional-wikisection-det}
\end{figure}

\subsection{Computing resources}
\label{app:experiments-resources}

Experiments were conducted on a high-performance computing cluster comprising 325 compute nodes, each with two AMD EPYC 7742 processors (128 cores, 1\,TB RAM per node). Jobs used up to 100 cores in parallel with up to 200\,GB of memory, varying with the evaluated method; see Appendix~\ref{app:experiments-synthetic-complexity} for per-method time and memory complexity. Most jobs complete within a few hours. The exception is $\arlzero$ simulation for BOCPDMS, which requires on the order of days due to its increased computational cost.

\end{document}